\renewcommand{\vec}[1]{\mathbf{#1}}
\newtheorem{theorem}{Theorem}
\newtheorem{corollary}[theorem]{Corollary}
\newtheorem{definition}[theorem]{Definition}
\newtheorem{lemma}[theorem]{Lemma}
\newtheorem*{lemma*}{Lemma}
\theoremstyle{remark}
\newtheorem{remark}[theorem]{Remark}
\newtheorem*{remark*}{Remark}
\newtheorem*{open*}{Open problem}
\newtheorem{example}{Example}
\newtheorem*{example*}{Example}
\definecolor{ForestGreen}{rgb}{.05,.50,.05}
\definecolor{BrickRed}{rgb}{.80,.26,.33}
\definecolor{cadetgrey}{rgb}{0.45,0.45,0.45}
\newcommand{\C}{\mathcal C}
\newcommand{\R}{\mathbb{R}}
\newcommand{\MWW}{\mathrm{MWW}}
\newcommand{\U}{\mathcal{U}}
\newcommand{\G}{\mathcal{G}}
\renewcommand{\P}{\mathcal{P}}
\DeclareMathOperator*{\argmax}{arg\,max}
\title{Algorithms for Competitive Division of Chores\footnote{The authors are grateful to Nikita Kalinin, Alisa Maricheva, Herve Moulin, Ekaterina Rzhevskaya, Erel Segal-Halevi, Nisarg Shah, Inbal Talgam-Cohen for many useful discussions, and to Lillian Bluestein for the help with proofreading.
	Fedor appreciated the hospitality of the Federmann Center for the Study of Rationality, where this project was started. 
	Fedor's work was supported by the
Linde Institute at Caltech, the National Science Foundation (grant CNS 1518941), the Lady Davis Foundation, and the European Research Council (ERC) under the European Union's Horizon 2020 research and innovation program (grant agreement n$\degree$740435). A part of this work was done while Simina was visiting the Simons Institute for the Theory of Computing.}}
\author{Simina Br\^anzei\footnote{Purdue University, \href{mailto:simina.branzei@gmail.com}{simina.branzei@gmail.com}, \href{http://simina.info}{http://simina.info}} \and Fedor Sandomirskiy\footnote{Princeton University, USA, \href{mailto:fsandomi@princeton.edu}{fsandomi@princeton.edu}, \href{https://www.fedors.info}{{https://www.fedors.info}}}
}
\date{}
\begin{document}

\maketitle
\begin{abstract}
We study the problem of allocating divisible bads (chores) among multiple agents with additive utilities when monetary transfers are not allowed. 
		The competitive rule is known for its remarkable fairness and efficiency properties in the case of goods.
			This rule was extended to chores in prior work by Bogomolnaia, Moulin, Sandomirskiy, and Yanovskaya~\cite{BMSY17}. The rule produces Pareto optimal and envy-free allocations for both goods and chores.
		In the case of goods, the outcome of the competitive rule can be easily computed. Competitive allocations solve the Eisenberg-Gale convex program; hence the outcome is unique and can be approximately {found} by standard gradient methods. An exact algorithm that runs in polynomial time in the number of agents and goods was given by Orlin~\cite{orlin}.
		
		In the case of chores, the competitive rule does not solve any convex optimization problem; instead, competitive allocations correspond to local minima, local maxima, and saddle points of the Nash social welfare on the Pareto frontier of the set of feasible utilities. {The Pareto frontier may contain many such points; consequently, the competitive rule's outcome is no longer unique.}
		
		In this paper, we show that all the outcomes of the competitive rule for chores can be computed in strongly polynomial time if either the number of agents or the number of chores is fixed. The approach is based on a combination of three ideas: all consumption graphs of Pareto optimal allocations can be listed in polynomial time; for a given consumption graph, a {candidate for a competitive {utility profile}} can be constructed via {an} explicit formula; {each candidate can be checked for competitiveness and the allocation can be reconstructed} using a maximum flow computation as in Devanur, Papadimitriou, Saberi, and Vazirani~\cite{devanur2002market}.
		
		Our algorithm immediately gives an approximately-fair allocation of indivisible chores by the rounding technique of Barman and Krishnamurthy~\cite{barman2018}.
\end{abstract}


	\newpage
	{
	\section*{ List of Symbols}
	\paragraph{}
	\renewcommand{\arraystretch}{1.3}

	\begin{tabular}{ll}
		$[n]$ & \textbf{Set of agents:} $[n] = \{1, \ldots, n\}$.  \\ \\ 
		$[m]$ & \textbf{Set of chores:} $[m] = \{1, \ldots, m\}$. \\ \\ 
		$\vec{v}$ & \textbf{Matrix of values:} $v_{i,j}<0$ is the { (negative)} value of agent $i$ for chore $j$. \\ \\ 
		$\vec{z}$ & \begin{tabular}{@{}l@{}l@{}} \textbf{Allocation matrix:} $z_{i,j}\geq 0$ is the fraction  of chore $j$ allocated to agent $i$. \\ An allocation is \emph{feasible} if $\sum_{i=1}^n z_{i,j} =1 $ for each $j \in [m]$. \end{tabular} \\ \\ 
		$\vec{b}$ & \begin{tabular}{@{}l@{}}\textbf{Budget vector:} $b_i<0$ is the budget of agent $i$. \\(Intuition: $|b_i|$ means how much duty  agent $i$ has). \end{tabular} \\ \\ 
		$\vec{u}(\vec{z})$ & \begin{tabular}{@{}l@{}l@{}} \textbf{Utility profile at allocation $\vec{z}$:}\\ $\vec{u}(\vec{z}) = (u_1(\vec{z}_1), \ldots, u_n(\vec{z}_n)),$ where  $u_i(\vec{z}_i) = \sum_{j=1}^{m} v_{i,j}  z_{i,j}$ is agent $i$'s utility\\ for their bundle  $\vec{z}_i = (z_{i,1}, \ldots, z_{i,m})$.\\\end{tabular}\\ \\ 
		$\vec{p}$ & \begin{tabular}{@{}l@{}}\textbf{Price vector:} $p_j<0$ is the price of chore $j$.\\(Intuition: $|p_j|$ is how much is agent paid for doing $j$).\end{tabular}\\ \\ 
		$G_{\vec{z}}$  & \begin{tabular}{@{}l@{}l@{}}\textbf{The consumption graph of allocation $\vec{z}$:} a non-oriented bipartite graph \\ with parts $[n]$ and $[m]$, such that  agent $i \in [n]$ and \\ chore $j \in [m]$  are connected   by  an edge if and only if $z_{i,j}>0$. \end{tabular} \\  \\ 
		$\U(\vec{v})$ & \begin{tabular}{@{}l@{}} \textbf{Set of feasible utility profiles:}\\ $\mathcal{U}(\vec{v})= \{\vec{w} \in\R^n\mid \exists \mbox{ feasible allocation } \vec{z} \mbox{ s.t. }  \vec{w}=\vec{u}(\vec{z})\}\,.$ \end{tabular} \\ \\
		$\U^*(\vec{v})$ & \begin{tabular}{@{}l@{}} \textbf{Set of Pareto optimal utility profiles:}\\ $\U^*(\vec{v})= \{\vec{w} \in\U(\vec{v})\mid \forall \vec{w'} \in\U(\vec{v}) \ (w_i'\geq w_i\ \forall i)\Rightarrow (\vec{w'}=\vec{w})\}\,.$ \end{tabular} \\ \\
		$G_{\tau}(\vec{v})$ & \begin{tabular}{@{}l@{}l@{}} \textbf{Maximal weighted welfare (MWW) graph for weights  $\tau \in \R_{>0}^n$:} \\ $G_{\tau}$ is a   bipartite graph with parts $[n]$ and $[m]$, where agent $i \in [n]$ and \\ chore $j \in [m]$ are linked if $\tau_i \cdot |v_{i,j}|\leq \tau_{i'} \cdot |v_{i',j}|$  for each agent $i' \in [n]$.\end{tabular}\\ \\ 
		$\MWW_{\mbox{non-lonely}}(\vec{v})$ & \begin{tabular}{@{}l@{}} {\textbf{Collection of all MWW graphs with no lonely agents:}}\\
		$\MWW_{\mbox{non-lonely}}(\vec{v})=\{G_\tau(\vec{v})\mid \mbox{$\tau\in \R_{>0}^n$ and  $\deg(i)\ne 0\ \forall i$}\}\,.$
		\end{tabular} \\
	\end{tabular}\\

	}

	\section{Introduction}

	The competitive equilibrium, also known as the market or Walrasian equilibrium,  is a key economic concept that models the allocation of resources at the steady state of an economy when supply equals demand. The economic theory of general equilibrium originated from the ideas of Walras \cite{Walras74} and became mathematically rigorous since the work of Arrow-Debreu \cite{AD54}, who proved the existence of an equilibrium under mild conditions.
	
	Our work is motivated by applications of competitive equilibrium to the problem of fair allocation of resources among agents with different tastes when monetary compensations are not allowed. This extremely fruitful connection between the theory of general equilibrium and economic design was pioneered by Varian \cite{Varian74}. The idea was to give each agent a unit amount of ``virtual'' money and equalize demand and supply: find prices and an allocation such that when each agent spends her money on the most preferred bundles she can afford, all the resources are bought, and all the money is spent. {The division rule that, for each preference profile, outputs the just described allocation is called the competitive equilibrium from equal incomes (CEEI) or the \emph{competitive rule}.}
	
	The resulting allocation has the remarkable fairness property of envy-freeness since all the agents have equal budgets and select their most preferred bundles, and it is Pareto optimal.\footnote{An allocation is Pareto optimal if there is no other allocation in which all the agents are at least as happy and at least one agent is strictly happier.} Due to its desirable properties, the competitive rule has been suggested as a mechanism for allocating goods in real markets, with applications ranging from cloud computing~\cite{DGMVY18} and dividing rent~\cite{spliddit} to assigning courses among university students~\cite{B11}. 
	
	The market considered by Varian is known in the computer science literature as the Fisher market~(named after Irving Fisher, the 19th-century economist, see~\cite{BS00}).
	The properties of the Fisher market were studied in an extensive body of literature, discovering both algorithms for computing equilibria and hardness results (e.g., Chapter 5 of~\cite{AGT_book}).

	For allocating goods, the Fisher market has beautiful structural properties.
	For a large class of utilities,\footnote{Homogeneity and convexity of { utilities} are enough.} the
	equilibria of the Fisher market are captured by the Eisenberg-Gale convex program~\cite{EG}, which maximizes the product of individual utilities.\footnote{The product of utilities is known as the Nash social welfare or the  Nash product from {the axiomatic theory of bargaining}~\cite{Nash50}. Beyond the Fisher market, it balances individual and collective well-being in many settings, e.g., \cite{caragiannis2016unreasonable,conitzer2017fair}.} By the convexity of this problem, the competitive rule is single-valued (i.e., the utilities are unique) and can be computed efficiently for important classes of preferences --- e.g., such as preferences given by additive utilities --- using standard gradient descent methods. 

	\paragraph{Allocation of Bads (Chores).} The literature on resource allocation has largely neglected the study of \emph{bads}, also known as \emph{chores}. These are items that the agents do not want to consume, such as doing housework, or that represent liabilities, such as owing a good to someone. 
	
	\smallskip
	
	While at first sight, the same principles should apply in the problem of allocating chores, as when allocating goods, it turns out that the problem of allocating chores is more complex. The competitive rule for chores was defined and studied in a sequence of papers~\cite{BMSY17,BMSY_SCW} that considered the analog of Fisher markets for chores and analyzed their properties. Even in the case of additive utilities, the competitive rule for chores is no longer single-valued. The equilibrium allocations form a disconnected set and can be obtained as critical points of the {Nash social welfare} on the Pareto frontier of the set of feasible utilities~\cite{BMSY17}. The problem of allocating chores is thus not convex, and the usual techniques for finding competitive allocations based on linear/convex programming, such as primal/dual, ellipsoid, and interior point methods, do not apply.
	
	\smallskip
	
	{For two agents, these obstacles can be easily overcome due to the simple structure of the Pareto frontier: a specific feature of the two-agent case. It implies an efficient procedure for finding competitive allocations, see~\cite{BMSY_SCW} and Section~\ref{subsect_alg_fix_n}.} 
	
	{For more than two agents, computing competitive allocations is mentioned in~\cite{BMSY17} and~\cite{BMSY_SCW} as an open problem. We solve this problem by extending the $2$-agent reasoning of~\cite{BMSY_SCW} to an arbitrary number of agents. This extension is far from straightforward and requires new algorithmic and economic insights that may be interesting on their own.}

	\subsection{Our Contributions} We consider the problem of allocating chores in the basic setting of additive utilities. {It captures situations where the chores are unrelated, i.e., doing one  does not affect the disutility from doing the other, and so the total disutility of an agent is the sum of disutilities from all the chores allocated to this agent.}
	
	\smallskip 
	
	We have a set $[n] = \{1, \ldots, n\}$ of agents and a set $[m] = \{1, \ldots, m\}$ of divisible chores (which may alternatively be seen as indivisible chores that can be allocated in a randomized way). The utilities of the agents are given by a matrix $\vec{v} \in \R_{<0}^{n\times m}$ such that $v_{i,j}$ is the { (negative)} value of agent $i$ for chore $j$. Allocations are defined in the usual way, and the utilities are additive over the allocations. 
	
	\smallskip
	
	We {allow agents to have different budgets} represented by a vector $\vec{b} \in \mathbb{R}_{<0}^n$, which can be seen as virtual currency for acquiring chores; in particular, the budget of an agent denotes how much of a ``duty'' that agent has.\footnote{Most of the literature on fair division assumes that agents are equal in their rights, the case captured by equal budgets. It turns out that allowing unequal budgets is convenient even if in our problem agents have equal rights: see agent-item {parity} in Section~\ref{sect_compute_Pareto} or budget-rounding in Section~\ref{sect_indiv}. } For example, if an agent works full time while another only works half of the time, this can be modeled with budgets $-1$ and $-0.5$, respectively. The budgets are negative to indicate that they represent a liability.
	
	\bigskip

	Our main result is that \emph{{finding all the outcomes of the competitive rule can be solved in strongly polynomial time when either the number of agents or the number of chores is fixed}}. 
	\medskip

	\begin{theorem}[\textbf{Main Theorem, Divisible Chores}]
		Consider a chore division problem $(\vec{v}, \vec{b})$ with $n$ agents and $m$ chores, {where agents have additive utilities given by matrix $\vec{v}\in \R_{<0}^{n\times m}$ and budgets given by vector $\vec{b}\in \R_{<0}^n$.} If $m$ or $n$ are {fixed}, then
		\smallskip
		
		\begin{itemize}
			\item the set of all competitive utility profiles
			\item { a set of competitive allocations and price vectors such that for any competitive utility profile, there is an allocation with this utility profile in the set}
		\end{itemize}
		\smallskip
		
		can be computed in strongly polynomial time, using  $O\left(m^{n(n-1)/{2}+3}\right)$ operations for fixed $n$, or $O\left(n^{m(m-1)/{2}+3}\right)$, for fixed $m$. 
	\end{theorem}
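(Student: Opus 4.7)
The plan is to reduce the search for competitive allocations to the enumeration of a combinatorial object, the \emph{consumption graph} of an allocation: the bipartite graph $G\subseteq [n]\times[m]$ whose edges are the pairs $(i,j)$ with $x_{i,j}>0$. Since any competitive allocation is Pareto optimal, I only need to enumerate consumption graphs that are compatible with Pareto optimality; once such a graph is fixed, the corresponding candidate allocation and price vector will be pinned down by an explicit linear system, and the candidate can be checked for being competitive by a single max-flow computation in the spirit of Devanur et al.~\cite{devanur2002market}.

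First I would establish a structural/counting lemma for consumption graphs of Pareto optimal allocations. Intuitively, such graphs are sparse---a generic Pareto optimum for chores has a forest-like bipartite support whose excess edges are controlled by the number of agents (or chores). The aim is a bound of the form: the number of distinct Pareto optimal consumption graphs is at most $O(m^{n(n-1)/2})$, and dually (swapping the roles of agents and chores, using the agent--item duality invoked in Section~\ref{sect_compute_Pareto}) $O(n^{m(m-1)/2})$. I would also need that these graphs can be listed, not just counted, in time proportional to their number. This controls the exponent in the final runtime.

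Given a candidate consumption graph $G$, the plan is to recover the (essentially) unique allocation $\vec{x}$ and price vector $\vec{p}$ consistent with $G$. At a competitive equilibrium each agent spends her entire budget only on chores that minimize her disutility-per-dollar ratio $v_{i,j}/p_j$, and every chore is fully sold; restricted to the edges of $G$ these conditions become a linear system. Choosing a connected component of $G$ and normalizing one price, the remaining prices are forced by the equal-ratio condition along the edges, and then the allocation $\vec{x}$ is forced by the budget and supply equations. Thus $(\vec{x},\vec{p})$ is computable in time polynomial in $n+m$ for each $G$.

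Finally, each candidate pair $(\vec{x},\vec{p})$ must be certified: every edge of $G$ must indeed realize the minimum ratio for the corresponding agent, no chore outside $G$ should offer a better bang-per-buck, and market clearing must hold. Following Devanur et al., verifying market clearing at fixed prices reduces to a single maximum-flow computation on the demand graph at $\vec{p}$, costing roughly $O((n+m)^3)$; multiplying by the graph-enumeration bound yields the announced complexity $O(m^{n(n-1)/2+3})$ or $O(n^{m(m-1)/2+3})$. I expect the main obstacle to be the first step: the non-convex geometry of the chore problem makes the set of Pareto optimal consumption graphs considerably more intricate than in the goods case, so proving the structural bound and an efficient listing procedure is where the real work lies, whereas the reconstruction and max-flow verification steps are relatively standard.
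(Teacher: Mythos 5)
Your overall architecture --- enumerate candidate consumption graphs, reconstruct a candidate equilibrium from each graph, certify by a max-flow computation in the style of \cite{devanur2002market} --- is exactly the paper's three-phase scheme, and your second and third steps track Sections~\ref{sect_recover_U} and~\ref{sect_check_compet} closely (the paper reconstructs the candidate \emph{utility profile} by an explicit formula and lets the maximum flow produce an allocation, rather than solving for the allocation directly from the graph; for degenerate instances the allocation is not pinned down by the graph, so this ordering matters, but it is a repairable variation). The genuine gap is in your first step, which you yourself flag as ``where the real work lies'' but for which you supply no method. Two things go wrong as stated. First, the structural claim that the number of distinct Pareto optimal consumption graphs is at most $O\bigl(m^{n(n-1)/2}\bigr)$ is false for degenerate value matrices: Example~\ref{ex_degenerate} ($2$ agents, $m$ identical chores) exhibits a single Pareto optimal utility profile realized by allocations whose consumption graphs number exponentially many in $m$. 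The paper sidesteps this by not enumerating all Pareto optimal consumption graphs but only a \emph{rich} family --- one graph per competitive utility profile suffices, by Pareto-indifference (Corollary~\ref{corr_Pareto_indiff}) --- and takes this family to be the MWW graphs, i.e.\ the inclusion-maximal consumption graphs, which are in bijection with faces of the Pareto frontier (Lemma~\ref{lm_faces_and_MWW}).

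Second, even granting the right target family, you give no listing procedure, and this is the paper's central construction: for each of the $n(n-1)/2$ pairs of agents, the two-agent MWW graphs are explicitly the $2m+1$ ``splits'' and ``cuts'' obtained by sorting the disutility ratios $|v_{i,j}|/|v_{i',j}|$, and every $n$-agent MWW graph arises by intersecting one choice per pair (Proposition~\ref{prop_superset}); this pairwise reduction is what yields the $(2m+1)^{n(n-1)/2}$ bound and hence the exponent in the announced runtime. Your appeal to agent--item duality for the fixed-$m$ case is the right move and matches Proposition~\ref{prop_agent_items_duality}. Without the pairwise two-agent reduction (or some substitute such as the cell-enumeration machinery of \cite{DK08}, which the paper deliberately avoids), the argument is incomplete precisely at the point that makes the theorem nontrivial.
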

	
	\medskip 
	{The theorem implies an upper bound on the number of competitive equilibria for chores. No such bound was known for $n>2$.}
	
	\medskip 
	
	\textbf{\emph{Overview of the algorithm.}} The theorem is based on the following observation: computing competitive allocations for chores becomes easy if the Pareto frontier is known. Then every face of the frontier is easy to check for containing a competitive allocation. The intuition comes from numerical methods: the solution to a constrained optimization problem is easy to find if we know the active constraints. We show that the Pareto frontier can be computed in polynomial time in the number of chores $m$ for a fixed number of agents $n$ or in polynomial time in $n$ for fixed $m$. This gives a strongly polynomial time algorithm for computing all competitive utility profiles. 
	
	\smallskip
	
	Faces of the Pareto frontier are encoded using the language of consumption graphs. The consumption graph of an allocation is obtained by tracing an edge between an agent and a chore whenever the agent consumes some fraction of that chore.
	Then the first step of the algorithm is to generate {a family of graphs that we call  ``rich'': it must contain} consumption graphs of all Pareto optimal allocations.\footnote{For non-degenerate $\vec{v}$ (all matrices except a subset of measure zero), the set of all ``Pareto optimal'' consumption graphs has polynomial size. To capture the degenerate case, we are forced to define a rich set in a more complex way. {Namely, for each Pareto optimal utility profile, a rich set must contain the consumption graph of \emph{some}  allocation with this profile (but {not necessarily the}  consumption graphs of all such allocations).}} { Such a family contains a graph for each competitive allocation and possibly other graphs that do not correspond to competitive allocations.} In the second step, we generate a ``candidate'' utility profile for each graph in the family by recovering the explicit formula for the utility, assuming that the given graph is a consumption graph of a competitive allocation. In the third step, we adapt the technique from~\cite{devanur2002market} to check if the utility profile considered is, in fact, competitive by studying the amount of flow in an auxiliary maximum flow problem.
	
	\medskip

	\textbf{\emph{Our method and existing techniques}.}
	The {fundamental} difficulty for allocating chores is that {the set of equilibria can be disconnected. The only known algorithmic approach applicable in such a case was introduced by~\cite{DK08} and applied to the case of goods. This approach relies on the so-called cell enumeration technique, a complex tool from computational algebraic geometry~\cite{Basu}. Cell enumeration is used by \cite{DK08} as a black box and the paper mentions finding a simpler construction as an open question.}

	Our algorithm provides the first explicit construction (without cell enumeration) 
	{thus answering the open question from \cite{DK08}. We believe that our approach can also be used for finding competitive allocations in other economies with disconnected sets of equilibria; see Section~\ref{sect_conclusions}.} 
	
	{Our algorithm builds on the observation that the Pareto frontier has a polynomial number of faces, and all of them can be efficiently enumerated. For the $2$-agent case, \cite{BMSY17,BMSY_SCW} use this insight to construct competitive equilibria and mention the algorithm for $n\geq 3$ as an open problem. The two-agent case is special because the Pareto frontier has a simple structure~\cite{aziz2018fair}: items are ordered by the ratio of the values, one agent is allocated the prefix, another agent gets the postfix, and at most one item is split (see Section~\ref{subsect_alg_fix_n}). An extension of this construction to $n\geq 3$, the fact that the number of consumption graphs to inspect remains polynomial in the number of items $m$, and an algorithm enumerating them has not been known.
		These insights are applicable beyond computing the competitive equilibrium: for example, in a follow-up paper~\cite{FedorErel2019}, they play the key role in finding fair allocations with a minimal number of shared items.} 
	
	The idea that we use to compute the Pareto frontier for a fixed number of agents is to recover that of an $n$-agent {problem from the frontiers of ${n (n-1)}/{2}$ auxiliary two-agent sub-problems.} {To show that the resulting family of graphs is rich, we rely on a group of criteria for Pareto optimality (Lemma~\ref{prop_crit_efficiency}). We give concise proofs of these criteria but do not claim priority: these are the folk results, and their analogs are known in different contexts, e.g., cake-cutting~\cite{barbanel2005geometry} (Sections 7,8,10). However, as far as we know, we are the first to harness these criteria for algorithmic purposes. The only exception is the link between weighted welfare and Pareto optimality, which was used by~\cite{echenique2012finding} for approximate computation of competitive equilibria in economies with goods.} 
	
	{When the fixed parameter is the number of chores, the Pareto frontier is obtained using a novel agent-item parity: the set of Pareto optimal consumption graphs turns out to be invariant with respect to switching the roles of agents and chores. We interpret this result as a corollary of the Second welfare theorem (Theorem~\ref{cor_welf}). The Second welfare theorem was not known for chores since the geometric characterization of competitive allocations from~\cite{BMSY_SCW} is obtained under the assumption of equal budgets. We offer a short proof of the characterization with unequal budgets and derive the Second welfare theorem from it.}
	
	{As far as we know, no analogs of 
		explicit formulas recovering the competitive utility profile from the consumption graph were previously known  (Section~\ref{sect_recover_U}). Since we apply these formulas to consumption graphs that may not correspond to competitive allocations, we need to check the competitiveness of the resulting profile. We do that using an auxiliary maximum flow problem, which also recovers the allocation for free. The construction is similar to that of~\cite{devanur2002market} who check the competitiveness of a given vector of prices for goods. However, our case involves a complication: although the flow represents the money spent by agents, the prices are not given, and so we need to define a ``candidate vector of prices'' for a given utility profile.}

	\medskip
	
	\textbf{\emph{{Applications to fair division.}}}
	{A fair division rule is useless in practice if its outcome cannot be computed. For the competitive rule with chores, \emph{no algorithm} (even an inefficient one) was known before our results. In particular,  the definition of competitive allocations is non-constructive and hence does not give any algorithm. The absence of an algorithm precluded using the competitive approach for chore division despite its superior fairness properties compared to other approaches. For example, the popular fair-division platform, \href{http://www.spliddit.org/}{Spliddit.org}, uses the so-called Egalitarian rule to divide chores, which fails to guarantee envy-freeness but can be computed via a simple linear program. Our algorithmic results open the possibility of using the superior competitive rule.}   
	
	{Computing the whole set of competitive equilibria is important since different equilibria may favor different agents. Hence,
		getting access to the whole set} allows one to reason about which outcome to pick for a given instance and to decide on the tradeoffs between different properties, such as maximizing welfare, {additional fairness objectives (e.g., maximizing the minimal utility or making the utilities as equal as possible, maximizing the minimal gap between agent's utility from their allocation and an allocation of other agents), minimizing the number of shared chores} etc., among which there may be tension. Since different equilibria favor different agents, there is a question of picking the best outcome among the set of all equilibria. Bogomolnaia et al.~\cite{BMSY17} suggested picking the median equilibrium in the case of two agents (with probability $1$ there is an odd number of equilibria). The chance that such a selection can be computed without finding the whole set seems negligible. Alternatively, one can select the most egalitarian equilibrium or ask the agents to vote on which equilibrium to select. 
	
	\medskip

	\textbf{\emph{{Implications for} indivisible chores.}}
	{Sometimes, each chore must be allocated to an agent entirely. For example, a manager distributing the tasks may want to avoid responsibility-sharing among workers and hence treats chores as indivisible; alternatively, each chore may require special equipment, and there is only one unit of equipment per chore.} 
	
	{For problems with indivisible chores, we show how to find approximately fair Pareto optimal allocations in polynomial time for fairness notions such as weighted envy-freeness or weighted proportionality. The corresponding relaxations of these fairness notions} are weighted envy-freeness up to the removal of a chore from a bundle and the addition of another chore to another bundle (weighted-EF$_1^1$) and weighted proportionality up to one chore (weighted-Prop$1$). {Neither the existence of Pareto optimal approximately-fair allocations nor algorithms for finding them were known for indivisible chores, e.g., \cite{FrSh19a} mention finding a Pareto optimal Prop$1$ allocation as an open problem.}
	
	{The results on indivisible chores} become an immediate corollary of our main theorem and the technique from the recent work \cite{barman2018} that showed how to round a divisible competitive allocation with goods to get an approximately fair and Pareto optimal indivisible allocation.

	\paragraph{Organization of the paper.} {Section~\ref{sec_preliminaries} defines the competitive rule and discusses its properties, criteria of Pareto optimality, and other tools.} In Section~\ref{sec_main}, we give the main theorem and describe the main phases of the algorithm. {Section~\ref{sect_compute_Pareto} is devoted to computing}  a rich family of {consumption} graphs. Section~\ref{sect_recover_U} provides explicit formulas for competitive utility profiles when the consumption graph is known. In Section~\ref{sect_check_compet}, we check {whether} a given utility profile is competitive, recovering an allocation and prices. {Section~\ref{sect_indiv} is about} approximately fair allocations for indivisible chores. {Section~\ref{sect_conclusions} discusses directions for future research.}

	\subsection{Related Work}\label{subsect_literature}
	\paragraph{Algorithmic results for goods.} The problem of finding polynomial time algorithms for objects defined non-constructively has been a major research focus in the algorithmic game theory literature and beyond \cite{Ppad_inefficient}. Positive results were obtained for important special cases, such as computing Nash equilibria in zero-sum games and competitive equilibria in exchange economies with additive utilities, as well as negative (hardness) results for the corresponding problems in general-sum games and economies with non-additive utilities.
	
	\medskip
	
	\emph{\textbf{The case of ``convex'' economies. }} The search for algorithms for computing competitive equilibria has brought a flurry of efficient algorithms for finding equilibria for {Fisher markets with goods and agents having additive utilities (and for certain extensions of this basic model) 
		as well as computational hardness results beyond the case of additive utilities (e.g., \cite{CSVY,Chen.plc,EY07,LeonFIXP}).}
	
	{All the algorithms for Fisher markets rely on its convexity. Namely, the competitive equilibrium solves the Eisenberg-Gale convex program~\cite{EG}: competitive allocations maximize the {Nash social welfare} (the geometric mean of the utilities weighted by the budgets of the agents).
		This implies the uniqueness of the competitive utility profile and gives polynomial time approximation algorithms based on gradient descent methods (see, e.g., Chapter 6 in~\cite{AGT_book}).}
	
	Surprisingly, {the exact solution to the non-linear Eisenberg-Gale problem is rational and can also be found in polynomial time. The first weakly polynomial algorithm was described by~\cite{devanur2002market}: it relies on the primal-dual approach (see also \cite{AGT_book}, Chapter 5). Using a network-based approach, \cite{orlin} and \cite{vegh2012strongly} constructed strongly polynomial algorithms even if neither $n$ nor $m$ are fixed.}
	
	Similar {results are obtained for the extensions of the Fisher model that preserve the convexity of the problem, such as Eisenberg-Gale and Arrow-Debreu  markets~\cite{jain2007eisenberg, Codenotti05, homothetic,Jain07,garg2019strongly}.
		Other approaches to equilibrium computation include auction-based algorithms \cite{auction.gross} and dynamic procedures such as tatonnement (see, e.g., \cite{CCD13} for a general class containing Eisenberg-Gale markets) and proportional response dynamics for Fisher \cite{Zhang,BDX11,WZ07} and production markets \cite{BMN18}.}
	
	The convexity of the Eisenberg-Gale problem also implies that the equilibrium is robust to small perturbations of the market's parameters~\cite{MV07}. 
	
	In contrast, in the case of chores, {equilibria do not solve any convex optimization problem, there are many of them,}  and no robustness guarantee: {any function that assigns a competitive allocation to each preference profile is necessarily discontinuous~\cite{BMSY17}.  }

	\medskip
	
	\emph{\textbf{{Economies with a disconnected set of equilibria.}}} None of the methods mentioned above applies to the situation when the competitive equilibria {form a disconnected set (as in the case of chores). Such economies often arise when preferences are satiated, or there are constraints on individual consumption\footnote{Note that an economy with chores can be reduced to a constrained economy with goods, see \cite{BMSY17} and Section~\ref{sect_conclusions}.} (see, e.g., \cite{Gjerstad, kirman1986market}).}

	{A survey~\cite{codenotti2004computation}, published in 2004, mentions algorithms for economies with disconnected equilibria as an unexplored research direction.  The work of~\cite{DK08} made the first step in this direction by introducing the algorithmic approach not relying on convexity and applicable to the disconnected case. This approach relies on the black box of the cell enumeration technique discussed above. It was recently applied by~\cite{Tardos}  to the fair assignment problem of~\cite{HZ} (a constrained economy with ``unit-demand buyers'': the total amount of goods allocated to an agent must be equal to $1$).}

	\paragraph{Fair division of an inhomogeneous chore.}  {In the classic model of cake-cutting, agents are dividing an inhomogeneous attractive resource such as pizza with different toppings, land, or time.} This literature typically ignores Pareto optimality focusing exclusively on fairness.

	{With one exception of~\cite{Gardner}, inhomogeneous chores were not considered until recently. \cite{PS02} proposed an envy-free chore division protocol for four agents, and \cite{DFMY18} found a protocol for any number of agents.} \cite{HS15} consider the fair division of chores with connected pieces and bound the loss in social welfare due to fairness. 
	{\cite{segal2018fairly} studied envy-free divisions of a cake that may have some good and some bad parts and showed the existence of connected envy-free allocations for three players. \cite{meunier2018envy} extended this existence result to the case when the number of agents is prime or equal to four.} {The study of the minimal number of queries needed to achieve fairness is initiated by~\cite{farhadi2018complexity}.}

	\paragraph{Relaxed fairness notions for indivisible items.} {If items are indivisible, fair allocations may fail to exist, e.g., when two agents divide one item. The literature on goods has proposed several relaxed notions of fairness applicable in this case:} envy-freeness up to one good (EF1)~\cite{LMMS04}, proportionality up to one good (Prop1), envy-freeness up to any good (EFX), {maximin} fair share~\cite{B11}, and (approximate) competitive equilibrium. {EF1 and Prop1} can be {obtained} by maximizing the {Nash social welfare}, which  also  guarantees Pareto optimality~\cite{CKMP+16}. It is open whether or not EFX allocations always exist (see, e.g., \cite{PR18}).
	
	The {maximin} fair share is a fairness notion inspired by cake-cutting protocols. It requires that each agent's utility be as high as they can guarantee by preparing $n$ bundles and letting the other players choose the best $n-1$ of these bundles. This optimization problem induces the {maximin} value $\alpha_i$ for each player, and the question is whether there exists an allocation where each agent has utility at least $\alpha_i$. While such allocations may not exist~\cite{PW14}, approximations are possible; in particular, there always exists an allocation in which all the agents get two-thirds of their {maximin} value~\cite{PW14}, and this can be computed in polynomial time~\cite{AMNS15}.
	
	{The literature on indivisible chores is sparse.}
	\cite{ARSW17} study the fair allocation of indivisible chores using the {maximin} share solution concept, showing that such allocations do not always exist and computing one (if it exists) is strongly NP-hard; these findings are complemented by a polynomial $2$-approximation algorithm. {\cite{aziz2018fair} consider the problem of fair allocation of a mixture of goods and chores and design several algorithms for finding fair {(but not necessarily Pareto optimal)} allocations in this setting.  \cite{aziz2019strategyproof} enrich the setting by adding the requirement of strategy-proofness and~\cite{bouveret2019chore}, the requirement of connectivity under the assumption that chores form a graph.}
	
	Finally, the competitive rule and its various relaxations (such as those obtained by {partially relaxing} the budget constraint, allowing item bundling, or using randomization) can also be used to allocate indivisible goods. These have been studied for various classes of utilities from the point of view of the existence of fair solutions and their computation in~\cite{B11,FGL13,BHM15,OPR16,BLM16,barman2018,BNT19}. Closest to ours is the work by \cite{barman2018}, which considers Fisher markets with indivisible goods and shows how to compute an allocation that is Prop1 and Pareto optimal in strongly polynomial time. We build on these results to obtain a theorem for chores.

		\medskip
	
	{\textbf{\emph{Follow-up works.}} Since the first draft of our paper was circulated, there has been considerable progress in computing competitive equilibria with chores for various classes of utilities.
	  A version of the Lemke-Howson for a mixture of goods and bads under separable piecewise linear concave utilities was proposed in \cite{chaudhury2021competitive} and demonstrated a good performance in practice. In a model where infinite disutilities are allowed, \cite{chaudhury2022existence} showed that a competitive equilibrium may fail to exist, and checking whether it does is NP-hard. For polynomial approximation algorithms, see ~\cite{boodaghians2022polynomial} and \cite{chaudhury2022competitive}.
	  
	  Computing an exact competitive equilibrium in polynomial time when \emph{both} the number of agents and chores are variable remains an open question for additive utilities.}

	\section{Preliminaries}\label{sec_preliminaries}
	
	There is a set $[n] = \{1, \ldots, n\}$ of agents and a set $[m] = \{1, \ldots, m\}$ of divisible non-disposable chores (bads) to be distributed among the agents. 
	
	\smallskip
	
	A \emph{bundle} of chores is given by a vector $\vec{x} = (x_1, \ldots, x_m)\in \R_{\geq 0}^{m}$, where $x_j$ represents the amount of chore $j$ in the bundle.\footnote{An alternative interpretation for the amount $x_j$ of good $j$ is that it represents the amount of time working on chore $j$ or the probability of getting it.}\footnote{We write $\R_{\geq 0},\R_{>0},\R_{\leq 0},\R_{<0}$ for vectors with non-negative, strictly-positive, non-positive, and strictly negative components, respectively; { to distinguish vectors and scalars bold font is used.}} {Without loss of generality,} there is one unit of each chore.\footnote{Given an arbitrary division problem, one can rescale the utilities to obtain an equivalent problem where the total amount of each chore is one unit.} An \emph{allocation} $\vec{z}=(\vec{z}_i)_{i\in [n]}$ is a set of bundles where agent $i$ receives bundle $\vec{z}_i$. {An allocation is \emph{feasible}} if all the chores are distributed: $\sum_{i=1}^n z_{i,j} =1 $ for each $j \in [m]$. 
	
	\smallskip
	
	The agents have additive utilities specified through a matrix $\vec{v} \in \R_{<0}^{n \times m}$, where $v_{i,j} < 0$ represents the value of agent $i$ for consuming one unit of chore $j$. The utility of agent $i$ in an allocation $\vec{z}$ is $u_i(\vec{z}_i) = \sum_{j =1}^m v_{i,j} \cdot z_{i,j}$. {The \emph{utility profile} of} an allocation $\vec{z}$ is the vector $\vec{u}(\vec{z}) = (u_1(\vec{z}_1), \ldots, u_n(\vec{z}_n))$.
	
	\medskip 
	
	The set of all feasible utility profiles will be denoted by $$\mathcal{U}(\vec{v})= \{\vec{w} \in\R^n\mid \exists \mbox{ feasible allocation } \vec{z} : \ \vec{w}=\vec{u}(\vec{z})\}\,.$$ The set of {feasible} allocations and the set of feasible utility profiles are convex polytopes. 
	
	\medskip
	
	In general, the agents may have different duties with respect to the chores, which will be modeled through different (negative) budgets. Formally, each agent $i$ will be endowed with a budget $b_i < 0$.
	For example, a problem of allocating chores with unequal budgets may arise when a manager assigns tasks to two workers, Alice and Bob. If Alice works full-time and Bob works part-time
	(say 50\%), then it is reasonable that Bob has the right to work half as {much as}   Alice. This corresponds to budgets $b_{Alice} = -1$ and {$b_{Bob} = -0.5$.}
	
	\medskip 
	
	\begin{definition}
		{ A chore division problem $(\vec{v}, \vec{b})$ is a pair of a matrix of values $\vec{v}\in \R_{<0}^{n\times m}$ and budgets $\vec{b}\in \R_{<0}^n$.} 
	\end{definition}

	\subsection{The Competitive Rule}

	Given a vector $\vec{p} = (p_1, \ldots, p_m)\in \R_{<0}^m$, where $p_j$ represents the price of a chore $j$, the price of a bundle $\vec{x}= (x_1, \ldots, x_m)$ of chores is given by
	$p(\vec{x}) = \sum_{j=1}^m p_j \cdot x_j$.
	\bigskip
	
	\begin{definition}[Competitive Allocation]\label{CRdef}
		A {feasible} allocation $\vec{z} = (\vec{z}_1, \ldots, \vec{z}_n)$ for a chore division problem $(\vec{v}, \vec{b})$ with strictly negative matrix of values and budgets is \emph{competitive} if and only if there exists a vector of prices $\vec{p} = (p_1, \ldots, p_m)\in \R_{<0}^m$ such that for each $i \in [n]$:
		\begin{itemize}
			\item Agent $i$'s bundle maximizes her utility among all bundles within her budget; i.e., $u_i(\vec{z}_i) \geq u_i(\vec{x})$ for each bundle $\vec{x}\in \R_{\geq 0}^m$ with $p(\vec{x}) \leq b_i$.
		\end{itemize}
	\end{definition}

	\medskip
	Unlike the Fisher market framework for allocating goods, all the prices and budgets are negative in the case of chores.
Also, Definition~\ref{CRdef} is designed for chores with strictly negative values for all the agents. Chores for which some agents have a zero value can be handled separately as follows.
	
	\bigskip
	
	\begin{remark}[Chores with Zero Utilities]
		Suppose there is at least one chore $j \in [m]$ for which some agent $i$ has a zero value. 
		{Each such chore $j$ can be allocated to the agent $i$ while not affecting the utility of any agent.}
		Moreover, this allocation can be implemented through the competitive rule by setting the price {$p_j=0$.} 
		\qed 
	\end{remark}

	\bigskip
	
	\begin{definition}[Pareto Optimality]
		An allocation $\vec{z}$ is \emph{Pareto optimal} {if it is feasible and there is no other feasible} allocation $\vec{z}'$ in which $u_i(\vec{z}_i')\geq u_i(\vec{z}_i)$ for every agent $i\in [n]$ and the inequality is strict for at least one of them. 
		
		{The utility profile $\vec{u}$ is Pareto optimal if $\vec{u}= \vec{u}(\vec{z})$ for some Pareto optimal allocation $\vec{z}$. The set of all Pareto optimal utility profiles is called the \emph{ Pareto frontier} and is denoted by $\U^*(\vec{v})$.}
	\end{definition}
	
	\bigskip 
	
	\begin{definition}[Weighted Envy-Freeness]
		An allocation $\vec{z}$ is \emph{weighted-envy-free} with weights $\beta\in \R_{>0}^n$  if for every pair of agents $i,j \in [n]$ we have:
		${u_i(\vec{z}_i)}/{\beta_i} \geq {u_i(\vec{z}_{j})}/{\beta_{j}}\,.$
	\end{definition}
	
	\medskip 
	
	The competitive rule satisfies Pareto optimality and weighted-envy-freeness with weights $\beta_i= |b_i|$.
	{To see why weighted-envy-freeness  holds, consider a competitive allocation $\vec{z}$ with budgets $\vec{b}$ and a pair of agents $i$ and $j$. Then the bundle $\vec{x}={|b_i|}/{|b_{j}|} \cdot \vec{z_{j}}$ has  price $p(\vec{x})\leq b_i$ and, hence,   $u_i(\vec{z_i})\geq u_i(\vec{x})$  by  definition of the competitive allocation. This inequality implies envy freeness since $u_i(\vec{x})={|b_i|}/{|b_{j}|} \cdot u_i(\vec{z_{j}})\,.$} For Pareto optimality, see Theorem~\ref{cor_welf}.

	\medskip
	
	\subsection{The geometry of the Competitive Rule and non-convexity}
	In the case of goods, the competitive rule {solves} the \emph{Eisenberg-Gale} optimization problem (see, e.g., Chapter 5 in \cite{AGT_book}): {an allocation $\vec{z}$ is competitive if and only if the \emph{Nash social welfare} $\prod_{i=1}^n \left|u_i(\vec{z}_i)\right|^{|b_i|}$ is maximized at $\vec{z}$.} This problem has a convex programming formulation, and {an approximate solution can be found via} standard gradient descent methods. {The exact strongly polynomial algorithms developed by~\cite{orlin} and~\cite{vegh2012strongly} also rely on convexity (see Related Work).}

	\bigskip
	
	In the case of chores, an analog of the Eisenberg-Gale characterization was found in~\cite{BMSY17}.
	
	\medskip
	
	\begin{theorem}[Bogomolnaia, Moulin, Sandomirskiy, Yanovskaya \cite{BMSY17}]\label{th_geom}
		Consider a chore division problem $(\vec{v}, \vec{b})$.
		A {feasible} allocation $\vec{z}$ is competitive if and only if {its} utility profile $\vec{u}(\vec{z})$ 
		\begin{itemize}
			\item belongs to the set $\U^*(\vec{v})\cap \R_{<0}^n$ of strictly negative points on the Pareto frontier, and 
			\item is a critical point of {Nash social welfare} on the feasible set of utilities $\U(\vec{v})$.
		\end{itemize}
	\end{theorem}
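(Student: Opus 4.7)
My plan is to translate both ``competitive'' and ``critical point of $\Nash_{\vec b}$'' into first-order (KKT) conditions at the level of allocations, and match them through the linear map $\vec z \mapsto \vec u(\vec z)$. Work with the logarithmic transform $F(\vec z) = \log \Nash_{\vec b}(\vec u(\vec z)) = \sum_i |b_i|\log|u_i(\vec z)|$, which is smooth on $\R_{<0}^n$; its partial derivative is $\partial F/\partial z_{i,j} = |b_i|\,v_{i,j}/u_i$. The KKT conditions for $\vec z$ on the allocation polytope $\F = \{\vec z \ge 0 : \sum_i z_{i,j}=1\}$ read: there exist multipliers $(c_j)_j$ with $|b_i|v_{i,j}/u_i = c_j$ whenever $z_{i,j}>0$ and $|b_i|v_{i,j}/u_i \ge c_j$ otherwise. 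Because $\vec z \mapsto \vec u(\vec z)$ is a linear surjection from $\F$ onto $\U(\vec v)$, these KKT conditions at $\vec z$ correspond to $\vec u(\vec z)$ being a critical point of $\Nash_{\vec b}$ on $\U(\vec v)$. The whole proof amounts to matching the KKT multipliers $c_j$ with competitive prices $p_j = -c_j$.

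\textbf{Forward direction.} Suppose $(\vec z, \vec p)$ is competitive. Since $v_{i,j} < 0$, local non-satiation holds: shrinking any positively-consumed chore strictly improves utility and loosens the budget. Hence each budget binds, $p(\vec z_i) = b_i < 0$, so $\vec z_i \ne 0$ and $u_i(\vec z_i) < 0$. A standard first-welfare argument gives Pareto optimality: any dominating $\vec z'$ would satisfy $p(\vec z'_i) \ge b_i$ for every $i$ (strict for a strictly improved agent, by LNS and individual optimality), contradicting $\sum_i p(\vec z'_i) = p(\vec 1) = \sum_i b_i$. The KKT conditions of agent $i$'s demand problem produce a multiplier $\alpha_i > 0$ with $v_{i,j} = \alpha_i p_j$ on consumed pairs and $v_{i,j} \le \alpha_i p_j$ otherwise; multiplying the equality by $z_{i,j}$ and summing over $j$ gives $\alpha_i = u_i/b_i$. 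Dividing by $u_i < 0$ (which flips the inequality) yields $|b_i|v_{i,j}/u_i = -p_j$ on consumed pairs and $|b_i|v_{i,j}/u_i \ge -p_j$ otherwise, so the KKT conditions for $F$ on $\F$ hold with $c_j = -p_j$, and $\vec u(\vec z)$ is critical for $\Nash_{\vec b}$ on $\U(\vec v)$.

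\textbf{Reverse direction.} Conversely, assume $\vec u(\vec z) \in \U^*(\vec v) \cap \R_{<0}^n$ and $\vec z$ is a KKT point of $F$ on $\F$: there exist $(c_j)_j$ with $|b_i|v_{i,j}/u_i = c_j$ on consumed pairs and $\ge c_j$ otherwise; the sign calculation $|b_i|, -v_{i,j}, -u_i > 0$ forces $c_j > 0$. Define $p_j := -c_j < 0$ and $\alpha_i := u_i/b_i > 0$. Multiplying by $u_i < 0$ flips the inequality back and the conditions rewrite as $v_{i,j} = \alpha_i p_j$ on consumed pairs and $v_{i,j} \le \alpha_i p_j$ otherwise --- precisely the KKT/MBB condition for agent $i$'s budget-constrained maximization at prices $\vec p$. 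Budget binding is automatic: $p(\vec z_i) = \sum_j p_j z_{i,j} = \sum_{j:z_{i,j}>0}(v_{i,j}/\alpha_i)z_{i,j} = u_i/\alpha_i = b_i$. Hence $\vec z_i$ solves agent $i$'s maximization problem and $\vec z$ is competitive at prices $\vec p$.

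\textbf{Main obstacle.} The delicate step is the identification of ``critical point of $\Nash_{\vec b}$ on $\U(\vec v)$'' with the allocation-level KKT conditions across the non-injective map $\vec z \mapsto \vec u(\vec z)$: distinct allocations may produce the same utility profile with different consumption graphs, and one must show that for each critical utility profile some pre-image satisfies the full KKT system --- in particular the correct sign of the inequality on non-consumed pairs, which distinguishes local minima, maxima, and saddle points of $\Nash_{\vec b}$ on the Pareto frontier. Resolving this relies on strict negativity of $\vec u$ (so that $F$ is smooth and $\alpha_i$ is well defined) together with a face-by-face description of $\U(\vec v)$ in terms of consumption graphs, which foreshadows the structural analysis developed later in the paper.
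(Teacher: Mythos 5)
Your proposal follows essentially the same route as the paper, which proves Theorem~\ref{th_geom} via Proposition~\ref{prop_characterizations_of_CA} in the appendix: the chain competitive $\Leftrightarrow$ price identities $p_j=v_{i,j}b_i/u_i$ on consumed pairs (with $\geq$ otherwise) $\Leftrightarrow$ maximization of the weighted welfare $W_\tau$ with $\tau_i=b_i/u_i$ $\Leftrightarrow$ criticality of $\Nash_{\vec b}$ is exactly your KKT matching with $c_j=-p_j$, and your sign computations check out against Lemmas~\ref{lm_MBB} and~\ref{lm_price_net_utilities}.

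The one step you leave open is the bridge you flag as the ``main obstacle'': your reverse direction starts from the assumption that $\vec z$ satisfies the allocation-level KKT system, whereas the theorem's hypothesis is that $\vec u(\vec z)$ is a critical point of $\Nash_{\vec b}$ on $\U(\vec v)$ in the supporting-hyperplane sense. This bridge is easier than you suggest and needs no face-by-face or consumption-graph analysis. By the paper's definition of criticality, the tangent hyperplane to the level curve at $\vec u(\vec z)$ --- whose normal direction is $\tau=(b_i/u_i)_{i}\in\R_{>0}^n$ --- supports $\U$, so $\vec u(\vec z)$ either maximizes or minimizes $\langle\tau,\cdot\rangle$ over $\U$; the hypothesis $\vec u(\vec z)\in\U^*$ rules out the minimum because $\tau>0$. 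Since $W_\tau(\vec y)=\langle\tau,\vec u(\vec y)\rangle$ and $\vec u$ maps $\F$ onto $\U$, \emph{every} preimage of such a maximizer maximizes the linear function $W_\tau$ over the polytope $\F$, and for a linear objective KKT is equivalent to global optimality --- which is precisely your system with the correct inequality on non-consumed pairs. So the non-injectivity of $\vec z\mapsto\vec u(\vec z)$ is a non-issue, and the role of the Pareto-optimality hypothesis is only to select the maximizer side of the supporting hyperplane; it has nothing to do with distinguishing local minima, maxima, and saddle points of $\Nash_{\vec b}$ on the frontier, all of which satisfy the same first-order system. The same linearization supplies the last step of your forward direction (allocation-level conditions $\Rightarrow$ criticality in utility space); your separate first-welfare argument there is correct but redundant, since Pareto optimality already follows from $W_\tau$-maximization with strictly positive weights.
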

	
	\bigskip

	Recall that a point $x$ is called critical for a smooth function $f$ on a convex set $K$ if the tangent hyperplane to the level curve of $f$ at $x$ is a supporting hyperplane for $K$. {Since the level curve is orthogonal to the gradient $\nabla f(x)$, one gets an equivalent condition:  $x$ is critical if the scalar product $\langle \nabla f(x),\, y-x \rangle$  has the constant sign  for all $y\in K$ (zeros are possible).}
	Local maxima, local minima, and saddle points of $f$ on the boundary of $K$ are examples of critical points.
	
	In~\cite{BMSY17}, the theorem is proved for the case of equal budgets, but the approach extends to arbitrary, strictly negative budgets. A {short} proof is contained in Appendix~\ref{app_sect_characterization} {together with other useful characterizations of competitive allocations.}
	
	\bigskip 
	\begin{remark}
	None of the global extrema of the {Nash social welfare} are competitive for chores: global minima correspond to unfair allocations, where at least one agent receives no chores, and hence the {Nash social welfare} at such allocations is zero; the global maximum lies on the anti-Pareto frontier and therefore it is not Pareto optimal.
		Thus, it is unclear how to use global optimization methods to compute the outcome of the competitive rule. 
		\qed 
	\end{remark}

	\bigskip 
	
	{Throughout the paper, we will use the following examples to illustrate the constructions.}
	
	\bigskip 
	
	\begin{example}\label{ex_1}
		Consider two agents  dividing two chores with values given by
		$$\vec{v}= \left(\begin{array}{cc} -1 & -8 \\  -1 & -2 \end{array}\right),$$
		where rows correspond to agents and columns to chores. Both agents hate the second chore but the second agent finds it less painful compared to the first chore than agent $1$.
		
		We 	{will consider two different cases: equal budgets $\vec{b}=(-3,-3)$ (the agents have identical workload) or unequal budgets  {$\vec{b}=(-2, -4)$} (i.e., the second agent is entitled to twice as much work as agent $1$).} 
		
		{As an application of our algorithm, we will see that for the case of equal budgets, the competitive allocation is unique and is given by
			$$\vec{z}=\left(\begin{array}{cc} 1 & \frac{7}{16} \\  0 & \frac{9}{16} \end{array}\right) \  \mbox{with } \ \vec{p}=\left(-\frac{2}{3},-\frac{16}{3}\right).$$
			This allocation is envy-free: the first agent is indifferent between their own bundle and that of agent~$2$, and the second agent gets a strictly higher utility from their own allocation.}
		
		{For $\vec{b}=(-2, -4)$, we will get two competitive allocations:}
		\begin{itemize}
			\item $\vec{z^1}=\left(\begin{array}{cc} 1 & 0 \\  0 & 1 \end{array}\right), \  \mbox{with prices} \ \vec{p}^1=(-2,-4) $ and 
			\item $\vec{z^2}=\left(\begin{array}{cc} 1 & {1}/{4} \\  0 & {3}/{4} \end{array}\right), \  \mbox{with prices} \ \vec{p}^2=\left(-\frac{2}{3},-\frac{16}{3}\right).$
		\end{itemize}
		These allocations are weighted-envy free: agent $1$ prefers their own allocation to $\frac{1}{2}$ of the allocation of the second agent, while agent~$2$ thinks that their bundle is at least as good as the double bundle of agent~$1$.

		The feasible set $\U(\vec{v})$ and utility profiles of the competitive allocations are depicted in Figure~\ref{fig1} together with the level curves of the {Nash social welfare}: {$|u_1|^3\cdot |u_2|^3$ for equal budgets and $|u_1|^2\cdot |u_2|^4$ for unequal.} 
		The level curves of the {Nash social welfare} (dotted hyperbolas) and the feasible set are not separated by a straight line. Thus, the competitive allocations are not global extrema of the {Nash social welfare}:  the utility profiles {$\vec{u}(\vec{z})$ and $\vec{u}(\vec{z}^2)$ are the local maxima} of the product on the Pareto frontier while the corner of the feasible set, $\vec{u}(\vec{z}^1)$, is a stationary point:  neither a local minimum nor maximum.
		
		This example illustrates that the problem of computing competitive allocations is \emph{non-convex}; there can be \emph{many competitive utility profiles}, and in particular, the set of competitive allocations can be \emph{disconnected}. 
		\begin{figure}[h!]
			\vskip -0.5 cm
			\centering
			{\includegraphics[width=6cm, clip=true, trim=3cm 7.5cm 3cm 6cm]{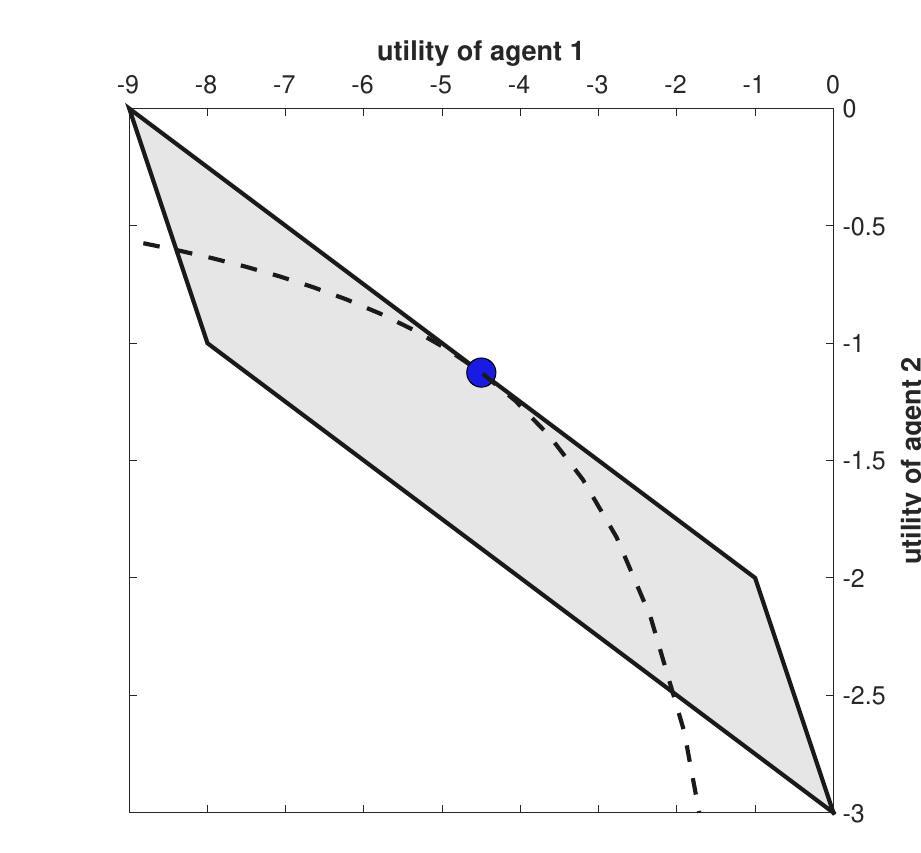}}
			{\includegraphics[width=6cm, clip=true, trim=3cm 7.5cm 3cm 6cm]{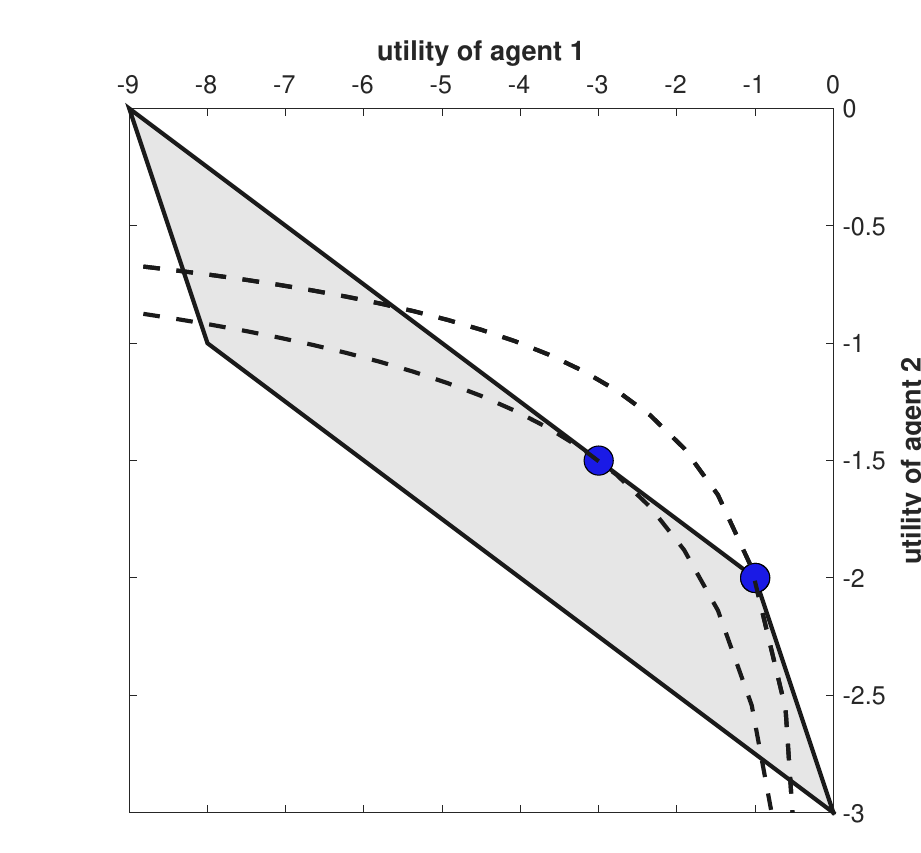}}
			\caption{Competitive utility profiles for Example~\ref{ex_1}. {The left figure depicts the unique {competitive profile} for the case of equal budgets. The right figure is for budgets $\vec{b}=(-2,-4)$:
					the corner corresponds to $\vec{z}^1$ and the profile inside the face is $\vec{u}(\vec{z}^2)$.}}
			\label{fig1}
		\end{figure}
		\qed
	\end{example}
	
	\medskip 
	
	\begin{example}\label{ex_2}
		Consider the {problem with three agents and two chores with the values
			$$\vec{v}= \left(\begin{array}{cc} -1 & -8 \\ -1 & -8\\  -1 & -2 \end{array}\right)$$
			and budgets $\vec{b}=(-1,-1,-4)$.
			
			This problem is obtained by splitting the first agent from Example~\ref{ex_1} (the case of unequal budgets) into two identical agents receiving half of the initial budget each. Hence, the competitive equilibria can be constructed by dividing the bundle of the old first agent into two of equal price and equal value to new agents $1$ and $2$. This leads to the following structure of equilibria:
			$$\vec{z^1}=\left(\begin{array}{cc} \frac{1}{2} & 0\\ \frac{1}{2} & 0 \\  0 & 1 \end{array}\right) \  \mbox{with } \ \vec{p}^1=(-2,-4) \ \mbox{ and } \ \vec{z^2}=t\cdot \left(\begin{array}{cc} 1 & \frac{1}{16} \\
				0 & \frac{3}{16}
				\\ 0 & \frac{3}{4} \end{array}\right) +(1-t) \cdot \left(\begin{array}{cc} 	0 & \frac{3}{16}\\
				1 & \frac{1}{16} 
				\\ 0 & \frac{3}{4} \end{array}\right) \  \mbox{with } \ \vec{p}^2=\left(-\frac{2}{3},-\frac{16}{3}\right),$$
			where $t$ is an arbitrary number in $[0,1]$. 
			
			Instead of one allocation $\vec{z}^2$ from the previous example, we get a continuum of them. This is an artifact caused by the degeneracy of the matrix $\vec{v}$: there is a continuum of ways to split the bundle of the first agent from the previous example into two bundles of the same price and value. The set of degenerate $\vec{v}$ has zero Lebesgue measure; on its complement, each Pareto optimal utility profile corresponds to exactly one allocation (see \cite{BMSY_SCW}, Lemma~1). The formal definition of degeneracy is given in Subsection~\ref{subsect_criteria} below.}
		\qed 
	\end{example}

	\subsection{Corollaries of the geometric characterization. The existence and welfare theorems}
	Theorem~\ref{th_geom}  does not give any recipe for computing outcomes of the competitive rule but allows one to analyze its properties. 
	The first corollary of Theorem~\ref{th_geom} is the existence of competitive allocations. Indeed, there is at least one critical point of the {Nash social welfare} on the Pareto frontier: the one where the level curve of the product, given by the equation $$\left\{\vec{u} \in \R_{<0}^n: \ \prod_{i=1}^n \left|u_i\right|^{|b_i|}=C\right\},$$ first touches the Pareto frontier when we decrease $C$ from large to small values. The corresponding competitive allocation maximizes the {Nash social welfare} \emph{over all Pareto optimal allocations} (see \cite{BMSY17} for details of the construction). 
	
	\medskip
	
	The second corollary of Theorem~\ref{th_geom} is that both welfare theorems hold.
	
	\medskip
	
	\begin{theorem}[Welfare theorems]\label{cor_welf}
		The First and the Second welfare theorems hold:
		\begin{enumerate}
			\item Any competitive allocation is Pareto optimal; 
			\item For any Pareto optimal  allocation $\vec{z}$ with $\vec{u}(\vec{z})\in \R_{<0}^n$, there exist a vector $\vec{b}\in \R_{<0}^n$ such that $\vec{z}$ is competitive for budgets $\vec{b}$.
		\end{enumerate}
	\end{theorem}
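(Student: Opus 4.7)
The plan is to read both welfare theorems directly off the geometric characterization in Theorem~\ref{th_geom}, which reduces ``is $\vec{z}$ competitive with respect to $\vec{b}$?'' to two purely geometric conditions on $\vec{u}(\vec{z})$: lying on the strictly-negative part of the Pareto frontier $\U^*(\vec{v})\cap\R_{<0}^n$, and being a critical point of the Nash product $\Nash_{\vec{b}}$ on $\U(\vec{v})$. The first welfare theorem is then immediate: competitiveness of $\vec{z}$ forces $\vec{u}(\vec{z})\in \U^*(\vec{v})$, so $\vec{z}$ is Pareto optimal.

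For the second welfare theorem I need to construct budgets $\vec{b}\in\R_{<0}^n$ that turn a given Pareto optimal profile $\vec{u}(\vec{z})\in\R_{<0}^n$ into a critical point of $\Nash_{\vec{b}}$. The key calculation is the gradient of the Nash product at a strictly negative profile: taking the logarithm of $\Nash_{\vec{b}}(\vec{u}) = \prod_i |u_i|^{|b_i|}$ yields
$$\nabla \Nash_{\vec{b}}(\vec{u})\;=\;\Nash_{\vec{b}}(\vec{u})\cdot\Bigl(\tfrac{|b_i|}{u_i}\Bigr)_{i=1}^n.$$
Hence criticality on $\U(\vec{v})$ is equivalent to the positive vector $\bigl(|b_i|/|u_i(\vec{z})|\bigr)_{i}$ being proportional to an outward normal $\vec{\lambda}\in\R_{>0}^n$ of some supporting hyperplane of $\U(\vec{v})$ at $\vec{u}(\vec{z})$. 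Given such a $\vec{\lambda}$, the explicit choice $b_i:=-\lambda_i\,|u_i(\vec{z})|$ lies in $\R_{<0}^n$ and realises exactly this proportionality, so Theorem~\ref{th_geom} applied in the reverse direction will complete the proof.

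The hardest step---and really the only non-routine one---is producing a \emph{strictly positive} supporting normal $\vec{\lambda}$ at the Pareto optimal point $\vec{u}(\vec{z})$. My plan is a standard LP-duality argument: since $\vec{u}(\vec{z})$ is Pareto optimal in the polytope $\U(\vec{v})$, the linear program $\max\{\vec{1}^{\top}\vec{u}:\vec{u}\in\U(\vec{v}),\ \vec{u}\geq \vec{u}(\vec{z})\}$ has $\vec{u}(\vec{z})$ as its unique feasible (hence optimal) point; its dual multipliers produce a nonnegative combination $\vec{\lambda}$ of the facet normals of $\U(\vec{v})$ active at $\vec{u}(\vec{z})$ satisfying $\vec{\lambda}=\vec{1}+\vec{\mu}$ for some $\vec{\mu}\geq 0$, hence a supporting normal with every component $\geq 1$. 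Strict negativity of $\vec{u}(\vec{z})$ is not needed for this geometric step but is used downstream to keep $\Nash_{\vec{b}}$ smooth at $\vec{u}(\vec{z})$ and to guarantee that $b_i=-\lambda_i|u_i(\vec{z})|$ lands strictly inside the negative orthant.
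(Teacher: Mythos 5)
Your proposal follows essentially the same route as the paper: part (1) is read directly off Theorem~\ref{th_geom}, and for part (2) both you and the paper choose budgets $b_i=-\lambda_i\,|u_i(\vec{z}_i)|$ so that the gradient of $\ln\Nash_{\vec{b}}$ at $\vec{u}(\vec{z})$ is normal to a supporting hyperplane of $\U$, making $\vec{u}(\vec{z})$ a critical point of $\Nash_{\vec{b}}$. The one point where you go beyond the paper is the LP-duality argument producing a supporting normal with \emph{strictly positive} components: the paper simply traces an arbitrary supporting hyperplane with normal $\tau$ and sets $b_i=-|\tau_i|\cdot|u_i(\vec{z}_i)|$, which would give $b_i=0$ (not in $\R_{<0}$) if some $\tau_i=0$, so your extra step is a legitimate tightening of a detail the paper glosses over (and which it effectively supplies later, via item (4) of Proposition~\ref{prop_crit_efficiency}).
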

	\proof{Proof.}
	By Theorem~\ref{th_geom}, the utility profile of a competitive allocation belongs to $\U^*$, which yields the first item. To prove the second one, note that since $\U$ is a convex polytope and $\vec{u}(\vec{z})$ belongs to its boundary, we can trace a hyperplane $h$ supporting $\U$ at $\vec{u}(\vec{z})$. By Theorem~\ref{th_geom}, it is enough to show that there is a vector of budgets $\vec{b}$, such that $h$ is a tangent hyperplane to the level curve of the {Nash social welfare} at $\vec{u}(\vec{z})$. This condition is satisfied if the gradient of the function 
	$$\ln \left(\prod_{i=1}^n \left|u_i\right|^{|b_i|}\right)$$
	is orthogonal to $h$ at $\vec{u}$. The gradient is {equal to} $(|b_i|/|u_i(\vec{z}_i)|)_{i\in [n]}.$ 
	
	If $h$ is given by the equation  $\left\{V:\ \langle \tau, V\rangle =C\right\}$, then the vector $\tau $ is orthogonal to $h$ and so  it suffices to select $b_i=-|\tau_i| \cdot |u_i(\vec{z}_i)|.$ \qed
	\endproof
	
	\medskip 
	
	{Another} corollary of Theorem~\ref{th_geom} is that whether a given allocation is competitive or not can be determined by its utility profile.
	\medskip
	
	\begin{corollary}[Pareto indifference]\label{corr_Pareto_indiff}
		{If $\vec{u}$ is a competitive utility profile, then any feasible allocation $\vec{z}$ with $\vec{u}=\vec{u}(\vec{z})$ is competitive.}
	\end{corollary} 
	
	{\subsection{Consumption graphs, rich families, and faces of the Pareto frontier}\label{subsect_consumption}
		The \emph{consumption graph} $G_{\vec{z}}$, associated with an allocation $\vec{z}$, 
		is a non-oriented bipartite graph  with parts $[n]$ and $[m]$, where an agent $i \in [n]$ and a chore $j \in [m]$ are connected by an edge if and only if $z_{i,j}>0$. The consumption graph shows who gets what, but 
		does not specify the quantities.}
	
	{The key element of our algorithmic approach is the enumeration of a family of graphs that is rich enough in the sense that each Pareto optimal utility profile has a representative consumption graph.} 
	
	\bigskip
	
	{
		\begin{definition}[Rich family of graphs]\label{def_rich}
			{A collection of bipartite graphs is called \emph{rich} for a given matrix of values  $\vec{v}$ if  
				for any Pareto-optimal utility profile $\vec{u}\in \R_{<0}^n$, 
				there is a feasible allocation $\vec{z}$ with $\vec{u}(\vec{z})=\vec{u}$ such that the consumption graph $G_{\vec{z}}$ belongs to the collection.}
		\end{definition}
	}
	
	\bigskip 
	
	{
		For example, the collection of all bipartite graphs with parts $[n]$ and $[m]$ is rich; however, it contains exponentially many elements.}
	
	{To keep the algorithm polynomial, we will need a rich family of polynomial size. A natural candidate is the set of consumption graphs of all Pareto optimal allocations, which is obviously rich. However, for degenerate problems, this set may also have exponential size; see Remark~\ref{rem_computing_all} showing that the set of consumption graphs corresponding to one particular Pareto optimal utility profile $\vec{u}$ may contain an exponential number of elements.
	}
	
	{
		We modify this idea by considering the set of graphs that correspond to allocations maximizing  \emph{weighted utilitarian welfare} $\sum_{i\in [n]}\tau_i \cdot u_i(\vec{z}_i)$ for some weights $\tau\in \R_{>0}^n$.
	}
	
	\bigskip
	
	{
		\begin{definition}[Maximal Weighted Welfare Graph]\label{def_MWW}
			Let $\tau \in \R_{>0}^n$ be a vector of weights. 
			Consider the $([n],[m])$-bipartite graph where agent $i \in [n]$ and chore $j \in [m]$ are linked if $$\tau_i \cdot |v_{i,j}|\leq \tau_{i'} \cdot |v_{i',j}| \; \; \mbox{ for each agent } i' \in [n].$$
			{In other words, each chore $j$ is connected to all agents $i$ with minimal weighted disutility $\tau_i\cdot |v_{i,j}|$.}

			We call this 
			the \emph{Maximal Weighted Welfare (MWW)} graph and denote it by $G_\tau=G_\tau(\vec{v})$.
		\end{definition} 
	}
	
	\bigskip
	
	\begin{remark}\label{rem_MBB}
		MWW {graphs are related to maximal bang per buck (MBB) graphs introduced by~\cite{devanur2002market} in the case of goods. MBB graphs capture the demand of agents given a price vector $\vec{p}$:  edges are traced between  $i$ and $j$ with the maximal value/price ratio ${v_{i,j}}/{|p_j|}$. 
			
			Comparing the definitions, we see that
			MWW graphs for a problem $\vec{v}$ are related to MBB graphs for $\vec{v}^T$ (the problem where agents and items switched their roles). Namely,
			the MWW graph with weights $\tau$ coincides with the MBB graph for $\vec{v}^T$ with prices $p_j=-{1}/{\tau_j}$.}
		\qed 
	\end{remark}
	
	\bigskip
	
	{We say that a bipartite graph $G$ on $([n],[m])$ has \emph{no lonely agents} if each agent is connected to at least one chore. We denote the collection of all MWW graphs with no lonely agents by $\MWW_{non-lonely}(\vec{v})$.}

	{
		In Section~\ref{sect_compute_Pareto}, we show that the collection $\MWW_{non-lonely}(\vec{v})$ is rich, and its superset can be computed in polynomial time even for degenerate problems if one of the parameters $n$ or $m$ is fixed. 
		Moreover, there is a natural bijection between $\MWW_{non-lonely}(\vec{v})$ and faces of the Pareto frontier.
		Hence enumeration of $\MWW_{non-lonely}(\vec{v})$ can be interpreted as computing the Pareto frontier itself and thus is of independent interest.
	}

	{
		Recall some basics about faces of convex polytopes. Consider the polytope $\U\subset \R^{n}$ of feasible utility profiles. If $h$ is a hyperplane touching the boundary of $\U$, then $\U\cap h$ is a \emph{face} of $\U$, see Figure~\ref{fig2}. This face may have an arbitrary dimension from $0$ (a vertex) to $n-1$ (a proper face of maximal dimension); see \cite{Ziegler} for the introduction to the geometry of polytopes. The Pareto frontier $\U^*$ is a union of faces. 
		\begin{figure}[h!]
			\centering
			{\includegraphics[width=11cm, clip=true, trim=0cm 0cm 0cm 0cm]{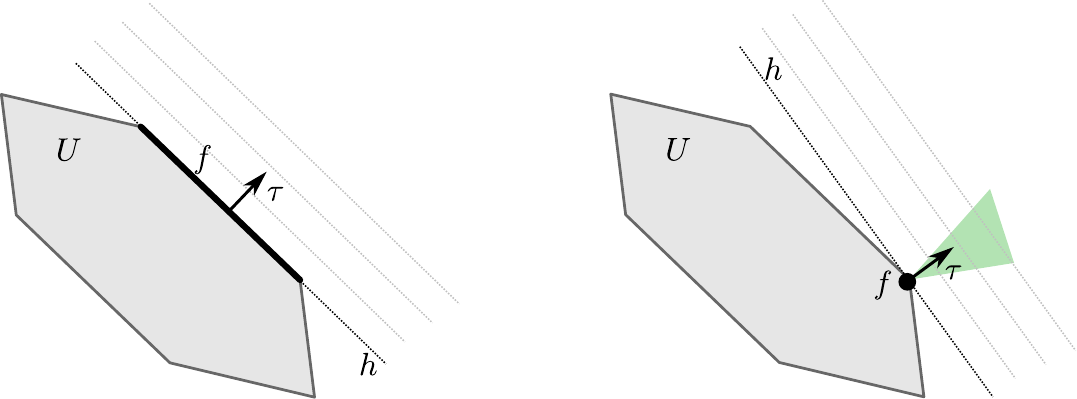}}
			\caption{One-dimensional face $f$ as the set $\vec{u} \in \U$  maximizing $\langle \tau, \vec{u} \rangle$  (left);  a zero-dimensional face, i.e.,  an extreme point (right). For faces $f$ of maximal dimension, the vector  $\tau$  is uniquely defined up to a multiplicative constant  while there is a continuum of $\tau$ for lower-dimensional faces {(the cone in the right figure).}}
			\label{fig2}
		\end{figure}
	}
	
	{
		Assume that the hyperplane $h$ is given by the equation $\left\{ \vec{u} \in \R^n: \ \langle \tau, \vec{u} \rangle = C \right\}$ and fix the sign of $\tau$ in such a way that $\U$ is contained in the half-space $\langle \tau, \vec{x}\rangle \leq C$. Then $f$ has the following dual representation: it maximizes the linear form $\langle \tau, \vec{u} \rangle$ over $\vec{u} \in \U$. The converse is also true: the set of maximizers for any non-zero $\tau$ is a face. 
	}
	
	\subsection{Profitable trading cycles and criteria of Pareto-optimality}\label{subsect_criteria}
	
	\vspace{2mm}
	
	Consider a path in a complete $([n],[m])$-bipartite digraph given by \[ 
	\P=(i_1,j_1,i_2,j_2,\ldots,i_{L}, j_{L},i_{L+1}),\qquad  \text{where}\qquad  L\geq1\,.
	\]
	  We define the product of disutilities along the path as
	\begin{equation}\label{eq_product}
		\pi(\P)=\prod_{k=1}^{L}\frac{|v_{i_k, j_k}|}{|v_{i_{k+1}, j_k}|}\,.
	\end{equation}
	A path $\P$ is a cycle if $i_{L+1}=i_1$; a cycle is \emph{simple} if no agent $i_k$ and no chore $j_k$ enter the cycle twice. 
	
	Consider an allocation $\vec{z}$ and a cycle $\C=(i_1,j_1,\ldots,i_{L+1}=i_1)$, where $L\geq 2$, such that each agent $i_k$ consumes some fraction of chore $j_k$ (i.e., $z_{i_k,j_k}>0$) for $k=1, \ldots,L$. We say that $\C$ is a \emph{profitable trading cycle} for $\vec{z}$ if $\pi(\C)>1$. 
	
	\bigskip
	
	\begin{lemma}\label{prop_crit_efficiency}
		Let $\vec{v} \in \R^{n \times m}_{<0}$ be a matrix of values and $\vec{z}$ be a feasible allocation. Then the following statements are equivalent:
		\begin{enumerate}
			\item The allocation $\vec{z}$ is Pareto optimal.
			\item The allocation {$\vec{z}$ has no} simple profitable trading cycles.\footnote{Similar characterizations of Pareto optimality are known for the house-allocation problem of ~\cite{shapley}, where $n$ indivisible goods (houses) are allocated among $n$ agents with ordinal preferences, one to one. {See~\cite{abdulkadirouglu1998random}} for ex-post efficiency and~\cite{bogomolnaia2001new} for SD-efficiency (aka ordinal efficiency).}
			\item There exists a vector of weights $\tau\in \R_{>0}^n$ such that the consumption graph {of $\vec{z}$} is a subgraph of {the MWW graph $G_\tau(\vec{v})$}.
			\item There exists a vector of weights $\tau\in \R_{>0}^n$ such that the allocation $\vec{z}$ {has the maximal weighted utilitarian welfare {$\sum_{i\in [n]}\tau_i \cdot u_i(\vec{z}_i)$ 
				among all feasible allocations.}}\footnote{The link between Pareto optimal allocations and welfare maximization has a simple geometric origin and holds for any problem with convex set $\U$ of feasible utility profiles. For any point $U$ at the boundary,  we can trace a hyperplane $h$ supporting $\U$. Hence, any $\vec{u}$ on the boundary maximizes  the linear form $\langle \tau, \vec{u'}\rangle$ over $\vec{u'}\in \U$, where $\tau$ is a normal vector to $h$. Thus, the Pareto frontier of $\U$ corresponds to $\tau$ with positive components.}
		\end{enumerate}
	\end{lemma}
	\medskip 
	
	For a more general cake-cutting setting, analogs of items $2$, $3$, and $4$ constitute Sections  8, 10, and 7 of \cite{barbanel2005geometry}. The link between Pareto optimality and weighted utilitarian welfare is classic (see~\cite{Varian74}).
	In contrast to the analogous results from~\cite{barbanel2005geometry}, Lemma~\ref{prop_crit_efficiency} has a short proof (see Appendix~\ref{app_Pareto_criteria}).
	\smallskip

	If the consumption graph {of $\vec{z}$} contains a cycle $\C$ with $\pi(\C)<1$, then by inverting the order of {nodes}, we get a profitable trading cycle. Therefore, the lemma implies the following corollary. 
	
	\bigskip
	
	\begin{corollary}\label{cor_unit_cycles}
		If an allocation $\vec{z}$ is Pareto optimal and {its consumption graph contains a cycle $\C$,} then $\pi(\C)=1$.
	\end{corollary}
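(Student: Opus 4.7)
I would argue by contradiction, formalizing the hint given in the sentence just before the statement. Suppose $\vec{z}$ is Pareto optimal and $G_{\vec{z}}$ contains a cycle $\C=(i_1,j_1,\ldots,i_L,j_L,i_1)$ with $\pi(\C)\neq 1$. First, I would reduce to the case $\pi(\C)>1$: reversing the orientation gives $\C^{-1}=(i_1,j_L,i_L,j_{L-1},\ldots,j_1,i_1)$, and after the substitution $l=L+1-k$ the formula \eqref{eq_product} telescopes to $\pi(\C^{-1})=1/\pi(\C)$. Crucially, because $\C$ sits inside the consumption graph, \emph{every} edge of $\C$---both the ``outgoing'' edges $(i_k,j_k)$ and the ``incoming'' edges $(i_{k+1},j_k)$---corresponds to a strictly positive coordinate of $\vec{z}$, so the reversed cycle also satisfies the positivity condition $z_{i_k,j_k}>0$ demanded in the definition of a trading cycle. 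Replacing $\C$ by $\C^{-1}$ if necessary, I may therefore assume $\pi(\C)>1$.

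Next, I would reduce to the case that $\C$ is simple. If some vertex is repeated, say $i_k=i_l$ with $k<l$, split $\C$ into the two shorter cycles
$\C_1=(i_k,j_k,i_{k+1},\ldots,j_{l-1},i_l)$ and $\C_2=(i_1,j_1,\ldots,i_k,j_l,i_{l+1},\ldots,j_L,i_1)$; the identification $i_k=i_l$ lets $\C_2$ pick up the edge $(i_l,j_l)=(i_k,j_l)$ cleanly. A direct bookkeeping in \eqref{eq_product} shows that the edges of $\C$ partition between $\C_1$ and $\C_2$ with matching orientations, yielding the multiplicativity $\pi(\C)=\pi(\C_1)\cdot\pi(\C_2)$. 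Since $\pi(\C)>1$, at least one factor exceeds~$1$; iterating the split produces a simple cycle $\C^{*}\subseteq G_{\vec{z}}$ with $\pi(\C^{*})>1$, all of whose edges still lie in $G_{\vec{z}}$.

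The cycle $\C^{*}$ is then a simple profitable trading cycle for $\vec{z}$, which directly contradicts the equivalence $(1)\Leftrightarrow(2)$ of Proposition~\ref{prop_crit_efficiency}. The only delicate step is the multiplicativity $\pi(\C)=\pi(\C_1)\cdot\pi(\C_2)$ at the splitting point: the formula \eqref{eq_product} interleaves the ``outgoing'' and ``incoming'' edges incident to each intermediate agent, and one must check that rerouting through the repeated vertex preserves the telescoping of numerators and denominators. Once this identity is verified, the contradiction with Proposition~\ref{prop_crit_efficiency} is immediate and the corollary follows.
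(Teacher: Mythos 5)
Your argument is correct and is essentially the paper's own: the paper proves this corollary in one line by noting that if $\pi(\C)<1$ one reverses the orientation to get $\pi(\C^{-1})=1/\pi(\C)>1$, and then a profitable trading cycle contradicts statement $(2)$ of Proposition~\ref{prop_crit_efficiency}. Your additional reduction to a simple cycle via the splitting identity $\pi(\C)=\pi(\C_1)\cdot\pi(\C_2)$ is a valid (and slightly more careful) treatment of the case of a non-simple closed walk, which the paper implicitly sidesteps by reading ``contains a cycle'' as a simple cycle of the graph.
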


	In other words, {the consumption graph of a Pareto optimal allocation} can have cycles only for matrices $\vec{v}$ satisfying certain algebraic equations. This observation is known (see the proof of Lemma~1 in \cite{BMSY_SCW})
	and motivates the following definition.
	\bigskip
	
	{\begin{definition}[Non-degenerate problems]\label{def_non_degenerate}
			We say that the matrix $\vec{v}\in \R_{<0}^{n\times m}$ is \emph{non-degenerate} if for any cycle $\C$ in the complete bipartite graph with parts $[n]$ and $[m]$, the product $\pi(\C)$ is not equal to $1$.
		\end{definition}
		\bigskip
		
		It turns out that non-degenerate problems have better algorithmic properties  (see Remark~\ref{rem_computing_all} { and~\cite{FedorErel2019}}).}
	
	\bigskip
	
	\begin{example}
		Consider {the matrix of values from Example~\ref{ex_1} and show that the equal division $\vec{\overline{z}}$ ($\overline{z}_{i,j}={1}/{2}$ for each agent and chore) is not Pareto optimal. Both agents consume both chores at $\vec{\overline{z}}$, and the cycle $\C=(1,2,2,1)$ has the product $\pi(\C)=\frac{8}{2}\cdot \frac{1}{1}=4$ and, hence, is profitable.  
			
			By Lemma~\ref{prop_crit_efficiency}, the allocation $\vec{\overline{z}}$ is not Pareto optimal. Indeed, consider the following trade along the cycle $\C$: agent~$1$ gives $\varepsilon$ amount of chore~$2$ to agent~$2$ in exchange for $2\varepsilon$ of chore~$1$. This trade is Pareto-improving: agent~$2$ remains indifferent, but the utility of agent $1$ improves by $6\varepsilon$.
			By picking the maximal $\varepsilon$ compatible with feasibility ($\varepsilon={1}/{4}$), we obtain the competitive allocation $\vec{z^2}$ from Example~\ref{ex_1}.}
		
		By {the First welfare theorem, we know that $\vec{z^2}$ is Pareto optimal. But we can also deduce this from Lemma~\ref{prop_crit_efficiency} by guessing the vector $\tau$. Since both agents consume chore $2$, we must have $-8\tau_1=-2\tau_2$. For any such vector $\tau$, the MWW graph $G_\tau$ coincides with the consumption graph of $\vec{z^2}$ and, therefore, $\vec{z^2}$ is Pareto optimal.}
		
		{Matrix $\vec{v}$ from Example~\ref{ex_1} is non-degenerate and the consumption graphs of allocations $\vec{z}$, $\vec{z}^1$, and $\vec{z}^2$ are acyclic, which reflects Corollary~\ref{cor_unit_cycles}. In contrast, the problem from  Example~\ref{ex_2} is degenerate (look at $\C=(1,1,2,2)$). In particular, the consumption graph of the competitive allocation $\vec{z^2}(t)$ for $t={1}/{2}$ contains the cycle $\C=(1,1,2,2)$ despite this allocation being  Pareto optimal by the First welfare theorem.}
		\qed 
	\end{example}

	\section{\textbf{Computing the Competitive Rule for Chores}}\label{sec_main}
	
	In this section, we formulate the main algorithmic result of the paper, discuss its implications and limitations, and present a high-level overview of the algorithm.

	\bigskip
	
	\begin{theorem}\label{main}
		Consider a chore division problem $(\vec{v}, \vec{b})$ with $n$ agents and $m$ chores, {where agents have additive utilities given by matrix $\vec{v}\in \R_{<0}^{n\times m}$ and budgets given by vector $\vec{b}\in \R_{<0}^n$.} If $m$ or $n$ are {fixed}, then
		\smallskip
		
		\begin{itemize}
			\item the set of all competitive utility profiles
			\item { a set of competitive allocations and price vectors such that for any competitive utility profile, there is an allocation with this utility profile in the set}
		\end{itemize}
		\smallskip
		
		can be computed in strongly polynomial time,\footnote{A strongly polynomial algorithm makes a polynomial (in $n$ or  $m$, depending on which of the parameters is fixed) number of elementary operations (multiplication, addition, comparison, etc.). If the input of the problem ($\vec{v}$ and $\vec{b}$) consists of rational numbers in binary representation, then 
			the amount of memory the algorithm uses is bounded by a polynomial in the length of the input. For basics of complexity theory, see~\cite{AB07}.} using  $O\left(m^{n(n-1)/{2}+3}\right)$ operations for fixed $n$, or $O\left(n^{m(m-1)/{2}+3}\right)$, for fixed $m$.
	\end{theorem}
		\bigskip

	\paragraph{\textit{\textbf{What if both $n$ and $m$ are large?}}} Theorem~\ref{main} cannot be improved when both $n$ and $m$ are large. It is known~\cite{BMSY_SCW} that the number of competitive utility profiles can be as large as $2^{\min\{n,m\}}-1$; thus, even listing all competitive utility profiles can take exponential time if both $n$ and $m$ are large. 
	
	Theorem~\ref{main} implies that for bounded  $n$ or $m$,  the number of competitive utility profiles is at most polynomial in the free parameter, which is itself an interesting complement to the exponential lower bound from~\cite{BMSY_SCW}. {Corollary~\ref{cor_number_Pareto} below provides an explicit upper bound: 
		the number of competitive utility profiles is at most} 
	$$\min\left\{\left(2m-1\right)^{\frac{n(n-1)}{2}},\ \left(2n-1\right)^{\frac{m(m-1)}{2}}\right\}.$$
	However, the exponential multiplicity of competitive allocations does not prohibit the existence of an algorithm that finds \emph{one} competitive allocation in polynomial time when both $n$ and $m$ are large.
	
	\smallskip
	
	\begin{open*}
		Is it possible to compute \emph{one competitive utility profile}\footnote{{ Computing one competitive utility profile is equivalent to computing one competitive allocation: given a profile, the corresponding competitive allocation and prices can be found in polynomial time; see Corollary~\ref{cor_complexity_U_is_compet}.}} in time polynomial in $n+m$?\footnote{{As recently shown in \cite{chaudhury2022existence}, the answer is negative in a version of the model with infinite disutilities. In such a model, competitive allocations may fail to exist, and the complexity bottleneck is checking the existence.}}  If such an algorithm exists, it will give a ``computational'' answer to the ``economic'' question posed in~\cite{BMSY17}: finding a single-valued selection of the competitive rule with attractive properties.
	\end{open*}
	
	\paragraph{\textit{\textbf{Computing All Competitive Allocations}}} Theorem~\ref{main} ensures that all competitive utility profiles will be enumerated but does not guarantee to find all the allocations for each such utility profile. 
	It turns out that here the result cannot be improved without restricting the class of preferences {(see Remark~\ref{rem_computing_all}).} 
	
	{For non-degenerate matrices $\vec{v}$  of values (Definition~\ref{def_non_degenerate}), there is only one competitive allocation per utility profile. The algorithm from Theorem~\ref{main} outputs \emph{all competitive allocations}.}
	
	{For degenerate problems, one can have a continuum of competitive allocations with the same utility profile as we saw in Example~\ref{ex_2}. For a given utility profile, the set of competitive allocations is a convex polytope, which may have an exponential number of vertices even if $n$ or $m$ are fixed (Remark~\ref{rem_computing_all}). Thus, for general problems, there is no hope of listing even the set of all extreme points of the set of competitive allocations with a given utility profile.} 
	
	\subsection{The algorithm {and proof of Theorem~\ref{main}}}\label{subsect_algorithm}
	Here we describe a {high-level} structure of the algorithm from Theorem~\ref{main}. The subroutines and underlying ideas are discussed in Sections~\ref{sect_compute_Pareto},~\ref{sect_recover_U}, and~\ref{sect_check_compet}.
	
	{The main ingredient of the algorithm is generating a rich family of graphs (Definition~\ref{def_rich}). Then the algorithm cycles over this family and tries to find a competitive utility profile and an allocation corresponding to each of the graphs; see Algorithm~\ref{algo:main}.}
	
	\SetInd{0.3cm}{0.3cm}
	\SetAlgoHangIndent{0.3cm}
	\SetAlgoSkip{smallskip}
	\SetAlgoInsideSkip{smallskip}
	\setlength{\algomargin}{1em}
	\SetNlSkip{1em}
	
	\medskip
	\SetKwFunction{rich}{Generate\_Rich\_Family\_of\_Graphs}
	\SetKwFunction{candidate}{Compute\_Candidate\_Utility\_Profile}
	\SetKwFunction{iscompetitive}{Is\_Competitive}
	\SetKwFunction{recover}{Recover\_Allocation\_and\_Prices}
	\SetKwFunction{print}{Print}
	{
		\begin{algorithm}[H]
			\DontPrintSemicolon
			\KwIn{values $\vec{v}\in \R_{<0}^{n\times m}$ and budgets $\vec{b}\in \R_{<0}^n$}
			\KwOut{all competitive utility profiles $\vec{u}$ and an allocation-price pair $(\vec{z},\vec{p})$ for each $\vec{u}$}
			$\G=$\rich{$\vec{v}$}\\ \textcolor{cadetgrey}{\tcc{Section~\ref{sect_compute_Pareto} shows how to compute a rich family in polynomial time if either $n$ or $m$ are fixed.\footnote{\vskip -0.4cm {If both $n$ and $m$ are small, one can alternatively define $\G$ to be the set of all bipartite graphs. This leads to a simpler algorithm, which, however, has exponential runtime.}}}} 
			\For {each graph $G\in\mathcal{G}$} {
				$\vec{u}=$\candidate{$G,\vec{v},\vec{b}$} \label{ln_candidate}\\
				\textcolor{cadetgrey}{\tcc{See Section~\ref{sect_recover_U}. If there is a competitive allocation with the consumption graph $G$, the output $\vec{u}$ is its utility profile (in this case, $\vec{u}$ is uniquely defined); if no such allocation exists, $\vec{u}$ is some vector, not necessarily feasible. }}
				\If {\iscompetitive{$\vec{u},\vec{v},\vec{b}$}}{\label{ln_check_compet} 
					$(\vec{z},\vec{p})=$\recover{$\vec{u},\vec{v},\vec{b}$}\\
					\textcolor{cadetgrey}{\tcc{Competitiveness of $\vec{u}$ is checked, and the allocation-price pair can be recovered via a max-flow computation, see Section~\ref{sect_check_compet}.}}
					\print{$\vec{u},\vec{z},\vec{p}$}\\ \textcolor{cadetgrey}{\tcc{Some combinations $\vec{u},\vec{z},\vec{p}$ may be printed several times.}}
				}
			}
			\caption{Computing competitive utility profiles and allocations}
			\label{algo:main}
		\end{algorithm}
	}
	\medskip 
	\proof{Proof of Theorem~\ref{main}.}
	{First, we check the correctness of Algorithm~\ref{algo:main} relying on the correctness of all the subroutines, which is established in the corresponding sections.}
	
	{Since each vector $\vec{u}$ printed by the algorithm is checked for competitiveness (line~\ref{ln_check_compet}), the output of the algorithm is \emph{a subset} of all competitive utility profiles accompanied with allocation-price pairs for each element of this subset. To prove correctness, we must check that the algorithm skips no competitive utility profile.} 
	
	{Let $\vec{u}$ be a competitive utility profile. By Theorem~\ref{th_geom}, the vector $\vec{u}$ belongs to the Pareto frontier and $u_i<0$ for all agents $i\in[n]$. The definition of a rich family implies that there exists a graph $G\in \G$ and a feasible allocation $\vec{z}$ such that $G$ is a consumption graph of $\vec{z}$ and $\vec{u}=\vec{u}(\vec{z})$. By the Pareto indifference property (Corollary~\ref{corr_Pareto_indiff}), $\vec{z}$ is a competitive allocation. We conclude that $G$ corresponds to a competitive allocation. Hence, line~\ref{ln_candidate} of the algorithm will recover the utility profile $\vec{u}$; thus, this profile will not be skipped.} 
	
	{It remains to estimate the time complexity. 
		Since the algorithm cycles over all graphs from the rich family $\G$, the size of $\G$ and the time needed to compute this family determines the overall complexity of the algorithm.} 
	
	{In Section~\ref{sect_compute_Pareto}, we construct $\G$  with the number of elements bounded both by $\left(2m-1\right)^{\frac{n(n-1)}{2}}$ and  by $\left(2n-1\right)^{\frac{m(m-1)}{2}}$ and  show
		that $\G$ can be computed in at most 
		$O\big(m^{\frac{n(n-1)}{2}+1}\big)$ operations for fixed $n$
		or in   $O\big(n^{\frac{m(m-1)}{2}+1}\big)$ for fixed $m$.}
	
	\smallskip 
	
	{By Corollaries~\ref{cor_complexity_recover_U} and~\ref{cor_complexity_U_is_compet}, the time complexity of the for-cycle is bounded by 
		$$|\G|\cdot\left(O\big(nm(n+m)\big)+O\big(n^2m^2(n+m)\big)\right).$$
		Thus the whole algorithm runs in polynomial time, and the number of operations is bounded by $O\left(m^{\frac{n(n-1)}{2}+3}\right)$ for fixed $n$ and by $O\left(n^{\frac{m(m-1)}{2}+3}\right)$ for fixed $m$.} \qed
	\endproof
	
	\section{Rich families of graphs}\label{sect_compute_Pareto}
	{The goal of this section is to construct a rich family of graphs (Definition~\ref{def_rich}) in polynomial time if either $n$ or $m$ is fixed.}

	{We begin with exploring the properties of MWW graphs (Definition~\ref{def_MWW}). We show that the set of MWW graphs encodes faces of the Pareto frontier, is rich, and is invariant with respect to switching the roles of agents and items.}
	
	{Armed with these observations, we design an algorithm enumerating a superset of MWW graphs with no lonely agents. For fixed $n$, the algorithm is built via a reduction to a simple two-agent case. For fixed $m$, we use the aforementioned invariance of MWW graphs.}
	
	\subsection{{Properties of the MWW family}}
	{Lemma~\ref{prop_crit_efficiency} almost implies richness of the set of all MWW-graphs: for any Pareto optimal allocation $\vec{z}$, there is an MWW graph containing the consumption graph of $\vec{z}$ as a \emph{subgraph}. However, the consumption graph itself may not belong to the MWW family.
		Showing richness requires finding a relation between MWW graphs and faces of the Pareto frontier.}
	
	\bigskip
	
	\begin{lemma}\label{lm_faces_and_MWW}
		There is a bijection $f\leftrightarrow G_f$ between faces of the Pareto frontier and {MWW graphs such that the utility profile $\vec{u}(\vec{z})$ of a feasible allocation $\vec{z}$ belongs to a face $f$ if and only if the consumption graph of $\vec{z}$ is a subgraph of $G_f$.} 
	\end{lemma}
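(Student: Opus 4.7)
The plan rests on the observation already used in Proposition~\ref{prop_crit_efficiency}: since $W_\tau(\vec{z})=\langle \tau,\vec{u}(\vec{z})\rangle=\sum_j\sum_i \tau_i v_{i,j}z_{i,j}$ is separable across chores, an allocation $\vec{z}$ maximizes $W_\tau$ over $\F$ iff, for every chore $j$, all mass $z_{i,j}$ is put on agents $i$ achieving $\min_{i'}\tau_{i'}|v_{i',j}|$---equivalently, iff $G_{\vec{z}}\subseteq G_\tau$. Moreover, since $W_\tau(\vec{z})$ depends on $\vec{z}$ only through $\vec{u}(\vec{z})$, the image under $\vec{u}(\cdot)$ of the set of $W_\tau$-maximizing allocations is precisely the face $f_\tau=\{\vec{u}\in\U\mid\langle\tau,\vec{u}\rangle=\max_{\U}\langle\tau,\cdot\rangle\}$, and the set of $W_\tau$-maximizers itself equals $\Z_{f_\tau}$. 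Pareto-frontier faces are exactly the faces $f_\tau$ for $\tau\in\R_{>0}^n$.

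With this in hand I will first establish the ``subgraph'' characterization. Given a face $f$ of the Pareto frontier, pick any $\tau\in\R_{>0}^n$ in the relative interior of its normal cone so that $f=f_\tau$, and define $G_f:=G_\tau$. Then $\vec{z}\in\Z_f$ iff $\vec{z}$ maximizes $W_\tau$ iff $G_{\vec{z}}\subseteq G_\tau=G_f$, proving the second assertion of the lemma.

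Next I will promote the assignment $f\mapsto G_f$ to a bijection onto $\MWW(\vec{v})$. Surjectivity is immediate from the definition of $\MWW(\vec{v})$. For injectivity, if $G_\tau=G_{\tau'}$ then the sets of $W_\tau$- and $W_{\tau'}$-maximizing allocations coincide (both equal $\{\vec{z}\mid G_{\vec{z}}\subseteq G_\tau\}$), so their images under $\vec{u}(\cdot)$ agree, i.e.\ $f_\tau=f_{\tau'}$. For well-definedness I need the converse: if $\tau,\tau'\in\R_{>0}^n$ yield the same face $f$, then $G_\tau=G_{\tau'}$. For any edge $(i,j)\in G_\tau$, construct a feasible $\vec{z}$ by assigning all of chore $j$ to agent $i$ and each other chore $j'$ entirely to some agent $i_{j'}$ with $(i_{j'},j')\in G_\tau$ (such an agent exists because the MWW graph covers every chore). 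Then $G_{\vec{z}}\subseteq G_\tau$, so $\vec{z}$ maximizes $W_\tau$; since $W_\tau$ and $W_{\tau'}$ attain their maxima on the same face $f$ and $W_{\tau'}$ is constant on $f$, the allocation $\vec{z}$ also maximizes $W_{\tau'}$, forcing $G_{\vec{z}}\subseteq G_{\tau'}$ and in particular $(i,j)\in G_{\tau'}$. The reverse inclusion is symmetric.

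No genuine obstacle is expected; the one point requiring care is ensuring all $\tau$'s lie in $\R_{>0}^n$ so that we stay on the Pareto frontier rather than on arbitrary faces of $\U$, and checking that the construction of the auxiliary allocation $\vec{z}$ is legitimate, i.e.\ that $G_\tau$ covers every chore---which is immediate because for each $j$ the minimum $\min_{i'}\tau_{i'}|v_{i',j}|$ is attained.
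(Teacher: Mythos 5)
Your proposal is correct and follows essentially the same route as the paper: the paper's proof is a one-sentence appeal to statement (3) of Proposition~\ref{prop_crit_efficiency} together with the standard correspondence between faces of $\U$ and maximizers of linear forms $\langle\tau,\cdot\rangle$ for $\tau\in\R_{>0}^n$, which is exactly the machinery you spell out. Your additional care about well-definedness and injectivity of $f\mapsto G_f$ is a legitimate filling-in of details the paper leaves implicit, not a different argument.
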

	\proof{Proof.}
	There is a one-to-one correspondence between {faces of the Pareto frontier} and solutions to  $\langle\tau, \vec{u} \rangle\to \max $, when $\tau$ ranges over $\R_{>0}^n$ (see Subsection~\ref{subsect_consumption}).
	{The equivalence between statements $(3)$ and $(4)$ of Lemma~\ref{prop_crit_efficiency}} implies the result. \qed
	\endproof
	
	\bigskip
	
	\begin{lemma}\label{lm_efficient_graphs_are_rich_for_PO}
		{The set of all MWW graphs is rich.}
	\end{lemma}
	\proof{Proof.}
	{We need to show that for any Pareto optimal utility profile $\vec{u}\in \R_{<0}^n$, there is a feasible allocation $\vec{z}$ with $\vec{u}=\vec{u}(\vec{z})$ such that the consumption graph of $\vec{z}$ is an MWW graph.}
	
	If $\vec{u}$ is a vertex (i.e., a zero-dimensional face $f$), then we consider {the MWW graph $G_f$} from Lemma~\ref{lm_faces_and_MWW} and pick any feasible allocation $\vec{z}$ with {the consumption graph} $G_f$. Then Lemma~\ref{lm_faces_and_MWW} implies that {$\vec{u}(\vec{z})$ belongs to $f$. Since $f$ consists of only one point, we get $\vec{u}(\vec{z})=\vec{u}$, and we are done.}
	
	If $\vec{u}$ is not a vertex, then we can find a face $f$ of the Pareto frontier such that $\vec{u}$ is in its relative interior  (i.e., $\vec{u} \in f$ but not a boundary point of $f$). Indeed, consider some face $f'$ containing $\vec{u}$; if $\vec{u}$ is not in its relative interior, then we can find a face $f''$ of the boundary of $f'$ such that $\vec{u} \in f''$; since the new face has a smaller dimension, after a finite number of repetitions, we either find the desired face $f$ or find out that $\vec{u}$ is a vertex. 
	
	Fix an auxiliary feasible allocation {$\vec{y}$ with the consumption graph $G_f$.} Then $\vec{u}(\vec{y})\in f$ by Lemma~\ref{lm_faces_and_MWW}. Since $\vec{u}$ belongs to the relative interior, we can represent the utility profile $\vec{u}$ as $$\vec{u} = \varepsilon \cdot \vec{u}(\vec{y})+(1-\varepsilon) \cdot \vec{u}',$$ 
	where the vector $\vec{u}'$ is given by
	$$\vec{u}'=\frac{\vec{u}-\varepsilon \cdot \vec{u}(\vec{y})}{1-\varepsilon}$$ and belongs to $f$ for $\varepsilon>0$ small enough. Consider an allocation $\vec{z}=\varepsilon \cdot \vec{y}+(1-\varepsilon)\vec{z}'$, where {$\vec{z}'$ is a feasible allocation with $\vec{u}(\vec{z'})=\vec{u}'$.} By the construction,  $\vec{u}(\vec{z})=\vec{u}$ and {the consumption graph of $\vec{z}$ coincides with  $G_f$ and thus belongs to the MWW family.} \qed
	\endproof
	\bigskip

	\begin{remark}
		By the construction, the allocation $\vec{z}$ from Lemma~\ref{lm_efficient_graphs_are_rich_for_PO} has the maximal consumption graph with respect to subgraph-inclusion {among all feasible allocations with the utility profile $\vec{u}$.} This gives an alternative interpretation of {the set of MWW} graphs as the set of {maximal consumption graphs corresponding to Pareto optimal utility profiles.}
		\qed 
	\end{remark}
	
	\bigskip
	
	\begin{example}
		{As we see in Figure~\ref{fig1}, the Pareto frontier for the matrix $\vec{v}$ from Example~\ref{ex_1} has two one-dimensional faces. The one having the competitive utility profile $\vec{u}(\vec{z^2})$ in its interior has the form $$t\cdot\footnotesize{\left(\begin{array}{c} -9 \\ 0 \end{array}\right)}+(1-t) \cdot \footnotesize{\left(\begin{array}{c} -1 \\ -2 \end{array}\right)}$$ for $t\in[0,1]$; it is composed of allocations where the first agent gets $(1, t)$ and the second agent gets $(0, 1-t)$. Hence, in the MWW graph corresponding to this face, agent~$1$ is connected to both chores, while the second one is only connected to chore~$2$; this graph originates from the weight vector $\tau=(1,4)$ orthogonal to the face (a vector orthogonal to the face).
			The vertex at the intersection of the two one-dimensional faces (a $0$-dimensional face) is represented by the MWW graph, where agent $1$ is connected to chore $1$ and agent $2$, to chore $2$ (the consumption graph of the allocation $\vec{z^1}$).} 
		
		\smallskip 
		
		{In Example~\ref{ex_1}, the consumption graph of each Pareto optimal allocation is contained in the set of MWW graphs. However, this is no longer true for the degenerate matrix of values $\vec{v}$ from Example~\ref{ex_2}. Consider the common utility profile of the family of allocations $\vec{z^2}(t)$ for $t\in[0,1]$. The consumption graphs for $t=0$ and $t=1$ do not belong to the MWW family. To see this, note that if two agents with identical preferences receive different weights $\tau_i$, then one of them is not connected to any chore in the MWW graph; if, on the other hand, the agents receive equal weights, they are connected to the same set of chores. At $\vec{z^2}$ with $t=0$, agent $1$ does not consume chore $1$, while agent $2$ does, and other way around for $t=1$. Hence, the consumption graphs of these Pareto optimal allocations are not in the MWW family. However, the consumption graph of $\vec{z^2}$ for $t\in(0,1)$ is an MWW graph corresponding to $\tau=(1,1,4)$.} 
		\qed 
	\end{example}
	\paragraph{\textbf{MWW graphs with no lonely agents}}
	{Any rich family remains rich if we exclude those graphs where some agent is not connected to any chore. 
		Indeed, the notion of richness (Definition~\ref{def_rich}) refers to Pareto-optimal utility profiles $\vec{u}$ with \emph{strictly-negative} components and, hence, 
		every agent must consume some chores at any  allocation $\vec{z}$ with $\vec{u}(\vec{z})=\vec{u}$.
	}
	{
		Recall that a graph has \emph{no lonely agents} if each agent is connected to at least one chore and that the set of all MWW graphs with no lonely agents is denoted by $\MWW_{non-lonely}(\vec{v})$.
		
		\smallskip 
		From Lemma \ref{lm_faces_and_MWW} and \ref{lm_efficient_graphs_are_rich_for_PO} we deduce the following corollary.
		
		\medskip
		
		\begin{corollary}\label{cor_MWW_non_lonely_rich}
			The family $\MWW_{non-lonely}(\vec{v})$ is rich. It corresponds to faces $f$ of the Pareto frontier with non-empty intersection $f\cap \R_{<0}^n$ under the bijection of Lemma~\ref{lm_faces_and_MWW}.
		\end{corollary}
	}
	\medskip 
	\paragraph{\textbf{{Agent-item parity of MWW graphs}}} 
	For a matrix of values $\vec{v}$ with $n$ agents and $m$ chores, consider the transposed matrix $\vec{v}^T$ where agents and chores switched their roles, so we have $m$ agents and $n$ chores. There is a natural bijection between bipartite graphs on $([n],[m])$ and $([m],[n])$, and we will not distinguish between them. 
	
	\bigskip
	
	\begin{lemma}\label{prop_agent_items_duality}
		For any $\vec{v}\in \R_{<0}^{n\times m},$ the set of MWW graphs with no lonely agents enjoys the following symmetry
		$$\MWW_{non-lonely}(\vec{v})=\MWW_{non-lonely}(\vec{v}^T).$$
	\end{lemma}
	{This symmetry can be seen as an implication of the Second welfare theorem.  Recall that MWW graphs for $\vec{v}^T$ coincide with the MBB graphs for $\vec{v}$ (Remark~\ref{rem_MBB}). Therefore, the lemma}  
	states that the class of MWW graphs (which {correspond to}  Pareto optimal allocations) coincides with the class of MBB graphs (which {correspond to} competitive allocations). {The formal proof of Lemma~\ref{prop_agent_items_duality} relies on this intuition.}
	\smallskip
	\proof{Proof of Lemma~\ref{prop_agent_items_duality}.}
	By the symmetry of the statement, it is enough to show the inclusion  $$\MWW_{non-lonely}(\vec{v})\subset\MWW_{non-lonely}(\vec{v}^T)\,.$$
	Let $\tau\in \R_{>0}^n$ be an arbitrary vector such that the MWW graph {$G=G_\tau(\vec{v})$ has no lonely agents.  We will show there is a vector $\tau'\in \R_{>0}^m$ such that $G=G_{\tau'}(\vec{v}^T)$.}
	
	Consider a feasible allocation $\vec{z}$ with the consumption graph $G$ for the non-transposed problem. By Lemma~\ref{prop_crit_efficiency}, $\vec{z}$ is Pareto optimal and {all the components of the utility profile $\vec{u}(\vec{z})$ are strictly negative} (here we use the fact that there are no lonely agents). By the Second welfare theorem, $\vec{z}$ is a competitive allocation for some {vector of prices} $\vec{p} \in \R_{<0}^m$ and budgets $b_i=\tau_i \cdot u_i$ (see the proof of Theorem~\ref{cor_welf}). 
	
	Each agent $i$ maximizes their utility on the budget constraint and hence consumes only chores $j$ with the highest ratio ${v_{i,j}}/{|p_j|}$ (Lemma~\ref{lm_MBB}). Equivalently, the consumption graph {of $\vec{z}$ is a subgraph of $G_{\tau'}(\vec{v}^T)$, where $\tau'$ is defined by $\tau'_j={1}/{|p_j|}$ for all $j$. Consequently, $G_\tau(\vec{v})$ is a subgraph of $G_{\tau'}(\vec{v}^T)$.}
	
	Let us show that these two graphs coincide. Assume the converse: there is an edge $(i,j)$ in {$G_{\tau'}(\vec{v}^T)$} that is absent in $G_\tau(\vec{v})$. Let $i'$ be an agent consuming $j$ at $z$. {Hence the edge $(i',j)$ is presented in both $G_\tau(\vec{v})$ and  {$G_{\tau'}(\vec{v}^T)$}.}
	By the definition of {$G_{\tau'}(\vec{v}^T)$,} we have 
	{$$\frac{v_{i,j}}{|p_j|}=\max_{j'\in [m]} \frac{v_{i,j'}}{|p_j'|} \quad\mbox{and}\quad \frac{v_{i',j}}{|p_j|}=\max_{j'\in [m]} \frac{v_{i',j'}}{|p_j'|}.$$
		Since agents spend the whole of their budgets on items with the optimal disutility to price ratio,} we have $v_{i,j} \cdot |b_i| / |p_b| = u_i$  and $v_{i',j} \cdot |b_{i'}| / |p_j| = u_{i'}$. We obtain the identity
	$$\frac{v_{i,j} \cdot b_i}{u_i}=\frac{v_{i',j} \cdot b_{i'}}{u_{i'}}.$$
	Taking into account the relation between $\vec{b}$ and $\tau$, we get $\tau_i \cdot v_{i,j}=\tau_{i'} \cdot v_{i',j}=\max_{i''} \tau_{i''} \cdot v_{i'',j}$, where the last equality follows from the fact that {the edge $(i',j)$ belongs to $G_\tau(\vec{v})$}. Thus, the edge $(i,j)$ must exist in $G_\tau(\vec{v})$. This is a contradiction, which completes the proof.  \qed
	\endproof

	\subsection{{Computing a rich family}}\label{subsect_alg_fix_n}
	{We describe an algorithm enumerating a rich family of graphs $\G$ in polynomial time if either the number of agents $n$ or the number of chores $m$ is fixed. In the case of $n=2$ or $m=2$, the family $\G$ coincides with the family of MWW graphs with no lonely agents and, for general $n$ and $m$, the family $\G$ contains  $\MWW_{non-lonely}$ as a subset.} 
	\medskip
	
	{\begin{lemma}\label{prop_superset}
			For $\vec{v}\in \R_{<0}^{n\times m}$, there is a rich family of graphs $\G(\vec{v})$ with the following properties:
			\medskip 
			\begin{itemize}
				\item The number of graphs in $\G$ is bounded by $\min\Bigl\{(2m-1)^{\frac{n(n-1)}{2}},\, (2n-1)^{\frac{m(m-1)}{2}}\Bigr\}$. \\
				\item Enumerating all the graphs in $\G$ takes  time $O\big(m^{\frac{n(n-1)}{2}+1}\big)$ for fixed $n$ and $O\big(n^{\frac{m(m-1)}{2}+1}\big)$ for fixed $m$.
			\end{itemize}
	\end{lemma}}
	\proof{{Proof for $n=2$.}}
	Reorder all the chores, from those that are relatively harmless to agent $1$ to those that are harmless to agent $2$: the ratios $|v_{1,j}|/|v_{2,j}|$ must be weakly increasing in $j=1,\ldots,m$.  
	Consider the following graphs:
	\medskip 
	
	\begin{itemize}
		\item \emph{$k/(k+1)$-split, for $k=1,\ldots,m-1$ {such that $|v_{1,k}|/|v_{2,k}| < |v_{1,{k+1}}| / |v_{2,{k+1}}|$:}} agent $1$ is linked to all chores $1,\ldots,k$ and agent $2$ is linked to all remaining $k+1,\ldots,m$. No other edges exist.
		\item \emph{$k$-cut, for $k=1,\ldots,m$:} agent $1$ is linked to chores $1,\ldots,k-1$, agent $2$ to chores $k+1,\ldots,m$, and all chores $j$ with  $|v_{1,j}|/|v_{2,j}| = |v_{1,k}|/|v_{2,k}|$ are connected to both agents. No other edges exist.
	\end{itemize}
	\medskip 
	
	\begin{figure}[h!]
		\centering
		\vskip -0.5 cm
		{\includegraphics[width=9cm, clip=true, trim=0cm 0cm 0cm -0.5cm]{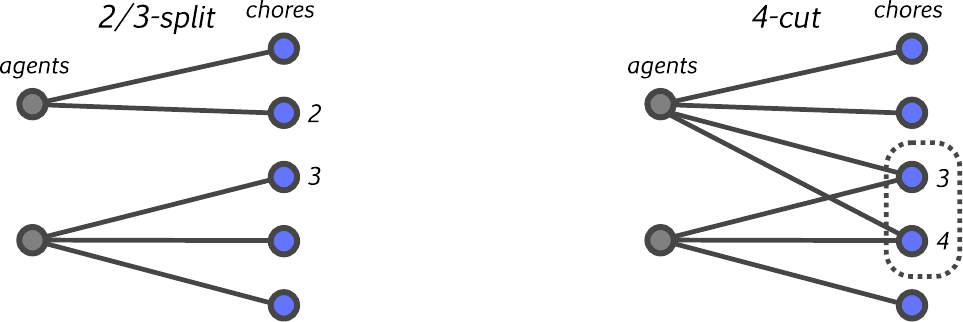}}
		\caption{{Example of $2/3$-split and $4$-cut} for a two-agent problem with $\frac{v_{1,3}}{v_{2,3}}=\frac{v_{1,4}}{v_{2,4}}$. Because of equal ratios, agents share {chores} $3$ and $4$ at the $4$-cut.}
		\label{fig3}
	\end{figure}
	{Let $\G$ be the set of all $k/(k+1)$-splits and $k$-cuts. Then $\G$ has at most $2m-1$ elements, which can be enumerated in time $O(m\cdot\log m)$ (time needed to sort the ratios) or $O(m)$ if chores are already sorted. This implies both items of the lemma for $n=2$.}
	
	{It remains to check that $\G$ is rich. For this purpose, we will demonstrate that $\G$ coincides with the family of MWW graphs with no lonely agents and, hence, is rich by Corollary~\ref{cor_MWW_non_lonely_rich}.
		To show that $ \MWW_{non-lonely}(\vec{v})\subset\G$,} consider an arbitrary graph $G_\tau$ from $\MWW_{non-lonely}(\vec{v})$. Agent $1$ is linked to chores $j$ such that $\tau_1 \cdot |v_{1,j}|\leq \tau_2 \cdot |v_{2,j}|$ or equivalently to those ``prefix'' chores with $$|v_{1,j}|/|v_{2,j}|\leq \tau_2 / \tau_1\,.$$ Similarly, an edge is traced between agent  $2$ and the ``postfix'' for which the following inequality holds: $$|v_{1,j}| / |v_{2,j}|\geq \tau_2 / \tau_1\,.$$
	Thus, if the ratio $\tau_2/\tau_1$ is equal to one of the values $|v_{1,k}| / |v_{2,k}|$, for $k=1,\ldots,m$, we get a split allocation and otherwise a cut. 
	
	{To prove the opposite inclusion $\G\subset \MWW_{non-lonely}(\vec{v})$, we pick $\tau$ such that
		\medskip 
		
		\begin{itemize} 
			\item $|v_{1,k}|/|v_{2,k}|= \tau_2 / \tau_1$  for $k$-cut and \\
			\item  $|v_{1,k}|/|v_{2,k}|< \tau_2 / \tau_1< |v_{1,k+1}|/|v_{2,k+1}|$ for $k/(k+1)$ split.
	\end{itemize}} 
	
	\medskip 
	This completes the proof for $n=2$. \qed 
	\endproof
	\medskip
	
	\proof{{Proof for $3\leq n\leq m$ via a reduction to the two-agent case.}}
	For a division problem with $n$ agents given by a matrix $\vec{v}$,  consider $n(n-1)/2$ auxiliary two-agent problems, where a pair of agents $i \ne i'$ divides the whole set of chores $[m]$ between themselves. {Let $\vec{v}^{\{i,i'\}}$ be the matrix of size  $2\times m$ composed by the two rows $\vec{v_i}$ and $\vec{v_{i'}}$ of matrix $\vec{v}$.} 
	
	\medskip 
	
	{The family of graphs $\G(\vec{v})$ is generated by Algorithm~\ref{algo:enumerate_rich}.} 
	
	\bigskip 
	
	\SetKwFunction{add}{Add}
	{
		\begin{algorithm}[H]
			\DontPrintSemicolon
			\KwIn{values $\vec{v}\in \R_{<0}^{n\times m}$ such that $3\leq n\leq m$}
			\KwOut{a rich family of graphs $\G(\vec{v})$}
			$\G(\vec{v})=\emptyset$ \textcolor{cadetgrey}{\tcc{initialization}}
			precompute $\G(\vec{v}^{\{i,i'\}})$ for each pair $i<i'$\\ 
			\textcolor{cadetgrey}{\tcc{for two-agent sub-problems we already know how to construct $\G$}}
			\For {each combination of graphs $\big(G^{\{i,i'\}}\in \G(\vec{v}^{\{i,i'\}})\big)_{i,i'\in [n], \ i< i'}$} {
				\textbf{construct a graph $G$ as follows:}
				$i\in [n]$ and $j\in[m]$ are connected by the edge if and only if this edge is presented in $G^{\{i,i'\}}$ for all agents $i<i'$\;
				\If {there are no isolated nodes in $G$}{
					Add $G$ to $\G(\vec{v})$\;
				}    
			}
			\Return{$\G$}\;
			\caption{Algorithm enumerating a rich family $\G$ for $3\leq n\leq m$}
			\label{algo:enumerate_rich}
		\end{algorithm}
	}
	
	\medskip 
	
	{Now we check that the constructed family of graphs $\G(\vec{v})$ satisfies the conditions of the lemma.
	
	For each pair of agents, the two-agent family $\G(\vec{v}^{\{i,i'\}})$} contains 
	at most $2m-1$ graphs, thus there are at most $(2m-1)^{\frac{n(n-1)}{2}}$ combinations, {and we obtain the first item of the lemma.
		For fixed $n$, cycling over all combinations requires }  $O\big(m^{\frac{n(n-1)}{2}}\big)$ operations, which for $n\geq 3$ absorbs $O(m\log m)$, the time needed to precompute $\G(\vec{v}^{\{i,i'\}})$ for all pairs of agents $i,i'$. For a given combination of two-agent graphs, $G$ can be constructed using $O(m)$ operations; {this yields the second item of the lemma.}
	
	{To ensure that $\G(\vec{v})$ is rich, we demonstrate that it contains the set $\MWW_{non-lonely}(\vec{v})$, which is rich by Corollary~\ref{cor_MWW_non_lonely_rich}. For each graph $$G_\tau(\vec{v})\in \MWW_{non-lonely}(\vec{v}),$$ we find graphs $G^{\{i,i'\}}$ such that the graph $G$ constructed by Algorithm~\ref{algo:enumerate_rich}  coincides with $G_\tau(\vec{v})$.} Pick $G^{\{i,i'\}}$ equal to the {MWW graph $G_{(\tau_i,\tau_{i'})}$ in the two-agent problem $\vec{v}^{\{i,i'\}}$}. {Since $i$ and $i'$ are connected to some chores in $G_\tau(\vec{v})$,  the graph $G_{(\tau_i,\tau_{i'})}$ has no lonely agents. Hence, $G_{(\tau_i,\tau_{i'})}$ belongs to $\G(\vec{v}^{\{i,i'\}})$.
		By the definition of $G_{(\tau_i,\tau_{i'})}$,} agent $i$ is connected to a chore $j$ in $G^{\{i,i'\}}$ if and only if $$\tau_i \cdot v_{i, j}\geq \tau_{i'} \cdot v_{i',j}\,.$$
	Therefore, an edge $(i,j)$ is traced in $G$ if and only if $\tau_i \cdot v_{i, j}\geq \tau_{i'} \cdot v_{i',j}$ for all $i'$, which, by the definition of MWW graphs, is equivalent to $G=G_\tau(\vec{v})$. {The graph $G$ has no isolated nodes by non-loneliness and hence is added to $\G$ by the algorithm.} 
	This completes the case $3\leq n\leq m$. \qed 
	\endproof
	\medskip
	
	\proof{{Proof for $n > \max\{m,2\}$ via agent-item parity.}}
	{Recall that $\vec{v}^T$ denotes the transposed matrix of values, which corresponds to the {problem obtained by agents and items switching their roles}.} 
	
	{When the number of agents exceeds the number of chores, we define $\G(\vec{v})$ to be equal to $\G(\vec{v}^T)$ and the latter family belongs to the already considered case, where the number of agents is at most the number of chores.}
	
	{Since $\G(\vec{v}^T)$ contains $\MWW_{non-lonely}(\vec{v}^T)$ and MWW graphs satisfy agent-item parity (Lemma~\ref{prop_agent_items_duality}), the set $\G(\vec{v})$ contains $\MWW_{non-lonely}(\vec{v})$ and thus is rich. The estimate on the number of elements and the run time follow directly from already considered cases.}\qed
	\endproof
	\bigskip
	
	\begin{remark} {The constructed rich family $\G$ may contain some redundant elements, namely, consumption graphs of {Pareto dominated} allocations. {Eliminating them may improve the performance of  Algorithm~\ref{algo:main} in practice.} {Graphs of Pareto dominated allocations} can be found using item~2 from Lemma~\ref{prop_crit_efficiency}: {an allocation is Pareto dominated if and only if there exists a} cycle with a  multiplicative weight above $1$ in an auxiliary bipartite graph; such cycles can be detected using, for example, a multiplicative version of the Bellman-Ford algorithm.}
		\qed 
	\end{remark}
	\medskip
	Lemma~\ref{prop_superset} and {Corollary~\ref{cor_MWW_non_lonely_rich}} imply an upper bound on the number of faces of the Pareto frontier. In Section~\ref{sect_recover_U}, we show that there is at most one competitive utility profile per face and, therefore, get the following corollary.
	\medskip
	
	{
		\begin{corollary}\label{cor_number_Pareto}
			The number of faces $f$ of the Pareto frontier with non-empty intersection $f\cap \R_{<0}^n$ and the number of competitive utility profiles (for a given vector of budgets $\vec{b}$) are both at most $$\min\left\{(2m-1)^{\frac{n(n-1)}{2}},\, (2n-1)^{\frac{m(m-1)}{2}}\right\}\,.$$
		\end{corollary}
	}
	
	\bigskip
	
	\begin{example}
		{For the matrix of values from Example~\ref{ex_1}, there are three MWW graphs with non-lonely agents depicted in Figure~\ref{fig_MWW_2_agent}: $1/2$-split, $1$-cut and $2$-cut.}
		
		{For the three-agent matrix $\vec{v}$ from Example~\ref{ex_2}, the set of MWW graphs with no lonely agents can be easily constructed by the agent-item parity. The transposed matrix $\vec{v}^T$ corresponds to a two-agent problem, where the first two chores are identical. For $\vec{v}^T$ there are $3$ different MWW graphs with no lonely agents: $2/3$-split, $1$-cut (coincides with $2$-cut), and $3$-cut. The resulting collection of graphs for $\vec{v}$ is presented in Figure~\ref{fig_MWW_2_agent}.}
		
		\begin{figure}
			\definecolor{olive}{cmyk}{0.21,0,0.56,0.58}
			\begin{center}	
				\begin{tikzpicture}[transform shape,line width=2pt,scale=0.8]
					\foreach \i  in {1,...,2}{%
						\node[draw,inner sep=0.1cm, circle , black!60, fill=black!40] (A-\i) at (0, 3-\i) [thick] {};
						\node at (-0.5, 3-\i) {\i};
						
					}
					
					\foreach \j  in {1,...,2}{%
						\node[draw,inner sep=0.1cm, circle , black!60, fill=blue!70] (C-\j) at (1.5, 3-\j) [thick] {};
						\node at (2, 3-\j) {\j};
					}
					
					\path[black!60] (A-1) edge[-] (C-1);
					\path[black!60] (A-2) edge[-] (C-2);
					
					\node at (0, 2.5) {\emph{agents}};
					\node at (1.5, 2.5) {\emph{chores}};

				\end{tikzpicture}
				\hskip 1cm
				\begin{tikzpicture}[transform shape,line width=2pt,scale=0.8]
					\foreach \i  in {1,...,2}{%
						\node[draw,inner sep=0.1cm, circle , black!60, fill=black!40] (A-\i) at (0, 3-\i) [thick] {};
						\node at (-0.5, 3-\i) {\i};
						
					}
					
					\foreach \j  in {1,...,2}{%
						\node[draw,inner sep=0.1cm, circle , black!60, fill=blue!70] (C-\j) at (1.5, 3-\j) [thick] {};
						\node at (2, 3-\j) {\j};
					}
					
					\path[black!60] (A-1) edge[-] (C-1);
					\path[black!60] (A-2) edge[-] (C-2);
					\path[black!60] (A-2) edge[-] (C-1);
					
					\node at (0, 2.5) {\emph{agents}};
					\node at (1.5, 2.5) {\emph{chores}};
				\end{tikzpicture}
				\hskip 1cm
				\begin{tikzpicture}[transform shape,line width=2pt,scale=0.8]
					\foreach \i  in {1,...,2}{%
						\node[draw,inner sep=0.1cm, circle , black!60, fill=black!40] (A-\i) at (0, 3-\i) [thick] {};
						\node at (-0.5, 3-\i) {\i};
						
					}
					\foreach \j  in {1,...,2}{%
						\node[draw,inner sep=0.1cm, circle , black!60, fill=blue!70] (C-\j) at (1.5, 3-\j) [thick] {};
						\node at (2, 3-\j) {\j};
					}
					
					\path[black!60] (A-1) edge[-] (C-1);
					\path[black!60] (A-2) edge[-] (C-2);
					\path[black!60] (A-1) edge[-] (C-2);
					
					\node at (0, 2.5) {\emph{agents}};
					\node at (1.5, 2.5) {\emph{chores}};
				\end{tikzpicture}
			\end{center}
			\caption{$\G(\vec{v})$ {for  Example~\ref{ex_1}: $1/2$-split, $1$-cut, and $2$-cut.}
				\label{fig_MWW_2_agent}
			}
		\end{figure}
		
		\begin{figure}
			\definecolor{olive}{cmyk}{0.21,0,0.56,0.58}
			\begin{center}
				\hskip 1cm
				\begin{tikzpicture}[transform shape,line width=2pt,scale=0.8]
					\foreach \i  in {1,...,3}{%
						\node[draw,inner sep=0.1cm, circle , black!60, fill=black!40] (A-\i) at (0, 4-\i) [thick] {};
						\node at (-0.5, 4-\i) {\i};
						
					}
					
					\foreach \j  in {1,...,2}{%
						\node[draw,inner sep=0.1cm, circle , black!60, fill=blue!70] (C-\j) at (1.5, 3.5-\j) [thick] {};
						\node at (2, 3.5-\j) {\j};
					}
					
					\path[black!60] (A-1) edge[-] (C-1);
					\path[black!60] (A-2) edge[-] (C-1);
					\path[black!60] (A-3) edge[-] (C-2);
					
					\node at (0, 3.5) {\emph{agents}};
					\node at (1.5, 3) {\emph{chores}};
				\end{tikzpicture}
				\hskip 1cm
				\begin{tikzpicture}[transform shape,line width=2pt,scale=0.8]
					\foreach \i  in {1,...,3}{%
						\node[draw,inner sep=0.1cm, circle , black!60, fill=black!40] (A-\i) at (0, 4-\i) [thick] {};
						\node at (-0.5, 4-\i) {\i};
						
					}
					
					\foreach \j  in {1,...,2}{%
						\node[draw,inner sep=0.1cm, circle , black!60, fill=blue!70] (C-\j) at (1.5, 3.5-\j) [thick] {};
						\node at (2, 3.5-\j) {\j};
					}
					
					\path[black!60] (A-1) edge[-] (C-1);
					\path[black!60] (A-2) edge[-] (C-1);
					\path[black!60] (A-1) edge[-] (C-2);
					\path[black!60] (A-2) edge[-] (C-2);
					\path[black!60] (A-3) edge[-] (C-2);
					
					\node at (0, 3.5) {\emph{agents}};
					\node at (1.5, 3) {\emph{chores}};
				\end{tikzpicture}
				\hskip 1cm
				\begin{tikzpicture}[transform shape,line width=2pt,scale=0.8]
					\foreach \i  in {1,...,3}{%
						\node[draw,inner sep=0.1cm, circle , black!60, fill=black!40] (A-\i) at (0, 4-\i) [thick] {};
						\node at (-0.5, 4-\i) {\i};
						
					}
					
					\foreach \j  in {1,...,2}{%
						\node[draw,inner sep=0.1cm, circle , black!60, fill=blue!70] (C-\j) at (1.5, 3.5-\j) [thick] {};
						\node at (2, 3.5-\j) {\j};
					}
					
					\path[black!60] (A-1) edge[-] (C-1);
					\path[black!60] (A-2) edge[-] (C-1);
					\path[black!60] (A-3) edge[-] (C-2);
					\path[black!60] (A-3) edge[-] (C-1);

					\node at (0, 3.5) {\emph{agents}};
					\node at (1.5, 3) {\emph{chores}};
				\end{tikzpicture}
			\end{center}
			\caption{$\G(\vec{v})$ {for  Example~\ref{ex_2} can be constructed via the agent-item parity. For the transposed problem, the set of MWW graphs with no lonely agents consists of  $2/3$-split, $1$-cut (coincides with $2$-cut), and $3$-cut.}
				\label{fig_MWW_3_agent_ex2}
			}
		\end{figure}

		{In order to illustrate Algorithm~\ref{algo:enumerate_rich}, consider a three-agent problem with three chores:
			\begin{equation}\label{eq_v_3_agents_3_chores}
				\vec{v}=\left(\begin{array}{ccc} -1 & -1 & -1 \\ -8 & -8 & -2\\  -8 & -4 & -1 \end{array}\right).
			\end{equation}
			For the pair of agents $\{1,2\}$, the two-agent sub-problem coincides with the transposed problem of Example~\ref{ex_2} and hence $|\G^{\{1,2\}}|=3$. Similarly, $|\G^{\{2,3\}}|=3$ since chores $2$ and $3$ have the same ratio ${v_{2,j}}/{v_{3,j}}$. For the sub-problem with agents $1$ and $3$, all the ratios ${v_{1,j}}/{v_{3,j}}$ are distinct and hence  $|\G^{\{1,3\}}|=5$. Therefore, the for-cycle of Algorithm~\ref{algo:enumerate_rich} repeats $45$ times (once for each combination of two-agent graphs). 
		
			Figure~\ref{fig_MWW_3_agent_3chores} illustrates how the graph for the original problem is constructed for two particular combinations of two-agent graphs. {Consider the first combination (the top row in the figure). In the two-agent problem with agents $\{1,2\}$, they cut chore $3$.  Hence, in this problem, agent $1$ is connected to the first two chores as \[
			{\max\left\{\frac{|v_{1,1}|}{|v_{2,1}|},\frac{|v_{1,2}|}{|v_{2,2}|}\right\} <\frac{|v_{1,3}|}{|v_{2,3}|}}; 
			\]
see the construction of two-agent MWW graphs from Lemma~\ref{prop_superset}.					{In the two-agent problem with agents $\{1,3\}$, they also cut chore $3$, and the first two chores get connected to the first agent as} \[
		\max\left\{	\frac{|v_{1,1}|}{|v_{3,1}|},\frac{|v_{1,2}|}{|v_{3,2}|} \right\} <\frac{|v_{1,3}|}{|v_{3,3}|}\,.
			\]
		In the two-agent problem with agents $\{2,3\}$, they cut both chores $2$ and $3$ with equal ratios \[
			\frac{|v_{2,2}|}{|v_{3,2}|}=\frac{|v_{2,3}|}{|v_{3,3}|}\,.
			\]
			Moreover, chore~$1$ is connected to agent~$2$ as $\frac{|v_{2,1}|}{|v_{3,1}|}<\frac{|v_{2,3}|}{|v_{3,3}|}$.
			
			\smallskip 
			
			Algorithm~\ref{algo:enumerate_rich} constructs a three-agent MWW graph by connecting  agent $i$ with chore $j$ whenever they are connected in each two-agent graph containing $i$. In the second combination of two-agent MWW graphs (bottom row in the figure), those of agents $\{1,2\}$ and $\{1,3\}$ remain the same, while agents $\{2,3\}$ now cut chore $1$. Hence, chores $2$ and $3$ are connected to agent $3$ as \[
			\max\left\{\frac{|v_{3,2}|}{|v_{2,2}|},\frac{|v_{3,3}|}{|v_{2,3}|} \right\}<\frac{|v_{3,1}|}{|v_{2,1}|}\,.
			\]
			In the resulting three-agent graph, the second agent is lonely, and so this graph is not included in $\G(v)$ by Algorithm~\ref{algo:enumerate_rich}.}} 
		\begin{figure}
			\begin{center}
				\begin{tikzpicture}[transform shape,line width=2pt,scale=0.8]
					\foreach \i  in {1,...,2}{
						\node[draw,inner sep=0.1cm, circle , black!60, fill=black!40] (A-\i) at (0, 4-\i) [thick] {};
						\node at (-0.5, 4-\i) {\i};
						
					}
					
					\foreach \j  in {1,...,3}{
						\node[draw,inner sep=0.1cm, circle , black!60, fill=blue!70] (C-\j) at (1.5, 4-\j) [thick] {};
						\node at (2, 4-\j) {\j};
						
					}
					
					\path[black!60] (A-1) edge[-] (C-1);
					\path[black!60] (A-1) edge[-] (C-2);
					\path[black!60] (A-1) edge[-] (C-3);
					\path[black!60] (A-2) edge[-] (C-3);
					
					\node at (0, 3.5) {\emph{agents}};
					\node at (1.5, 3.5) {\emph{chores}};
					
				\end{tikzpicture}
				\hskip 0.5cm
				\begin{tikzpicture}[transform shape,line width=2pt,scale=0.8]
					\foreach \i  in {1,3}{
						\node[draw,inner sep=0.1cm, circle , black!60, fill=black!40] (A-\i) at (0, 4-\i) [thick] {};
						\node at (-0.5, 4-\i) {\i};
						
					}
					
					\foreach \j  in {1,...,3}{
						\node[draw,inner sep=0.1cm, circle , black!60, fill=blue!70] (C-\j) at (1.5, 4-\j) [thick] {};
						\node at (2, 4-\j) {\j};
						
					}
					
					\path[black!60] (A-1) edge[-] (C-1);
					\path[black!60] (A-1) edge[-] (C-2);
					\path[black!60] (A-1) edge[-] (C-3);
					\path[black!60] (A-3) edge[-] (C-3);

					\node at (0, 3.5) {\emph{agents}};
					\node at (1.5, 3.5) {\emph{chores}};
					
				\end{tikzpicture}
				\hskip 0.5cm
				\begin{tikzpicture}[transform shape,line width=2pt,scale=0.8]
					\foreach \i  in {2,3}{
						\node[draw,inner sep=0.1cm, circle , black!60, fill=black!40] (A-\i) at (0, 4-\i) [thick] {};
						\node at (-0.5, 4-\i) {\i};
						
					}
					
					\foreach \j  in {1,...,3}{
						\node[draw,inner sep=0.1cm, circle , black!60, fill=blue!70] (C-\j) at (1.5, 4-\j) [thick] {};
						\node at (2, 4-\j) {\j};
						
					}
					
					\path[black!60] (A-2) edge[-] (C-1);
					\path[black!60] (A-2) edge[-] (C-2);
					\path[black!60] (A-2) edge[-] (C-3);
					\path[black!60] (A-3) edge[-] (C-2);
					\path[black!60] (A-3) edge[-] (C-3);
					
					\node at (0, 2.5) {\emph{agents}};
					\node at (1.5, 3.5) {\emph{chores}};
					
				\end{tikzpicture}
				\hskip 0 cm
				\begin{tikzpicture}[transform shape,line width=2pt,scale=0.8]
					\node[scale=3]  at (-2, 2.0) {$\to$};
					\foreach \i  in {1,...,3}{
						\node[draw,inner sep=0.1cm, circle , black!60, fill=black!40] (A-\i) at (0, 4-\i) [thick] {};
						\node at (-0.5, 4-\i) {\i};
						
					}
					
					\foreach \j  in {1,...,3}{
						\node[draw,inner sep=0.1cm, circle , black!60, fill=blue!70] (C-\j) at (1.5, 4-\j) [thick] {};
						\node at (2, 4-\j) {\j};
					}
					
					\path[black!60] (A-1) edge[-] (C-2);
					\path[black!60] (A-1) edge[-] (C-3);
					\path[black!60] (A-1) edge[-] (C-1);
					
					\path[black!60] (A-2) edge[-] (C-3);
					
					\path[black!60] (A-3) edge[-] (C-3);
					
					\node at (0, 3.5) {\emph{agents}};
					\node at (1.5, 3.5) {\emph{chores}};
				\end{tikzpicture}

				\vskip 0.3cm
				
				\begin{tikzpicture}[transform shape,line width=2pt,scale=0.8]
					\foreach \i  in {1,...,2}{
						\node[draw,inner sep=0.1cm, circle , black!60, fill=black!40] (A-\i) at (0, 4-\i) [thick] {};
						\node at (-0.5, 4-\i) {\i};
						
					}
					
					\foreach \j  in {1,...,3}{
						\node[draw,inner sep=0.1cm, circle , black!60, fill=blue!70] (C-\j) at (1.5, 4-\j) [thick] {};
						\node at (2, 4-\j) {\j};
						
					}
					
					\path[black!60] (A-1) edge[-] (C-1);
					\path[black!60] (A-1) edge[-] (C-2);
					\path[black!60] (A-1) edge[-] (C-3);
					\path[black!60] (A-2) edge[-] (C-3);
					
					\node at (0, 3.5) {\emph{agents}};
					\node at (1.5, 3.5) {\emph{chores}};
					
				\end{tikzpicture}
				\hskip 0.5cm
				\begin{tikzpicture}[transform shape,line width=2pt,scale=0.8]
					\foreach \i  in {1,3}{
						\node[draw,inner sep=0.1cm, circle , black!60, fill=black!40] (A-\i) at (0, 4-\i) [thick] {};
						\node at (-0.5, 4-\i) {\i};
						
					}
					
					\foreach \j  in {1,...,3}{
						\node[draw,inner sep=0.1cm, circle , black!60, fill=blue!70] (C-\j) at (1.5, 4-\j) [thick] {};
						\node at (2, 4-\j) {\j};
						
					}
					
					\path[black!60] (A-1) edge[-] (C-1);
					\path[black!60] (A-1) edge[-] (C-2);
					\path[black!60] (A-1) edge[-] (C-3);
					\path[black!60] (A-3) edge[-] (C-3);

					\node at (0, 3.5) {\emph{agents}};
					\node at (1.5, 3.5) {\emph{chores}};
					
				\end{tikzpicture}
				\hskip 0.5cm
				\begin{tikzpicture}[transform shape,line width=2pt,scale=0.8]
					\foreach \i  in {2,3}{
						\node[draw,inner sep=0.1cm, circle , black!60, fill=black!40] (A-\i) at (0, 4-\i) [thick] {};
						\node at (-0.5, 4-\i) {\i};
						
					}
					
					\foreach \j  in {1,...,3}{
						\node[draw,inner sep=0.1cm, circle , black!60, fill=blue!70] (C-\j) at (1.5, 4-\j) [thick] {};
						\node at (2, 4-\j) {\j};
						
					}
					
					\path[black!60] (A-2) edge[-] (C-1);
					\path[black!60] (A-3) edge[-] (C-1);
					\path[black!60] (A-3) edge[-] (C-2);
					\path[black!60] (A-3) edge[-] (C-3);
					
					\node at (0, 2.5) {\emph{agents}};
					\node at (1.5, 3.5) {\emph{chores}};
					
				\end{tikzpicture}
				\hskip 0 cm
				\begin{tikzpicture}[transform shape,line width=2pt,scale=0.8]
					\node[scale=3]  at (-2, 2.0) {$\to$};
					\foreach \i  in {1,...,3}{
						\node[draw,inner sep=0.1cm, circle , black!60, fill=black!40] (A-\i) at (0, 4-\i) [thick] {};
						\node at (-0.5, 4-\i) {\i};
						
					}
					
					\foreach \j  in {1,...,3}{
						\node[draw,inner sep=0.1cm, circle , black!60, fill=blue!70] (C-\j) at (1.5, 4-\j) [thick] {};
						\node at (2, 4-\j) {\j};
					}
					
					\path[black!60] (A-1) edge[-] (C-2);
					\path[black!60] (A-1) edge[-] (C-3);
					\path[black!60] (A-1) edge[-] (C-1);

					\path[black!60] (A-3) edge[-] (C-3);
					
					\node at (0, 3.5) {\emph{agents}};
					\node at (1.5, 3.5) {\emph{chores}};
				\end{tikzpicture}

			\end{center}
			\caption{Constructing {$\G(\vec{v})$ for the matrix $\vec{v}$ given by~\eqref{eq_v_3_agents_3_chores} from a combination of three two-agent MWW graphs with no lonely agents (Algorithm~\ref{algo:enumerate_rich}). The first combination leads to an MWW graph for the original problem, which corresponds to weights $\tau=(1,\frac{1}{2},1)$. The second combination results in a graph where agent $2$ is lonely. Hence this graph is not included in $\G(\vec{v})$.}
				\label{fig_MWW_3_agent_3chores}
			}
		\end{figure}
		\qed 
	\end{example}

	\section{Explicit formulas for the competitive utility profile when the consumption graph is known}\label{sect_recover_U}
	
	Suppose we are given an $([n],[m])$-bipartite graph $G$ {with no isolated nodes,} a matrix of values $\vec{v}\in\R_{<0}^{n\times m}$, and a vector of budgets $\vec{b}\in \R_{<0}^n$. Here we derive an explicit formula that expresses the vector $\vec{u}=\vec{u}(\vec{z})$ of utilities for a competitive allocation $\vec{z}$ with budgets $\vec{b}$ and the consumption graph $G_{\vec{z}}=G$ if such an allocation exists. If there is no such $\vec{z}$, the formula {gives} some vector $\vec{u} \in \R_{<0}^n$ { that may not correspond to {any feasible} allocation.}
	The question of recovering $\vec{z}$ from $\vec{u}$ is considered in Section~\ref{sect_check_compet}.
	
	First, we need some notation. Let $N^{i}$ be the set of all agents from the connected component of an agent $i$ in $G$ (including $i$). 
		For $i'\in N_i$ we denote a path connecting $i$ and $i'$ in $G$ by 
		$$\mathcal{P}_{i, i'}=(i=i_1,j_1,i_2,\ldots, i_{L+1}=i')\,.$$
		Recall that $\pi(\mathcal{P})$ is the product of disutilities along the path $\P$ (see Formula~\eqref{eq_product}). We use the convention $\pi(\mathcal{P}_{i, i})=1$. 
		
		\smallskip 
		
		In the following,  we 
denote by
\begin{align} 
\bullet \; \;\;  & \overline{\vec{z}}: \text{the allocation s.t. each chore is divided equally among agents connected to it in $G$;} \notag \\
\bullet \; \; \; & \overline{\vec{u}}: \text{the utility profile of $\overline{\vec{z}}$}\,. \label{eq:notation_auxiliary}
\end{align}
	\bigskip
	
	\begin{lemma}\label{prop_recover_z}
		Fix a division problem $(\vec{v},\vec{b})$ and a graph $G$ {with no isolated nodes.} If there exists a competitive allocation $\vec{z}$ with the consumption graph $G$, the following formula holds for $\vec{u} = \vec{u}(\vec{z})$
		\begin{equation}
			\label{eq_recovering_utilities}
			u_i=\left(\frac{b_i}{\sum_{i'\in N^{i}} b_{i'}}\right) \cdot{\sum_{i'\in N^{i}}  \pi(\mathcal{P}_{i,i'}) \cdot \overline{u}_{i'}},
		\end{equation} 
		where $\overline{\vec{u}}$ and $\overline{\vec{z}}$ are defined in \eqref{eq:notation_auxiliary}.
	\end{lemma}
	\begin{remark}\label{rem_path_independent}
		{If the allocation $\vec{z}$ fulfilling the conditions of the lemma exists, then the right-hand side of formula~\eqref{eq_recovering_utilities} does not depend on the choice of paths $\P_{i,i'}$. Indeed, for any Pareto optimal allocation, the product along a cycle in the consumption graph equals $1$ by Corollary~\ref{cor_unit_cycles}.} 
		
		{If $\vec{z}$ does not exist, the utility profile given by formula~\eqref{eq_recovering_utilities} may depend on the choice of paths and may lead to an infeasible vector $\vec{u} \in \R_{<0}^n$. } 
		\qed 
	\end{remark}
	\medskip 
	
	\proof{Proof of Lemma~\ref{prop_recover_z}.}
	{If two agents $i$ and $i'$ share a chore $j$ at a competitive allocation $\vec{z}$, then their utilities are related (see item $(2)$ of Lemma~\ref{prop_characterizations_of_CA} from the appendix): 
		\begin{equation}\label{eq_shared_chore}
			\frac{v_{i,j} \cdot b_i}{u_i}=\frac{v_{i',j}\cdot b_{i'}}{u_{i'}}.
		\end{equation}
		By iterative application of this equation, we get
		\begin{equation}\label{eq_UiandUj}
			\frac{u_i}{b_i}=\pi(\mathcal{P}_{i,i'}) \cdot \frac{u_{i'}}{b_{i'}}\qquad \mbox{for any agent $i'\in N^i$}.
		\end{equation} 
		Equations~\eqref{eq_UiandUj}} determine utilities for $i'\in N^{i}$ up to a common multiplicative factor. To find it, we need one more condition that relates the components of {$({u}_{i'})_{i'\in N^i}$.} 
	
	Denote by $C^i$ the set of {all} chores consumed by agents from $N^i$.
	These agents spend all their budgets on $C^{i}$ and consume them fully. Therefore, we get the balance equation
	\begin{equation}
		\label{eq_balance}
		\sum_{i'\in N^{i}} b_{i'} = \sum_{j\in C^{i}} p_j.
	\end{equation}
	Using the definition of $\overline{\vec{z}}$ and $\overline{\vec{u}}$ (see \eqref{eq:notation_auxiliary}) and expressing $p_j$ by Lemma~\ref{lm_price_net_utilities}, we rewrite
	the {right-hand side} as 
	$$\sum_{j\in C^{i}} p_j=\sum_{i'\in N^{i}}\sum_{j\in C^i} p_j \cdot \overline{z}_{i',j}=\sum_{i'\in N^{i}}\sum_{j\in C^i} \frac{v_{i',j} \cdot b_{i'}}{u_{i'}} \cdot \overline{z}_{i',j}.$$
	Taking the factor ${b_{i'}}/{u_{i'}}$ out of the internal sum, representing $u_{i'}$ by~\eqref{eq_UiandUj}, and comparing the first equation and the last one, we {obtain}
	$$\sum_{i'\in N^{i}} b_{i'} =\sum_{i'\in N^{i}} \left( \frac{\pi(\mathcal{P}_{i,i'})\cdot b_i}{u_i} \right) \cdot \overline{u}_{i'}.$$
	{Expressing $u_i$ from} this equation, we obtain the required identity \eqref{eq_recovering_utilities}. \qed
	\endproof
	
	\bigskip 
	
	\begin{example}\label{ex_candidates}
		For {the matrix of values $\vec{v}$ from Example~\ref{ex_1}, we have three graphs $G$ to consider (Figure~\ref{fig_MWW_2_agent}): $1/2$-split, $1$-cut, and $2$-cut. Let us compute the candidate utility profile $\vec{u}$ for each of them. The answer depends on the vector of budgets $\vec{b}=(b_1,b_2)$.}
		
		For {$1/2$-split, each agent gets their chore entirely so we obtain $\vec{u}=(-1,-2)$ for any $\vec{b}$.} 
		
		In {order to use the formula~\eqref{eq_recovering_utilities} for $1$-cut, we precompute the following quantities, recalling that $\overline{\vec{u}}$ and $\overline{\vec{z}}$ are defined in \eqref{eq:notation_auxiliary}: $$\pi(\P_{1,1})=\pi(\P_{2,2})=1, \; 
			\pi(\P_{1,2})=\frac{1}{\pi(\P_{2,1})}=\frac{|v_{1,1}|}{|v_{2,1}|}=1,\;  \overline{u}_1=-\frac{1}{2}, \quad \mbox{ and} \quad \overline{u}_2=-\frac{5}{2}\,.$$ We obtain the following utility profile $\vec{u}=\left(\frac{b_1}{b_1+b_2}\cdot (-3),\  \frac{b_2}{b_1+b_2}\cdot (-3)\right)$.}
		
		\medskip 
		
		For {$2$-cut, we do similar preparations: $$\pi(\P_{1,1})=\pi(\P_{2,2})=1, \;  
			\pi(\P_{1,2})=\frac{1}{\pi(\P_{2,1})}=\frac{|v_{1,2}|}{|v_{2,2}|}=4, \; \overline{u}_1=-5, \quad \mbox{and} \quad \overline{u}_2=-1\,.$$ Formula~\eqref{eq_recovering_utilities} gives the utility profile $\vec{u}=\left(\frac{b_1}{b_1+b_2}\cdot (-9),\  \frac{b_2}{b_1+b_2}\cdot \frac{-9}{4}\right)$.}
		
		Thus {we get three candidate utility profiles 
			$$\vec{u}\in \left\{\left(-1,-2\right),\ \left(\frac{-3}{2},\frac{-3}{2}\right),\ \left(\frac{-9}{2},\frac{-9}{8}\right) \right\} \quad\mbox{for $\vec{b}=(-3,-3)$}$$
			and only two profiles
			$$\vec{u}\in \left\{\left(-1,-2\right),\ \left({-3},\frac{-3}{2}\right) \right\}\quad\mbox{for $\vec{b}=(-2,-4)$}$$
			since $1/2$-cut and $1$-cut result in the same utility profile for the latter vector of budgets.}
		\qed 
	\end{example}

	\medskip
	\paragraph{\textbf{Algorithmic consequences.}}
	The following corollary summarises the algorithmic implications of  Lemma~\ref{prop_recover_z}.
	\begin{corollary}\label{cor_complexity_recover_U}
		{ For a given problem $(\vec{v},\vec{b})$ and a graph $G$ with no isolated nodes, the candidate utility profile $\vec{u}$ can be computed using $O(nm(n+m))$ operations even if both $n$ and $m$ are free parameters. }
	\end{corollary}
	\proof{Proof.}
	{If $G$ has a cycle $\C$ with the product $\pi(\C)\ne 1$, then $G$ cannot be a consumption graph of a competitive allocation (Remark~\ref{rem_path_independent}). Such graphs can be identified by the Bellman-Ford algorithm in time $O(nm(n+m))$.}
	
	{If the product is $1$ for any cycle, then the products $\pi(\mathcal{P}_{i,i'})$ do not depend on the choice of paths $\mathcal{P}_{i,i'}$. To compute them for all $i$ and $i'$,} it is enough to find $\pi(\mathcal{P}_{i_0,i'})$ for a fixed $i_0$ in each connected component of $G$ (this runs in time  $O(nm(n+m))$ if the Bellman-Ford algorithm for multiplicative weights is used) and then define $$\pi(\mathcal{P}_{i,i'}) = \frac{\pi(\mathcal{P}_{i_0,i'})}{\pi(\mathcal{P}_{i_0,i})}\,.$$ 
	
	The connected component $N^{i_0}$ of agent $i_0$ can be discovered ``for free'' by the Bellman-Ford algorithm while computing $\pi(\mathcal{P}_{i,i'})$. When all the ingredients are precomputed, finding $u_i$ by formula~\eqref{eq_recovering_utilities} takes $O(n)$ per agent.\qed
	\endproof

	\section{Checking that a given utility profile is competitive. Recovering allocations and prices.}\label{sect_check_compet}
	
	In this section, we consider the following problem. We are given a candidate utility profile $\vec{u}\in\R_{<0}^n$, a vector of budgets $\vec{b} \in \R_{<0}^n$ and a matrix of values $\vec{v}$ without zeros. 
	We do not know whether $\vec{u}$ is feasible or not.
	
	The goal is to check whether $\vec{u}$ can be represented as the utility profile $\vec{u}(\vec{z})$ of a competitive allocation $\vec{z}$ with budget vector $\vec{b}$ and, if the answer is positive, to find at least one such allocation and the corresponding vector of prices. {This goal is achieved using a connection to the maximum flow problem.} 
	
	\subsection{{The maximum flow problem}}\label{subsect_flow} The existence of a competitive allocation $\vec{z}$ with $\vec{u}=\vec{u}(\vec{z}) $ can be checked via a maximum flow problem. {The idea behind the construction is to represent the amount of money spent by each agent on each chore via a flow between them and to define edge capacities to capture budget constraints and the fact that the total spending on each chore cannot exceed its price and that
	each agent spends money only on chores with minimal disutility-to-price ratio. As we will see, the resulting network admits a flow where each agent entirely spends their budget if and only if $\vec{u}$ is a competitive utility profile.} 
	
	{The construction is similar to that of~\cite{devanur2002market}, which checks competitiveness of a given vector of prices. In our case, the vector of prices is not given, and we work with a candidate vector of prices $\vec{q}$ recovered from the candidate utility profile $\vec{u}$ (see below).}

	The network is constructed by adding a source node $s$ and a terminal node $t$ to the complete bipartite graph with parts $[n]$ and $[m]$: the source $s$ is connected to all the agents $[n]$ and the terminal node $t$ is connected to all the chores $[m]$.
	
	The capacity $c$ of each edge is defined as follows. For edges $(s,i)$, we set $c(s,i)=|b_i|$ for each agent $i$. For edges $(i,j)$ between agents and chores, we set $c(i,j)=+\infty$ if this edge exists in the MWW graph $G_{\tau}$ and $c(i,j)=0$ otherwise, where 
	$\tau_i={b_i}/{u_i}, \ i\in [n]\,.$
	For edges from chores to the terminal node, the capacities are defined as $c(j,t)=q_j$, where 
	{$$q_j = \left|\max_{i \in [n]} \; \tau_i \cdot  v_{i,j}\right|\,.$$}
	
	{We have that $\tau$ is proportional to the gradient of the Nash product at $\vec{u}$ and, for competitive allocations, $q_j$ equals the absolute value of the price $p_j$ (Lemma~\ref{lm_price_net_utilities}).}
	
	\bigskip 
	
	\begin{figure}[h!]
		\centering
		\vskip -0.5 cm
		{\includegraphics[width=7.5cm, clip=true, trim=0cm 0cm 0cm 0cm]{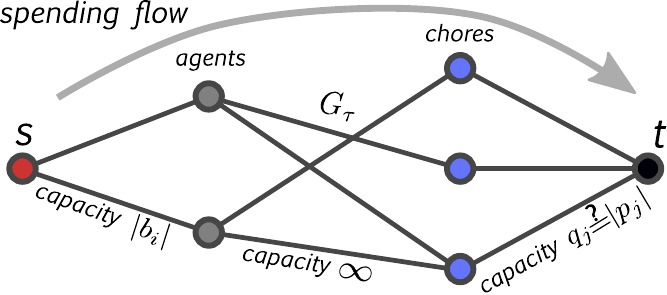}}
		\caption{An example of a network.  Only edges from $G_{\tau}$ are traced between agents and chores, and these edges have a capacity of $\infty$; the capacity of an edge connecting the source $s$  and an agent $i$ is $|b_i|$, and the capacity of an edge from chore $j$ to the terminal node is {$q_j=\left|\max_{i \in [n]} \tau_i  v_{i,j}\right|$.} If there is a competitive {allocation $\vec{z}$}  {such that its consumption graph is a subgraph of} $G_{\tau}$, then $q_j$ is equal to equilibrium price $p_j$ and the flow between agent $i$ and chore $j$ is $|p_j|z_{i,j}$, the absolute amount $i$ spends on $j$.}
		\label{fig4}
	\end{figure}

	\begin{lemma} \label{prop_MaxFlow}
		A utility profile {$\vec{u}\in \R_{<0}^n$} is competitive if and only if the following two  conditions hold:
		\smallskip 
		\begin{itemize}
			\item $\sum_{i\in [n]} |b_i|=\sum_{j\in [m]} q_j$. \\
			\item a maximum flow $\vec{F}$ has the magnitude $\sum_{i\in [n]} |b_i|$. 
		\end{itemize}
		Any such flow defines a competitive allocation $\vec{z}=\vec{z}(\vec{F})$ with $\vec{u} = \vec{u}(\vec{z})$ by $z_{i,j}=F_{i,j}/q_j$ and vice versa.
	\end{lemma} 
	\proof{Proof.}
	Consider a maximum flow $\vec{F}$ of magnitude $\sum_{i\in [n]} |b_i|=\sum_{j\in [m]} q_j$ and check that {$\vec{z}=\vec{z}(\vec{F})$ is competitive.} For all edges $e=(s,i),\  i\in [n]$ and $e=(j,t),\  j\in [m]$, we have $F_e=c(e)$ because the magnitude of $\vec{F}$ equals the capacity of the corresponding cut. Therefore,  we have 
	$$\sum_{i\in [n]} z_{i,j}=\frac{1}{q_j}\sum_{i \in [n]} F_{i,j}= \frac{1}{q_j} F_{j,t}=1$$ 
	for each chore $j$ and 
	hence $\vec{z}$ is a feasible allocation. {Since $F_{i,j}$ is zero on edges $(i,j)$ that do not belong to $G_\tau$, the consumption graph of $\vec{z}$ is a subgraph of $G_\tau$. Hence, each chore $j$ is consumed by an agent $i$ with the highest $\tau_i \cdot v_{i,j}$. This allows us to check that $\vec{u}(\vec{z})=\vec{u}$: 
		\begin{align}
			u_i(\vec{z})&=\sum_{j\in [m]} v_{i,j}\cdot z_{i,j}=\sum_{j\in [m]} \left(\frac{\max_{i'\in[n]} \tau_{i'}\cdot  v_{i',j}}{\tau_i}\right) z_{i,j}=\sum_{j\in [m]} \left(\frac{-q_j}{\tau_i}\right) z_{i,j}\\
			&=\left(\frac{-1}{\tau_i}\right) \cdot \sum_{j\in [m]} F_{i,j}=\left(\frac{-1}{\tau_i}\right) \cdot F_{s,i}= \left( \frac{-u_i}{b_i}\right) \cdot|b_i|=u_i.
		\end{align}
		Since $\vec{z}$ is a feasible allocation, we conclude that $\vec{u}(\vec{z})=\vec{u}$ is a feasible utility profile. The consumption graph of $\vec{z}$ is a subgraph of $G_\tau$ and, hence, Lemma~\ref{prop_crit_efficiency} implies that $\vec{z}$ maximizes the weighted welfare $\sum_{i}\tau_i u_i(\vec{z}_i)$ over feasible allocations. Therefore, $\vec{u}$ maximizes $\sum_{i}\tau_i  u_i$ over feasible utilities. Thus $\vec{u}$ belongs to the Pareto frontier and the hyperplane $$\left\{\vec{u'}\in \R^n\,:\, \sum_{i}\tau_i  u_i=\sum_{i}\tau_i  u_i' \right\}$$ supports the set of feasible utilities at $\vec{u}$. Since $\tau$ is proportional to the gradient of the Nash product $\prod_{i=1}^n |u_i|^{|b_i|}$, the profile $\vec{u}$ is a critical point of the Nash product. By Theorem~\ref{th_geom}, we have that $\vec{z}$ is a competitive allocation and $\vec{u}$ is a competitive utility profile.}

	Now we check the opposite direction: if $\vec{z}$ is a competitive allocation and $\vec{u}=\vec{u}(\vec{z})$, then $$\sum_{i\in [n]} |b_i|=\sum_{j\in [m]} q_j$$ and there is a maximum flow $\vec{F}$ with magnitude $\sum_{i\in [n]} |b_i|$. Indeed, as $\vec{z}$ is competitive, we have $q_j=-p_j$, where $\vec{p}$ is the competitive price vector (Lemma~\ref{lm_price_net_utilities}). Then 
	the condition $$\sum_{i\in [n]} |b_i|=\sum_{j\in [m]} q_j$$ is satisfied because the amount of money spent equals the sum of prices. Consider the flow $\vec{F}$ that represents how much money (in absolute value) each agent $i$ spends on a particular chore $j$. Define $F_{s,i}$ as the total spending of $i$ which equals $|b_i|$
	(and thus the flow saturates each edge $(s,i)$ and has the proper magnitude); $F_{i,j}=z_{i,j}|p_j|$ is the absolute value  of the amount  $i$ spends on $j$. Thus we have the balance equation $$F_{s,i}=|b_i|=\sum_{j\in [m]} F_{i,j}\,.$$ {By Theorem~\ref{th_geom}, we have that $\vec{u}$ is a critical point of the Nash product and, hence, $\vec{z}$ maximizes $\sum_{i} \tau_i u_i(\vec{z}_i)$ (recall that $\tau$ is proportional to the gradient of the Nash product at $\vec{u}$). From this we deduce that agent $i$ consumes $j$ only if $\tau_i v_{i,j}$ is maximal among all agents or, equivalently, that 
		the consumption graph of $\vec{z}$ is a subgraph of $G_\tau$. Thus capacity constraints for edges $(i,j)$ are satisfied.} For each chore $j$ the inflow $\sum_{i}F_{i,j}$ is the total number of money spent on $j$, which equals the absolute value of the  price $|p_j|$. Therefore defining $F_{j,t}=|p_j|$ we get a feasible flow of magnitude $\sum_{i\in [n]} |b_i|$, which completes the proof. \qed
	\endproof	
	
	\bigskip

	\begin{example}\label{ex_max_flow}
		We construct the corresponding network for {each of the candidate utility profiles precomputed in Example~\ref{ex_candidates}. Figure~\ref{fig_Networks_2_agent_equal_budgets} depicts the networks and edge capacities for the case of equal budgets $\vec{b}=(-3,-3)$ and Figure~\ref{fig_Networks_2_agent_unequal_budgets} represents $\vec{b}=(-2,-4)$. For equal budgets, only the third network corresponding to the utility profile $$\left(-\frac{9}{2},-\frac{9}{8}\right)$$ admits a flow of magnitude $b_1+b_2=6$. We deduce that there is only one competitive utility profile in this case.  The {maximum flow} leads to the competitive allocation $\vec{z}$ from Example~\ref{ex_1}.
		
		For $\vec{b}=(-2,-4)$, both networks admit a flow of magnitude $6$. Thus both utility profiles are competitive; the maximum flows give allocations $\vec{z^1}$ and $\vec{z}^2$ from  Example~\ref{ex_1}.} \qed 
		\begin{figure}
			\definecolor{olive}{cmyk}{0.21,0,0.56,0.58}
			\begin{center}	
				\begin{tikzpicture}[transform shape,line width=2pt,scale=0.8]

					\foreach \i  in {1,...,2}{%
						\node[draw,inner sep=0.1cm, circle , black!60, fill=black!40] (A-\i) at (0, 3-\i) [thick] {};
					}
					
					\foreach \j  in {1,...,2}{%
						\node[draw,inner sep=0.1cm, circle , black!60, fill=blue!70] (C-\j) at (1.5, 3-\j) [thick] {};
					}
					
					\path[black!60] (A-1) edge[-] (C-1);
					\path[black!60] (A-2) edge[-] (C-2);

					\node[draw,inner sep=0.1cm, circle , black!60, fill=red!70] (S) at (-1.5, 1.5) [thick] {};	
					\node[draw,inner sep=0.1cm, circle , black!60, fill=black] (T) at (3, 1.5) [thick] {};
					\node at (-1.5, 1.9) {$s$};
					\node at (3, 1.9) {$t$};

					\path[black!60] (S) edge[-] (A-1);
					\path[black!60] (S) edge[-] (A-2);
					\path[black!60] (C-1) edge[-] (T);
					\path[black!60] (C-2) edge[-] (T);
					
					\node at (-0.7, 2) {$3$};
					\node at (-0.7, 1) {$3$};
					\node at (0.75, 2.2) {$\infty$};
					\node at (0.75, 0.8) {$\infty$};
					\node at (2.2, 2) {$2$};
					\node at (2.2, 1) {$4$};
					\node[scale=1.2] at (-0.5, 2.8) {$\vec{u}=\left(-1,-2\right):$};

				\end{tikzpicture}
				\hskip 1cm
				\begin{tikzpicture}[transform shape,line width=2pt,scale=0.8]
					
					\foreach \i  in {1,...,2}{
						\node[draw,inner sep=0.1cm, circle , black!60, fill=black!40] (A-\i) at (0, 3-\i) [thick] {};
					}
					
					\foreach \j  in {1,...,2}{
						\node[draw,inner sep=0.1cm, circle , black!60, fill=blue!70] (C-\j) at (1.5, 3-\j) [thick] {};
					}
					
					\path[black!60] (A-1) edge[-] (C-1);
					\path[black!60] (A-2) edge[-] (C-2);
					\path[black!60] (A-2) edge[-] (C-1);

					\node[draw,inner sep=0.1cm, circle , black!60, fill=red!70] (S) at (-1.5, 1.5) [thick] {};	
					\node[draw,inner sep=0.1cm, circle , black!60, fill=black] (T) at (3, 1.5) [thick] {};
					\node at (-1.5, 1.9) {$s$};
					\node at (3, 1.9) {$t$};

					\path[black!60] (S) edge[-] (A-1);
					\path[black!60] (S) edge[-] (A-2);
					\path[black!60] (C-1) edge[-] (T);
					\path[black!60] (C-2) edge[-] (T);
					
					\node at (-0.7, 2) {$3$};
					\node at (-0.7, 1) {$3$};
					\node at (0.75, 2.2) {$\infty$};
					\node at (0.75, 0.8) {$\infty$};
					\node at (2.2, 2.1) {$3/2$};
					\node at (2.2, 1) {$3$};
					\node[shift={(0.60, 1.62)}, rotate=35] {$\infty$};
					\node[scale=1.2] at (-0.5, 2.8) {$\vec{u}=\left(-\frac{3}{2},-\frac{3}{2}\right):$};
				\end{tikzpicture}
				\hskip 1cm
				\begin{tikzpicture}[transform shape,line width=2pt,scale=0.8]

					\foreach \i  in {1,...,2}{%
						\node[draw,inner sep=0.1cm, circle , black!60, fill=black!40] (A-\i) at (0, 3-\i) [thick] {};
					}
					
					\foreach \j  in {1,...,2}{%
						\node[draw,inner sep=0.1cm, circle , black!60, fill=blue!70] (C-\j) at (1.5, 3-\j) [thick] {};
					}
					
					\path[black!60] (A-1) edge[-] (C-1);
					\path[black!60] (A-2) edge[-] (C-2);
					\path[black!60] (A-1) edge[-] (C-2);

					\node[draw,inner sep=0.1cm, circle , black!60, fill=red!70] (S) at (-1.5, 1.5) [thick] {};	
					\node[draw,inner sep=0.1cm, circle , black!60, fill=black] (T) at (3, 1.5) [thick] {};
					\node at (-1.5, 1.9) {$s$};
					\node at (3, 1.9) {$t$};

					\path[black!60] (S) edge[-] (A-1);
					\path[black!60] (S) edge[-] (A-2);
					\path[black!60] (C-1) edge[-] (T);
					\path[black!60] (C-2) edge[-] (T);
					
					\node at (-0.7, 2) {$3$};
					\node at (-0.7, 1) {$3$};
					\node at (0.75, 2.2) {$\infty$};
					\node at (0.75, 0.8) {$\infty$};
					\node at (2.2, 2.1) {$2/3$};
					\node at (2.2, 0.8) {$16/3$};
					\node[shift={(0.60, 1.38)}, rotate=-35] {$\infty$};
					\node[scale=1.2] at (-0.5, 2.8) {$\vec{u}=\left(-\frac{9}{2},-\frac{9}{8}\right):$};
				\end{tikzpicture}
			\end{center}
			\caption{The {networks and capacities for Example~\ref{ex_1} with budgets $\vec{b}=(-3,-3)$ and utility profiles computed in Example~\ref{ex_candidates}. The third network is the only one admitting a flow of magnitude of $6$ (the sum of the budgets). Hence, only the third utility profile is competitive.}
				\label{fig_Networks_2_agent_equal_budgets}
			}
		\end{figure}
		
		\begin{figure}
			\definecolor{olive}{cmyk}{0.21,0,0.56,0.58}
			\begin{center}	
				\begin{tikzpicture}[transform shape,line width=2pt,scale=0.8]
					
					\foreach \i  in {1,...,2}{
						\node[draw,inner sep=0.1cm, circle , black!60, fill=black!40] (A-\i) at (0, 3-\i) [thick] {};
					}
					
					\foreach \j  in {1,...,2}{
						\node[draw,inner sep=0.1cm, circle , black!60, fill=blue!70] (C-\j) at (1.5, 3-\j) [thick] {};
					}
					
					\path[black!60] (A-1) edge[-] (C-1);
					\path[black!60] (A-2) edge[-] (C-2);
					\path[black!60] (A-2) edge[-] (C-1);

					\node[draw,inner sep=0.1cm, circle , black!60, fill=red!70] (S) at (-1.5, 1.5) [thick] {};	
					\node[draw,inner sep=0.1cm, circle , black!60, fill=black] (T) at (3, 1.5) [thick] {};
					\node at (-1.5, 1.9) {$s$};
					\node at (3, 1.9) {$t$};

					\path[black!60] (S) edge[-] (A-1);
					\path[black!60] (S) edge[-] (A-2);
					\path[black!60] (C-1) edge[-] (T);
					\path[black!60] (C-2) edge[-] (T);
					
					\node at (-0.7, 2) {$2$};
					\node at (-0.7, 1) {$4$};
					\node at (0.75, 2.2) {$\infty$};
					\node at (0.75, 0.8) {$\infty$};
					\node at (2.2, 2) {$2$};
					\node at (2.2, 1) {$4$};
					\node[shift={(0.60, 1.62)}, rotate=35] {$\infty$};
					\node[scale=1.2] at (-0.5, 2.8) {$\vec{u}=\left(-3,-2\right):$};
				\end{tikzpicture}
				\hskip 1cm
				\begin{tikzpicture}[transform shape,line width=2pt,scale=0.8]

					\foreach \i  in {1,...,2}{
						\node[draw,inner sep=0.1cm, circle , black!60, fill=black!40] (A-\i) at (0, 3-\i) [thick] {};
					}
					
					\foreach \j  in {1,...,2}{
						\node[draw,inner sep=0.1cm, circle , black!60, fill=blue!70] (C-\j) at (1.5, 3-\j) [thick] {};
					}
					
					\path[black!60] (A-1) edge[-] (C-1);
					\path[black!60] (A-2) edge[-] (C-2);
					\path[black!60] (A-1) edge[-] (C-2);

					\node[draw,inner sep=0.1cm, circle , black!60, fill=red!70] (S) at (-1.5, 1.5) [thick] {};	
					\node[draw,inner sep=0.1cm, circle , black!60, fill=black] (T) at (3, 1.5) [thick] {};
					\node at (-1.5, 1.9) {$s$};
					\node at (3, 1.9) {$t$};

					\path[black!60] (S) edge[-] (A-1);
					\path[black!60] (S) edge[-] (A-2);
					\path[black!60] (C-1) edge[-] (T);
					\path[black!60] (C-2) edge[-] (T);
					
					\node at (-0.7, 2) {$2$};
					\node at (-0.7, 1) {$4$};
					\node at (0.75, 2.2) {$\infty$};
					\node at (0.75, 0.8) {$\infty$};
					\node at (2.2, 2.1) {$2/3$};
					\node at (2.2, 0.8) {$16/3$};
					\node[scale=1.2] at (-0.5, 2.8) {$\vec{u}=\left(-3,-\frac{3}{2}\right):$};
					\node[shift={(0.60, 1.38)}, rotate=-35] {$\infty$};
					
				\end{tikzpicture}
			\end{center}
			\caption{The {networks and capacities for Example~\ref{ex_1} with budgets $\vec{b}=(-2,-4)$ and utility profiles computed in Example~\ref{ex_candidates}. Both networks admit a flow of magnitude $6$, so both utility profiles are competitive.}
				\label{fig_Networks_2_agent_unequal_budgets}
			}
		\end{figure}
	\end{example}
	
	\smallskip 
	
	\paragraph{\textbf{Algorithmic consequences.}} There are many efficient algorithms for solving maximum flow problems. For example, the Edmonds-Karp algorithm has linear runtime in the number of {nodes} and quadratic in the number of edges~\cite{Tardos_book}. {Lemma}~\ref{prop_MaxFlow} and Lemma~\ref{lm_price_net_utilities} thus yield the following algorithmic corollary. 
	
	\medskip  
	
	\begin{corollary}\label{cor_complexity_U_is_compet}
		Given a chore division problem $(\vec{v}, \vec{b})$ and a vector $\vec{u} \in \R_{<0}^n$, the following tasks can be completed in time $O((m+n)m^2n^2)$:
		\begin{itemize}
			\item checking whether $\vec{u}$ is a competitive utility profile, and
			\item if the answer is positive, computing a competitive allocation $\vec{z}$ with $\vec{u}(\vec{z})=\vec{u}$  and the corresponding vector of prices $\vec{p}$.
		\end{itemize}
	\end{corollary}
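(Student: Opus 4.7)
The plan is to simply invoke Proposition~\ref{prop_MaxFlow} and Lemma~\ref{lm_price_net_utilities}: the correctness of the procedure I describe below is already packaged into these earlier results, so the only real content of the corollary is a careful runtime accounting of constructing and solving the auxiliary maximum flow problem.

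First I would precompute $q_j=\min_{i\in [n]} |b_i\cdot v_{i,j}/u_i|$ for each chore $j$, which takes $O(nm)$ operations. With the $q_j$ in hand I would check the balance condition $\sum_{i\in [n]} |b_i|=\sum_{j\in [m]} q_j$ in $O(n+m)$ additional operations; if it fails, Proposition~\ref{prop_MaxFlow} already tells us that $\vec{u}$ is not competitive and we terminate. Otherwise, I would construct the network $N(\vec{v},\vec{u},\vec{b})$: the graph $G_{\tau(\vec{u},\vec{b})}$ is obtained by tracing the edge $(i,j)$ exactly when $|b_i\cdot v_{i,j}/u_i|=q_j$, which can be read off the quantities already computed in $O(nm)$ time. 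The resulting network has $n+m+2$ vertices and $O(nm)$ edges, with capacities $|b_i|$, $+\infty$ (restricted to edges of $G_{\tau(\vec{u},\vec{b})}$), and $q_j$ respectively.

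Next I would run a standard maximum flow algorithm such as Edmonds--Karp, whose runtime is $O(V\cdot E^2)$ and thus $O((n+m)\cdot (nm)^2)=O((n+m)\,n^2m^2)$ on this network. Comparing the magnitude of a maximum flow $\vec{F}$ with $\sum_{i\in [n]} |b_i|$ determines, by Proposition~\ref{prop_MaxFlow}, whether $\vec{u}$ is competitive. In the positive case the same proposition delivers a competitive allocation $\vec{z}$ via $z_{i,j}=F_{i,j}/q_j$ in $O(nm)$ extra operations, and the price vector is then recovered in $O(m)$ time by $p_j=-q_j$ using Lemma~\ref{lm_price_net_utilities}.

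The only step whose cost is not trivially linear or quadratic is the max flow call, which dominates everything else and yields the claimed bound $O((m+n)\,m^2n^2)$. Consequently there is no genuine obstacle: the ``hard work'' has already been done in Propositions~\ref{prop_variational} and~\ref{prop_MaxFlow}, and this corollary is the straightforward algorithmic reading of them combined with the complexity of a standard max flow subroutine.
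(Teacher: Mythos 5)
Your proposal is correct and follows essentially the same route as the paper, which likewise constructs the network $N(\vec{v},\vec{u},\vec{b})$, invokes Proposition~\ref{prop_MaxFlow} together with Lemma~\ref{lm_price_net_utilities}, and charges the whole cost to an Edmonds--Karp call of complexity $O(V\cdot E^2)=O((n+m)n^2m^2)$. Your accounting of the preprocessing (computing the $q_j$, reading off $G_{\tau(\vec{u},\vec{b})}$, and recovering $\vec{z}$ and $\vec{p}$) just makes explicit the details the paper leaves implicit.
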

	\bigskip
	
	\begin{remark}[{Computing  \emph{all} competitive allocations}]\label{rem_computing_all}
		{The maximum flow algorithm reconstructs \emph{one} competitive allocation for a given competitive utility profile $\vec{u}$. Can we efficiently compute \emph{all} such allocations?
			The answer depend on whether the matrix $\vec{v}$ is degenerate (Definition~\ref{def_non_degenerate}) or not.}
		
		{For non-degenerate $\vec{v}$, for each Pareto optimal utility profile $\vec{u}$ there is a  \emph{unique} feasible allocation $\vec{z}$ such that $\vec{u}(\vec{z})=\vec{u}$ (Lemma~1 in \cite{BMSY_SCW}). Hence, the competitive allocation $\vec{z}$ recovered by the maximum flow algorithm is the only allocation with this utility profile. We conclude that for non-degenerate problems, the algorithm finds all competitive allocations.
			Note that if $n$ or $m$ are fixed, non-degeneracy of a matrix $\vec{v}$ can be checked in strongly polynomial time by inspecting each simple cycle (in a bipartite graph with parts $[n]$ and $[m]$ there are at most $(nm)^{\min\{n,m\}}$ of them\footnote{There are at most $n$ choices for the ``first'' {node} of a cycle, $m$ options for the second, $n-1$ options for the third plus one option to complete the cycle, etc. Therefore, there are not more than $(nm)^q$ cycles visiting each part of the graph at most $q$ times. For a simple cycle, $q$ cannot exceed the size of the smaller component. This leads to the upper bound $(nm)^{\min\{n,m\}}$ for the total number of simple cycles in a bipartite graph.}).}

		{For general $\vec{v}$, the set of feasible $\vec{z}$ with $\vec{u}(\vec{z})=\vec{u}$ is a convex polytope, which may contain more than one point if $\vec{v}$ is degenerate.  
			By computing a polytope $P$ we assume enumerating a finite number of points $p_1,\ldots,p_k$ such that $P$ is their convex hull.}
		
		{For degenerate $\vec{v}$, computing the set of all competitive allocations for a given competitive utility profile $\vec{u}$ becomes hard. The following example demonstrates that even for fixed $n$, the number of extreme points may be exponential in $m$, and hence enumeration of all of them requires at least an exponential number of operations.\footnote{We {claim that listing all the extreme points may take exponential time; however, this does not exclude the existence of a concise representation for this set. For example, such a representation is given by the phrase ``the set of extreme points of competitive allocations corresponding to a given utility profile $\vec{u}$''.}}}
		
		Two agents divide the set $[m]$ of identical chores: $v_{i,j}=-1$ for all $i,j$. Budgets are equal. The set of feasible utilities $\U$ is the linear segment between $(-m,0)$ and $(0,-m)$. The only critical point of the {Nash social welfare} $|u_1|\cdot|u_2|$ on this interval is the point $$\vec{u}=\left(-\frac{m}{2},-\frac{m}{2}\right)\,.$$ The set of all competitive allocations is thus the convex hull of all  $\vec{z}$ with $z_{i,j}\in\{0,1\}$  such that each row of the matrix $\vec{z}$ has $m/2$  non-zero elements, and each column has only one. 
		
		The number of such matrices is given by the binomial coefficient $C_{m}^{{m}/{2}}$, which grows exponentially for large $m$.
		\qed 
	\end{remark}

	\section{Finding approximately-fair allocations of indivisible chores}\label{sect_indiv}
	If items are indivisible, envy-free allocations may fail to exist (e.g., {if there are fewer items than} agents). This motivates considering relaxed fairness notions.
	
	Recently \cite{barman2018} described how to round a divisible competitive allocation with goods to get an approximately fair  Pareto optimal indivisible allocation. Their approach extends word by word to the case of chores. We will describe the main ingredients and refer to the original paper for the details of the construction.
	
	An allocation $\vec{z}$ is \emph{indivisible} if $z_{i,j}$ equals either $0$ or $1$. We will identify an indivisible bundle $\vec{z}_i$ with the set of those chores consumed by $i$, allowing us to use a set-theoretic notation. 
	\bigskip
	
	\begin{definition}[weighted-EF$_1^1$] 
		For a weight vector $\beta\in \R_{>0}^n$, an indivisible allocation $\vec{z}$ of chores is envy-free up
		to the removal of a chore from the first bundle and addition of another chore to the other bundle if for any pair of agents $i$ with non-empty bundle $\vec{z}_i$ and $i'$ there are two chores $j\in \vec{z}_i$ and $j'\in [m]$ such that 
		\begin{equation}\label{eq_EF11}
			\frac{u_i(\vec{z}_i\setminus\{j\})}{\beta_i}\geq  \frac{u_i(\vec{z}_{i'}\cup\{j'\})}{\beta_{i'}}\,.
		\end{equation}
	\end{definition}
	In the original definition for goods from \cite{barman2018}, an agent $i$ adds one item to their own bundle and throws away an item from the bundle of $i'$.
	
	Another popular fairness property, Proportionality, claims that every agent must prefer their bundle to equal division. Its relaxed version was introduced for the case of goods in~\cite{conitzer2017fair}. For chores, we have a mirror definition.
	\bigskip
	
	\begin{definition}[weighted-Prop$1$] 
		An indivisible allocation of chores $\vec{z}$ is weighted-proportional up to one chore  with a weight vector $\beta\in \R_{>0}^n$ if for each agent $i$ with non-empty bundle $\vec{z}_i$ there is a chore $j\in \vec{z}_i$ such that
		\begin{equation}\label{eq_prop1}
			{\frac{u_i(\vec{z}_i\setminus\{j\})}{\beta_i} \geq \frac{u_i([m])}{\sum_{i'=1}^n \beta_{i'}}\,.}
		\end{equation}
	\end{definition}
	
	The {following theorem shows that both fairness notions can be combined with Pareto optimality, and such an allocation can be efficiently computed.} 
	\bigskip
	
	\begin{theorem}\label{prop_indiv}
		{For any matrix of values with $\vec{v}\in \R_{<0}^{n\times m}$ and any vector of weights $\beta\in \R_{>0}^n$,} there exists an indivisible allocation $\vec{z}$ that is Pareto optimal in the divisible problem {(the utility profile $\vec{u}(\vec{z})$ belongs to the Pareto frontier $\U^*(\vec{v})$)} and satisfies {both} weighted-EF$_1^1$ and weighted-Prop$1$.
		If the number of agents $n$ or chores $m$ is fixed, then such an allocation can be found in strongly polynomial time.	
	\end{theorem}
The algorithmic part of the {theorem} is non-trivial only for the case of fixed $n$ (for fixed $m$, the total number of indivisible allocations $n^m$ is polynomial in $n$, and thus the exhaustive search gives a strongly polynomial algorithm).
	
	Theorem~\ref{prop_indiv} is a combination of Theorem~\ref{main} (the algorithm for computing a divisible competitive allocation) and the following theorem, which is the straightforward modification of a similar result of \cite{barman2018} for goods.
	\bigskip
	
	\begin{theorem}[Barman and Krishnamurthy \cite{barman2018}]\label{th_Barman}
		{Let $\vec{z}$ be a competitive allocation for values $\vec{v}\in \R_{<0}^{n\times m}$, budgets $\vec{b}\in \R_{<0}^n$, and prices $\vec{p}\in \R_{<0}^m$. There exists an indivisible competitive allocation $\vec{z'}$ with the same vector of prices $\vec{p}$ and a new vector of budgets $\vec{b}'\in \R_{-\leq 0}^n$ that is close to $\vec{b}$:} 
		\begin{itemize}
			\item For each agent $i$ with non-empty $\vec{z}_i'$, there are chores $j\in \vec{z}'_i$ and $j'\in [m]$ such that 
			\begin{equation}\label{eq_budgets_are_close}
				b_i-|p_j|\leq b_i'\leq b_i+|p_{j'}|.
			\end{equation}
			\item For agents $i$ with empty $\vec{z}_i'$, there is a chore  $j'\in [m]$ such that 
			\begin{equation}\label{eq_budgets_are_close_zero}
				b_i < b_i'=0 < b_i+|p_{j'}|.
			\end{equation}
		\end{itemize}
		The allocation $\vec{z}'$ can be computed in strongly polynomial time {given $\vec{z}$, $\vec{v}$, $\vec{b}$, and $\vec{p}$ as the input.} 
	\end{theorem}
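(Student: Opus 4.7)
The plan is to exploit the structure of the minimal-pain-per-buck (MPB) graph $G_{\vec{p}}(\vec{v})$. Since $\vec{z}$ is competitive at prices $\vec{p}$, its consumption graph lies inside $G_{\vec{p}}(\vec{v})$ (as noted in Subsection~\ref{subsect_fixed_m}), and any feasible integral allocation $\vec{z}'$ whose consumption graph stays inside $G_{\vec{p}}(\vec{v})$ is automatically competitive at prices $\vec{p}$ for the induced budgets $b_i' = -\sum_{j} |p_j|\, z'_{i,j}$. Hence the task reduces to producing such a $\vec{z}'$ whose induced budgets satisfy~\eqref{eq_budgets_are_close}--\eqref{eq_budgets_are_close_zero}.

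First I would iteratively eliminate cycles from the set of fractional edges $F = \{(i,j) : 0 < z_{i,j} < 1\}$. Any cycle in $F$ is automatically a cycle in $G_{\vec{p}}(\vec{v})$ and admits a spending-preserving perturbation of the type used in the proof of $(1)\Rightarrow(2)$ of Proposition~\ref{prop_crit_efficiency}: transferring $\alpha/|p_{j_k}|$ units of chore $j_k$ around the cycle $(i_1,j_1,\ldots,i_L,j_L,i_1)$ leaves every agent's spending and every chore's total allocation unchanged, and choosing $\alpha$ maximally drives at least one fractional variable to $0$ or $1$, strictly shrinking $F$. After at most $nm$ such augmentations, $F$ becomes a forest $T$ contained in $G_{\vec{p}}(\vec{v})$.

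Next I would root each tree of $T$ at an arbitrary vertex and round leaves from the bottom up. If a leaf is a chore $j$ with (unique remaining) parent agent $i$, assign $j$ entirely to $i$; since $j$ is not integrally consumed elsewhere, this only increases $i$'s spending, by an amount at most $|p_j|$. If a leaf is an agent $i'$ with parent chore $j$, round $z'_{i',j}$ to either $0$ or $1$ and push whatever portion of $j$ is not absorbed by $i'$ up the tree to its other neighbors; then $i'$'s spending changes by at most $|p_j|$. Because each agent is peeled exactly once, the cumulative deviation of its spending from $b_i$ is bounded by the price of a single chore, yielding~\eqref{eq_budgets_are_close}: when the deviation is an over-spend bounded by $|p_j|$, the responsible chore $j$ is the one fully assigned to $i$ at the peeling step and hence lies in $\vec{z}'_i$; when the deviation is an under-spend, the witness $j'$ can be taken as any chore neighboring $i$ in $G_{\vec{p}}(\vec{v})$. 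For an agent $i$ with empty final bundle, peeling produces $b_i' = 0$; since $i$'s unique tree-edge $(i,j')$ must have had $|p_{j'}| > |b_i|$ (otherwise a full chore would have been absorbed by $i$), condition~\eqref{eq_budgets_are_close_zero} follows.

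I expect the main difficulty to be sharpening the one-sided bounds in~\eqref{eq_budgets_are_close}: proving that the chore witnessing over-spending can be taken in $\vec{z}'_i$ requires careful bookkeeping of which leaf-step changes $i$'s spending in which direction, so that the rounding decisions never ``strand'' an agent whose over-spending came from a chore that was subsequently handed off elsewhere. Strong polynomiality is transparent once the structure is in place---cycle detection uses depth-first search on the bipartite subgraph $F$, each cycle-elimination reduces $|F|$ by at least one, and peeling visits each tree vertex once---so the total running time is polynomial in $n+m$ and free of dependence on input magnitudes.
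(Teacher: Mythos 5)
Your first two steps are sound and essentially coincide with the paper's: the observation that any integral allocation supported inside the MPB graph $G_{\vec{p}}(\vec{v})$ is competitive at $\vec{p}$ for the induced budgets, and the cycle elimination via spending-preserving transfers along MPB cycles (this is exactly the paper's reduction to an acyclic consumption graph). The gap is in the rounding of the resulting forest. You peel leaves bottom-up and push every un-absorbed fraction of a chore \emph{toward the root}, justifying the budget bound by ``each agent is peeled exactly once.'' But an agent's spending is not modified only when it is peeled: an internal agent $i$ with several child chores receives the leftover of \emph{each} of them when they successively become leaf chores and are ``assigned entirely to $i$,'' and these overflows accumulate. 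Concretely, let $i$ consume half of each of three chores $j_1,j_2,j_3$ priced at $-2/3$ (so $b_i=-1$), with the other half of $j_k$ consumed by a leaf agent $i_k$ of budget $-1/3$, all values equal so that every edge is MPB. If the three leaves round their shares to $0$ (which your procedure permits, and which the natural ``round up only if within budget'' rule forces, since $2/3>1/3$), all three chores are pushed up and assigned entirely to $i$, giving $b_i'=-2$ while $b_i-|p_j|=-5/3$ for every $j$, violating~\eqref{eq_budgets_are_close}; if instead all three round to $1$, then $i$ ends up empty with $|b_i|=1>|p_{j'}|=2/3$ for every $j'$, violating~\eqref{eq_budgets_are_close_zero}. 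So this is not merely the bookkeeping issue you anticipate about placing the witness chore inside $\vec{z}_i'$ --- the single-chore over-spending bound itself fails for any leaf-local rounding rule.

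The paper's procedure avoids this by going in the opposite direction: each tree is processed from a \emph{root agent} downward. The current root greedily collects its incident chores only while staying within budget, so it can only under-spend, by less than the price of the first rejected chore, which supplies the witness $j'$ in~\eqref{eq_budgets_are_close}; each rejected chore is then handed \emph{down}, entirely, to one of its child agents. Since every agent has a unique parent chore in the rooted tree, an agent can be force-fed at most one chore in this way, and that chore lies in its final bundle --- precisely the left inequality of~\eqref{eq_budgets_are_close} with witness $j\in\vec{z}_i'$. The asymmetry between pushing leftovers up (unbounded accumulation at internal nodes) and pushing them down (at most one forced chore per agent) is the missing idea in your argument.
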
	
	\proof{Proof.}
	The proof repeats the one for goods (Theorem~3.1 in \cite{barman2018}) without substantial changes. We briefly describe the two main steps.
	
	The first step is to find a feasible allocation $\vec{z^\mathrm{acyc}}$ {with
		the same utility profile
		($\vec{u}(\vec{z})=\vec{u}(\vec{z}^\mathrm{acyc})$) and an acyclic consumption graph.} This can be done for any Pareto optimal allocation $\vec{z}$: {if the consumption graph of $\vec{z}$ has cycles, each of them} can be broken by a cyclic exchange that leaves every agent indifferent (see Lemma~1 in~\cite{BMSY_SCW} and the proof of {Lemma}~\ref{prop_crit_efficiency} in Appendix~\ref{app_Pareto_criteria}). An alternative construction for competitive allocations can be found in \cite{barman2018}.\footnote{Since every Pareto optimal allocation is competitive, the two approaches are equivalent.}
For almost all matrices $\vec{v}$, this cycle-elimination step is, in fact, redundant because the consumption graph of any Pareto optimal allocation $\vec{z}$ {has no cycles (see Corollary~\ref{cor_unit_cycles}).}
	By Lemmas~\ref{lm_MBB} and~\ref{lm_price_net_utilities}, the allocation $\vec{z}^\mathrm{acyc}$ is competitive with the same  $\vec{p}$ and $\vec{b}$.
	
	The second step is rounding the acyclic competitive allocation $\vec{z}^\mathrm{acyc}$ using  Algorithm~\ref{algo:round}.
	
	\medskip
	{
		\begin{algorithm}[H]
			\DontPrintSemicolon
			\KwIn{an acyclic consumption graph $G_{\vec{z}^\mathrm{acyc}}$, prices $\vec{p}\in \R_{<0}^m$, values $\vec{v}\in \R_{<0}^{n\times m}$, budgets $\vec{b}\in \R_{<0}^n$}
			\KwOut{an indivisible competitive allocation $\vec{z'}$}
			$\vec{z}'=0$;  $G=G_{\vec{z}^\mathrm{acyc}}$ \textcolor{cadetgrey}{\tcc{initialization}}
			\For {all chores $j$ connected to only one agent $i$ in $G$} {
				$\vec{z}_{i}'=\vec{z}_{i}'\cup \{j\}$\;
				chore $j$ is deleted from $G$\;
			}
			pick a root agent in each connected component
			and orient all edges from roots to leaves\;
			\While {$G$ has at least one agent $i$ with no incoming edges}{
				\While{there is a chore $j$ connected to $i$ such that $p_{j}+\sum_{j'\in \vec{z}_i'} p_{j'}  \geq b_i$}{
					$\vec{z}_{i}'=\vec{z}_{i}'\cup \{j\}$\;
					eliminate chore $j$ from $G$\;
				}
				\For {all remaining chores $j$ connected to $i$ in $G$} {				  $\vec{z}_{i'}'=\vec{z}_{i'}'\cup \{j\}$, where $i'$ is the agent that $j$ points to\;
					delete chore $j$ from $G$\;
				}
				agent $i$ is eliminated from $G$\;
			}
			\Return{$\vec{z}'$}\;
			\caption{Rounding of a competitive allocation}
			\label{algo:round}
		\end{algorithm}
	}
	\medskip
	{The allocation $\vec{z}'$ output by Algorithm~\ref{algo:round}} is competitive for prices $\vec{p}$ and budgets $b_i'=\sum_{j\in \vec{z}_i'}p_j.$ {Indeed, each agent $i$ consumes only those items that they consumed at the original allocation $\vec{z}^\mathrm{acyc}$} and thus the {MBB conditions from Lemma~\ref{lm_MBB}} are fulfilled. 
	
	By the construction, if $b_i<b_i'$ then there is always a chore $j$ linked to $i$ in $G_{\vec{z}^\mathrm{acyc}}$ but not allocated to $i$ at $\vec{z}_i'$ and such that $b_i'-|p_{j}|\leq b_i$; this gives the right-hand side of inequalities~\eqref{eq_budgets_are_close} and~\eqref{eq_budgets_are_close_zero}. Similarly, if $b_i> b_i'$, eliminating the last chore allocated to $i$ from their bundle changes the sign of the inequality, thus giving the left-hand side of~\eqref{eq_budgets_are_close}.  \qed
	\endproof
	\medskip
	Let us check that Theorems~\ref{th_Barman} and~\ref{main} together imply Theorem~\ref{prop_indiv}.
	\medskip
	
	\proof{Proof of Theorem~\ref{prop_indiv}.}
	By Theorem~\ref{main} we can compute a competitive  allocation $\vec{z}$ with the vector of budgets $\vec{b}=-\beta$ and  Theorem~\ref{th_Barman} shows that from $\vec{z}$ 
	we can build an {indivisible} competitive allocation $\vec{z'}$ with budgets satisfying~\eqref{eq_budgets_are_close} and~\eqref{eq_budgets_are_close_zero}. This allocation is Pareto optimal by the {First welfare theorem ({Theorem}~\ref{cor_welf}).} 
	
	For EF$_{1}^{1}$, consider an agent $i$ with non-empty $\vec{z}_i'$ and some agent $i'$. By~\eqref{eq_budgets_are_close}, we pick  $j\in \vec{z}_i'$ and $j'\in [m]$ such that $\sum_{c\in \vec{z}_i'\setminus\{j\}}p_c\geq -\beta_i$ and similarly $\sum_{c\in \vec{z}_{i'}'\cup\{j'\}}p_c\leq -\beta_j$. Therefore, a bundle $$\frac{\beta_i}{\beta_j}\Bigl(\vec{z}_{i'}'\cup\{j'\}\Bigr)\cup\{j\}$$ has a lower price than $\vec{z}_i'$. Thus inequality~\eqref{eq_EF11} follows from the fact that agent $i$ maximizes their utility among all bundles with a lower price. 
	
	To prove Prop$1$, we observe that the bundle $\frac{\beta_i}{\sum_{i'=1}^n \beta_{i'} }[m]$ has  price $-\beta_i$. By~\eqref{eq_budgets_are_close}, we can find $j\in \vec{z}_i'$ such that the price of $\vec{z}_i'\setminus\{j\}$ is higher than $-\beta_i$. Therefore, the bundle  $$\frac{\beta_i}{\sum_{i'=1}^n \beta_{i'} }[m]\cup\{j\}$$ is cheaper than bundle  $\vec{z}_i'$. Inequality \eqref{eq_prop1} follows since agent $i$ maximizes their utility on the budget constraint. \qed
	\endproof
	
	\smallskip 
	
	Next, we illustrate in an example with 2~agents and 3~chores  how to obtain the input required by Algorithm~\ref{algo:round} and then trace the execution of the algorithm.
	\smallskip

	\begin{example} \label{eg:example_for_alg3}
	{Let $[n] = \{1, 2\}$ and $[m] = \{1, 2, 3\}$, with budgets $\vec{b} = (-1, -2)$ and values $$\vec{v} =  \left(\begin{array}{ccc} -1 & -3 & - \frac{160}{17} \\  -2 & -5 & -15 \end{array}\right)\,.$$ 
			We first compute a competitive allocation and prices for the fractional problem, where the chores are divisible. 
			
			We observe that the chores are already sorted in weakly increasing order of the ratios $|v_{1,j}| / |v_{2,j}|$ for $j \in [m]$.
			We will compute a ``rich'' family of consumption graphs, each of which will be a  bipartite graph with parts $[n]$ and $[m]$. The edges of each graph will be obtained via  
		Lemma~\ref{prop_superset}, which constructs a rich family using $k/(k+1)$-splits and $k$-cuts as follows:}
		\medskip 
		\begin{enumerate}
			\item \emph{$k/(k+1)$-split, for $k=1,\ldots,m-1$ such that $|v_{1,k}|/|v_{2,k}| < |v_{1,{k+1}}| / |v_{2,{k+1}}|$: agent $1$ is linked to all chores $\{1,\ldots,k\}$ and agent $2$ is linked to all remaining $\{k+1,\ldots,m\}$. No other edges exist. }
			 We get the following graphs: \\ 
			\begin{enumerate}
				\item $k=1$: agent $1$ is linked to chore $1$; agent $2$ is linked to chores $2$ and $3$.
				\item $k=2$: agent $1$ is linked to chores $1$ and $2$; agent $2$ is linked to chore $3$. \\
			\end{enumerate}
			\item \emph{$k$-cut, for $k=1,\ldots,m$: agent $1$ is linked to chores $\{1,\ldots,k-1\}$, agent $2$ is linked to chores $\{k+1,\ldots,m\}$, while all chores $j$ with  $|v_{1,j}|/|v_{2,j}| = |v_{1,k}|/|v_{2,k}|$ are connected to both agents. No other edges exist. } Since, in our instance, the ratios $|v_{1,j}|/|v_{2,j}| $ are distinct for each $j$, we get the  graphs: \\ 
			\begin{enumerate}
				\item $k=1$: agent $2$ is linked to chores $2$ and $3$; both agents are linked to chore  $1$.
				\item $k=2$: agent $1$ is linked to chore $1$, agent $2$ is linked to chore $3$, and  both agents are linked to chore $2$. 
				\item $k=3$: agent $1$ is linked to chores $1$ and $2$; both agents are linked to chore $3$.
			\end{enumerate}
		\end{enumerate}
		\medskip 
		Suppose $\vec{z}$ is a competitive allocation. Let  $\vec{p}$ be the corresponding price vector. By Lemma~\ref{lm_price_net_utilities}, for each agent $i$ and chore $j$, the following constraints hold with respect to the tuple $(\vec{z}, \vec{p})$:
		\medskip 
$$				    
		    			p_j = \frac{v_{i,j} \cdot b_i}{u_i(\vec{z}_i)} \ \ \mbox{ if } \ \  z_{i,j}  > 0 \quad \mbox{ and } \quad p_j \geq \frac{ v_{i,j} \cdot b_i}{u_i(\vec{z}_i)} \ \  \mbox{ if } \ \ z_{i,j} = 0. 
		    			\eqno{(\bigstar)}
$$		\medskip 
		
		Checking condition $(\bigstar)$ in cases 1.a, 1.b, 2.a, 2.b, and 2.c above, we observe that case 2.c  yields a  bipartite graph compatible with a competitive allocation. This bipartite graph---denoted $G_{\vec{z}}^{\mathrm{acyc}}$---has parts $[n]$ and $[m]$, with agent-chore edges $\{(1,1), (1,2), (1,3), (2,3)\}$; see Figure~\ref{fig_G_acyc} for an illustration.
		
		\vspace{-1mm}
		\begin{figure}[h!]
			\centering
			
				\begin{tikzpicture}[transform shape,line width=2pt,scale=0.8]
					\foreach \i  in {1,2}{
						\node[draw,inner sep=0.1cm, circle , black!60, fill=black!40] (A-\i) at (0, 3.5-\i) [thick] {};
						\node at (-0.5, 3.5-\i) {\i};
						
					}
					
					\foreach \j  in {1,...,3}{
						\node[draw,inner sep=0.1cm, circle , black!60, fill=blue!70] (C-\j) at (1.5, 4-\j) [thick] {};
						\node at (2, 4-\j) {\j};
						
					}
					
					\path[black!60] (A-1) edge[-] (C-1);
					\path[black!60] (A-1) edge[-] (C-2);
					\path[black!60] (A-1) edge[-] (C-3);
					\path[black!60] (A-2) edge[-] (C-3);

					\node at (0, 3) {\emph{agents}};
					\node at (1.5, 3.5) {\emph{chores}};
					
				\end{tikzpicture}

			\caption{{The consumption graph $G_{\vec{z}}^{\mathrm{acyc}}$ for the values and budgets of Example~\ref{eg:example_for_alg3}. \label{fig_G_acyc}}}
		\end{figure}
		
		By solving the constraints in  $(\bigstar)$, we obtain  the  following  allocation, utilities, and prices: 

		\medskip 
		
		\begin{itemize}
			\item Allocation: $z_{1,1}  = 1, z_{1,2} = 1, z_{1,3} =  0.05$ and  $z_{2,1}  = 0, z_{2,2} = 0, z_{2,3} =  0.95$. \\ 
			\item Utilities: $u_1(\vec{z}_1)  = - {76}/{17} \; \; \; \mbox{and} \; \; \; u_2(\vec{z}_2) = -14.25$.\\
			\item Prices: $p_1  = \frac{v_{1,1} \cdot b_1}{u_1(\vec{z}_1)}  =  - \frac{17}{76}; \;\; \; p_2 = \frac{v_{1,2} \cdot b_1}{u_1(\vec{z}_1)} = - \frac{51}{76}; \;\; \; p_3 = \frac{v_{1,3} \cdot b_1}{u_1(\vec{z}_1)} = \frac{v_{2,3} \cdot b_2}{u_2(\vec{z}_2)} = - \frac{160}{76}$. \\
		\end{itemize}
		We now trace the execution of Algorithm 3 on this instance. The algorithm takes as input the values $\vec{v}$, the budgets $\vec{b}$, the consumption graph $G_{\vec{z}}^{\mathrm{acyc}}$, and the prices $\vec{p}$ above.
		
		 The first step of the algorithm is to initialize an allocation $\vec{z}'$ to zero (i.e., $\vec{z}_1' = \vec{z}_2' = \vec{0}$) and a graph $G$ to $ G_{\vec{z}}^{\mathrm{acyc}}$ (Line 1). 
		 
		 Then each chore $j$ connected to only one agent $i$ in the graph $G$ is assigned to agent $i$, and the chore $j$ is deleted from $G$. These are chores $1$ and $2$, which are only connected to agent $1$. Thus we set $\vec{z}'_{1,1} = 1$ and $\vec{z}'_{1,2} = 1$; then the vertices corresponding to chores $1$ and $2$ are deleted from the graph, together with   edges $(1,1)$ and $(1,2)$. The resulting graph $G$ now has  parts $\{1,2\}$ and $\{3\}$, with edges $\{(1,3), (2,3)\}$ (Lines 2-4);  part (a) of  Figure~\ref{fig:eg7.1_reduced_graph} shows an illustration.
		 \vspace{-2mm}
		 
		 	\begin{figure}[h!]
		 	
		 	\centering 

		 	\begin{tikzpicture}[transform shape,line width=2pt,scale=0.8]
					\foreach \i  in {1,2}{
						\node[draw,inner sep=0.1cm, circle , black!60, fill=black!40] (A-\i) at (0, 3.5-\i) [thick] {};
						\node at (-0.5, 3.5-\i) {\i};
						
					}

					\foreach \j  in {3,...,3}{
						\node[draw,inner sep=0.1cm, circle , black!60, fill=blue!70] (C-\j) at (1.5, 4-\j) [thick] {};
						\node at (2, 4-\j) {\j};
						
					}

					\path[black!60] (A-1) edge[-] (C-3);
					\path[black!60] (A-2) edge[-] (C-3);

					\node at (0, 3) {\emph{agents}};
					\node at (1.5, 2) {\emph{chores}};
					\node at (0.5, 0.5) {(a)};

				\end{tikzpicture}
		 	\hskip 1.5cm
		 	\begin{tikzpicture}[transform shape,line width=2pt,scale=0.8]
					\foreach \i  in {1,2}{
						\node[draw,inner sep=0.1cm, circle , black!60, fill=black!40] (A-\i) at (0, 3.5-\i) [thick] {};
						\node at (-0.5, 3.5-\i) {\i};
						
					}
					\node[draw,inner sep=0.1cm, circle , black!60, fill=red!80]  at (A-2) [thick] {};
					
					\foreach \j  in {3,...,3}{
						\node[draw,inner sep=0.1cm, circle , black!60, fill=blue!70] (C-\j) at (1.5, 4-\j) [thick] {};
						\node at (2, 4-\j) {\j};
						
					}

					\path[black!60] (A-1) edge[stealth-] (C-3);
					\path[black!60] (A-2) edge[-stealth] (C-3);

					\node at (0, 3) {\emph{agents}};
					\node at (1.5, 2) {\emph{chores}};
					\node at (0.5, 0.5) {(b)};

				\end{tikzpicture}
		 	\caption{{Figure (a) shows the graph $G$ obtained after deleting the chores connected to only one agent in  $G_{\vec{z}}^{\mathrm{acyc}}$. Figure (b) shows the same graph $G$ after the edges are oriented with agent $2$ designated as the root. \label{fig:eg7.1_reduced_graph}}}
		 \end{figure}
		 
		 \vspace{-2mm}
		 Next, Algorithm~\ref{algo:round} designates a root agent in each connected component of $G$ and orients all the edges from the roots to the leaves. Let agent $2$ be the root. Orienting the edges gives a directed graph with directed edges $2\to 3$ and $3 \to 1$. The resulting directed graph is illustrated in part (b) of Figure~\ref{fig:eg7.1_reduced_graph}.

		Then, while the graph $G$ has at least one agent $i$ with no incoming edges, and there is at least one chore $j$ connected with $i$ such that $p_j + \sum_{j' \in \vec{z}_{i}'} p_{j'} \geq b_i$, the chore $j$ is added to the allocation of agent $i$ (Lines 6-13). In our case, agent $i=2$ and chore $j=3$ meet these conditions  since 
		\smallskip 
		
			\begin{itemize}
				\item agent $2$ has no incoming edge,
				\item chore $3$ is connected to agent $2$, and 
				\item  $p_2 + \sum_{j' \in \vec{z}_{2}'} p_{j'} \geq b_2$ since $p_3 = - {160}/{76}$, $b_2 = -2$, and $\sum_{j' \in \vec{z}_{2}'} p_{j'} = 0$. 
			\end{itemize}
		\smallskip 
		  
		The algorithm adds chore $3$ to  the allocation of agent $2$, i.e., $\vec{z}_{2,3}' = 1$. Then chore $3$ is deleted from the graph (Line 9). No other chores are connected to agent $2$, and so agent $2$ is also deleted from the graph (Line 13).
		
		Finally, the while loop (Line 6) detects no other agent without incoming edges since agent $1$ has an incoming edge, and the algorithm finishes execution. Agent $1$ gets chores $1$ and $2$, while agent $2$ gets chore $3$.	
		\qed
	\end{example}
\color{black}
	
	\section{Concluding remarks}\label{sect_conclusions}
	
	Our main contribution is the new approach for computing competitive allocations for economies, where the set of competitive allocations may be disconnected. 
	
	\medskip 
	
	Several directions remain open. First, whether one { competitive allocation of chores} can be computed in polynomial time {if neither $n$ nor $m$ are fixed remains unknown.} Second, we have seen that computing all competitive utility profiles is not harder for economies with chores than computing the Pareto frontier. This could be an example of a general effect, and we expect that {our approach} extends to other division problems where the Pareto frontier can be computed in polynomial time.
	
	\medskip
	
	\textbf{Mixed Problems}. { The extension of our approach to the case of goods is straightforward and simpler, conceptually and in implementation, than existing algorithms.}
We expect that
		our approach also generalizes to mixed problems with goods and chores.
	
	\medskip
	
	\textbf{Fair Assignment and {constrained economies.}} In the fair assignment problem, the set of feasible allocations is restricted to those that satisfy an additional ``lottery'' constraint: for any agent { $i\in [n]$ we have $\sum_{j\in [m]} z_{i,j}={m}/{n}$ (for $n=m$ this means that each agent $i$ receives a lottery on $[m]$).} The competitive rule for fair assignment was studied by~\cite{HZ}, and almost forty years later, the first algorithm was proposed { in~\cite{Tardos} with the same performance as our algorithm for chores (polynomial in $n$ for fixed $m$ and in $m$ for fixed $n$).}
	{The algorithm  of~\cite{Tardos} is based on the black box of the cell enumeration technique, while our approach could plausibly give an alternative explicit construction.}

    {More generally, agents may have individual caps on the consumption of an item $z_{i,j}\leq C_{i,j}$ or have caps on total consumption $\sum_{j\in [m]} z_{i,j}\leq C_i$. Computing competitive allocations in these constrained economies is an open problem, which perhaps can be attacked using our approach.
	The main question is whether, in these settings, there is a polynomial algorithm enumerating faces of the Pareto frontier as in Section~\ref{sect_compute_Pareto}.}

	\medskip
	
	{ \textbf{Other applications.} The technique of computing the Pareto frontier seems to be a useful tool beyond applications to the competitive rule. 
		It was recently used by~\cite{FedorErel2019} to construct fair Pareto optimal allocations with a minimal number of shared items.  
		
		In general, the approach can be used for building efficient algorithms when a certain objective function (e.g., social welfare or the number of shared goods) is to be {optimized} over the set of Pareto optimal allocations under fairness constraints.
	}

\newpage 
	
\nocite{*}
\addcontentsline{toc}{section}{\protect\numberline{}References}%

\bibliographystyle{alpha}

\bibliography{main}

 \newpage
	\appendix
		\section{Characterization of competitive allocations}\label{app_sect_characterization}
		This auxiliary section starts from well-known formulas relating competitive price vectors and utilities (see \cite{AGT_book} chapter 5 and \cite{BMSY_SCW}). Using them, we derive three characterizations of competitive allocations that do not involve prices: by a system of inequalities, as maximizers of a linear objective, and as critical points of the {Nash social welfare}. 
		These results are not new (proved in \cite{BMSY_SCW} for equal budgets), but since we use them throughout the paper, we present short proofs here for the reader's convenience. 
		\subsection{Formulas for equilibrium prices}\label{app_subsect_MBB}
		At a competitive allocation $\vec{z}$, the {bundle} $\vec{z}_i$ of an agent $i$ maximizes their utility $u_i(\vec{z}_i)$ on a budget constraint $\sum_{j=1}^m p_j z_{i,j}\leq b_i$. Therefore, she consumes only those chores that have maximal utility to price ratio $v_{i,j}/|p_j|$ {(in the case of goods, this is known as the Maximal Bang per Buck property).} 
		
		The first-order conditions {for the utility-maximization on the budget constraint} imply the following characterization of competitive allocations.
		\begin{lemma}\label{lm_MBB}
			Fix $\vec{v}$ and $\vec{b}$ with strictly negative elements. A feasible allocation $\vec{z}$ is competitive for a vector of prices $\vec{p}\in \R_{<0}^n$ if and only if
			\medskip 
			\begin{itemize}
				\item {{MBB} condition:} $z_{i,j}>0 \Rightarrow \frac{v_{i,j}}{|p_j|}\geq \frac{v_{i,c}}{|p_c|}$ for all chores $c\in[m]$. \\
				\item {Budget exhaustion:} $b_i=\sum_{j=1}^m p_j z_{i,j}$.
			\end{itemize}
		\end{lemma}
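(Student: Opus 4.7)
The plan is to recognize that each agent's individual utility-maximization problem is a linear program equivalent to a fractional knapsack instance, and then apply standard LP optimality arguments in both directions. The only real subtlety is bookkeeping with signs, since values, prices, and budgets are all strictly negative.

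First I would reformulate the agent's decision problem in absolute values. Agent $i$'s budget set $\{\vec{x}\geq \vec{0} : \sum_j p_j x_j \leq b_i\}$ rewrites as $\{\vec{x}\geq \vec{0} : \sum_j |p_j| x_j \geq |b_i|\}$, and maximizing $u_i(\vec{x})$ is the same as minimizing $\sum_j |v_{i,j}| x_j$. This is a fractional-knapsack-type LP: its optimum loads mass only on chores attaining the minimum ratio $|v_{i,j}|/|p_j| = v_{i,j}/p_j$, and the constraint binds at optimality. So any optimal $\vec{z}_i$ must (i) only place positive mass on chores with minimal ratio, and (ii) satisfy $\sum_j |p_j| z_{i,j} = |b_i|$, i.e., $p(\vec{z}_i)=b_i$.

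For the ``only if'' direction I would argue by a local swap. Suppose $\vec{z}$ is competitive but some $z_{i,j}>0$ with a chore $c$ satisfying $v_{i,c}/p_c < v_{i,j}/p_j$. Shifting an $\varepsilon>0$ amount of chore $j$ off of $\vec{z}_i$ and adding $\delta = |p_j|\varepsilon/|p_c|$ of chore $c$ keeps the total price-absorption unchanged, but changes utility by $\varepsilon|p_j|\bigl(v_{i,j}/p_j - v_{i,c}/p_c\bigr)>0$, contradicting optimality of $\vec{z}_i$ in the budget set. Likewise, if $p(\vec{z}_i)<b_i$ strictly (i.e., $|p(\vec{z}_i)|>|b_i|$), uniformly shrinking $\vec{z}_i$ by a small factor strictly increases $u_i$ while remaining feasible, giving budget exhaustion.

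For the ``if'' direction I would give a direct comparison. Denote by $\lambda_i>0$ the common minimum ratio $|v_{i,j}|/|p_j|$ over chores $j$ with $z_{i,j}>0$. For any bundle $\vec{x}\geq \vec{0}$ with $p(\vec{x})\leq b_i$, the MPB condition yields $|v_{i,j}|\geq \lambda_i |p_j|$ for every $j$, so
\[
|u_i(\vec{x})| \;=\; \sum_j |v_{i,j}|\, x_j \;\geq\; \lambda_i \sum_j |p_j|\, x_j \;\geq\; \lambda_i |b_i|,
\]
whereas MPB together with budget exhaustion give $|u_i(\vec{z}_i)| = \lambda_i \sum_j |p_j| z_{i,j} = \lambda_i |b_i|$. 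Hence $u_i(\vec{z}_i)\geq u_i(\vec{x})$, so $\vec{z}_i$ is demand-optimal, and $\vec{z}$ is competitive with prices $\vec{p}$.

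The only ``hard part'' here is really just carefully tracking sign conventions: the inequality $p(\vec{x})\leq b_i$ becomes a lower bound once absolute values are taken, and the maximizer of $u_i$ over a half-space becomes a minimizer of $|u_i|$. Once that reformulation is made, both directions reduce to one-line fractional-knapsack arguments.
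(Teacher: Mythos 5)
Your proof is correct and follows essentially the same route as the paper, which states Lemma~\ref{lm_MBB} without detail as a consequence of the first-order conditions for each agent's demand problem; your fractional-knapsack reformulation (minimize $\sum_j |v_{i,j}|x_j$ subject to $\sum_j |p_j|x_j\geq |b_i|$) is exactly the standard argument being invoked, with the sign bookkeeping done correctly. The local-swap and uniform-shrinking steps for the ``only if'' direction and the $\lambda_i$-comparison for the ``if'' direction are all sound, so this is a valid fleshing-out of what the paper leaves implicit.
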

		From this lemma, one can easily deduce formulas for prices in terms of $\vec{z}$ and $\vec{v}$.
		\begin{lemma}\label{lm_price_net_utilities}
			Consider a competitive allocation $\vec{z}$ for a division problem with a matrix of values $\vec{v}$ and a budget vector $\vec{b}$ having strictly negative elements.
			Then for any agent $i$ and chore $j$ we have:
			\medskip 
			\begin{itemize}
				\item $p_j=\frac{v_{i,j}b_i}{u_i(\vec{z}_i)}$ if $z_{i,j}> 0$.\\
				\item $p_j\geq \frac{v_{i,j}b_i}{u_i(\vec{z}_i)}$ if $z_{i,j}=0$.
			\end{itemize}
		\end{lemma}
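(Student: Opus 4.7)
The plan is to read both formulas directly off Lemma~\ref{lm_MBB} by isolating a single ``pain-per-buck'' scalar for each agent. Specifically, I would define, for each agent $i$,
\[
\lambda_i \;=\; \min_{c\in[m]} \frac{v_{i,c}}{p_c}.
\]
Since $\vec{v}$ and $\vec{p}$ are both strictly negative, $\lambda_i>0$, and the MPB condition in Lemma~\ref{lm_MBB} states that any chore with $z_{i,j}>0$ attains this minimum: $v_{i,j}/p_j=\lambda_i$.

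The key calculation is to re-express $u_i(\vec{z}_i)$ in terms of $\lambda_i$ using budget exhaustion. Summing over the support of $\vec{z}_i$,
\[
u_i(\vec{z}_i) \;=\; \sum_{j:\,z_{i,j}>0} v_{i,j}\, z_{i,j} \;=\; \lambda_i \sum_{j:\,z_{i,j}>0} p_j\, z_{i,j} \;=\; \lambda_i\, b_i,
\]
where the last equality uses the budget-exhaustion clause. Hence $\lambda_i = u_i(\vec{z}_i)/b_i$, and the first assertion of the lemma follows immediately from $v_{i,j}/p_j=\lambda_i$ whenever $z_{i,j}>0$: rearranging gives $p_j = v_{i,j}\,b_i/u_i(\vec{z}_i)$.

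For chores with $z_{i,j}=0$, the MPB inequality reads $v_{i,j}/p_j \geq \lambda_i = u_i(\vec{z}_i)/b_i$. Multiplying both sides by $p_j<0$ reverses the sign, giving $v_{i,j}\leq p_j\, u_i(\vec{z}_i)/b_i$; then multiplying by $b_i/u_i(\vec{z}_i)>0$ (both numerator and denominator are negative) preserves the direction and yields $p_j\geq v_{i,j}\,b_i/u_i(\vec{z}_i)$, as claimed.

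There is essentially no serious obstacle here; the only thing to watch is that all the quantities involved ($v_{i,j}$, $p_j$, $b_i$, $u_i(\vec{z}_i)$) are negative, so one must be careful about which multiplications preserve or reverse inequalities. Given Lemma~\ref{lm_MBB}, the proof is a couple of lines of sign-bookkeeping.
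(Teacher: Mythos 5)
Your proposal is correct and follows essentially the same route as the paper: both proofs use Lemma~\ref{lm_MBB} to observe that every consumed chore attains the minimal pain-per-buck ratio, combine this with budget exhaustion to get $u_i(\vec{z}_i)/b_i=\min_{c}|v_{i,c}|/|p_c|$, and then read off the equality and inequality cases. Your version just makes the sign bookkeeping explicit where the paper works with absolute values; there is nothing to add or fix.
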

		\proof{Proof.}
		By Lemma~\ref{lm_MBB}, agent $i$ spends their budget $b_i$ on chores $j$ with  maximal $v_{i,j}/|p_j|$. Therefore $i$'s utility  is  $u_i(\vec{z}_i)=|b_i|\max_{j\in [m]} v_{i,j}/|p_j|$. It follows that
		\medskip 
		
		\begin{itemize}
			\item ${u_i(\vec{z}_i)}/{|b_i|}= v_{i,j}/|p_j|$ if $z_{i,j}>0$ \\ 
			\item ${u_i(\vec{z}_i)}/{|b_i|}\leq v_{i,j}/|p_j|$, otherwise.
		\end{itemize} 
		\medskip 
		This completes the proof. \qed
		\endproof
		\medskip 
		
		As a corollary of Lemma~\ref{lm_price_net_utilities}, we see that {the utility profile $\vec{u}(\vec{z})$ of a competitive allocation $\vec{z}$} has strictly negative components and that $\vec{z}$ and $\vec{b}$ together uniquely determine $\vec{p}$.
		\subsection{Analogue of the Eisenberg-Gale result and other characterizations}\label{app_subsect_EisenbergGale}
		
		\begin{lemma}\label{prop_characterizations_of_CA}
			Fix a matrix $\vec{v}$ and a vector of budgets $\vec{b}$, both with strictly negative components, and consider an allocation $\vec{z}$. The following statements are equivalent:
			\medskip 
			
			\begin{enumerate}
				\item The allocation $\vec{z}$ is competitive.
				\item (\textbf{characterization by inequalities}) The utility profile $\vec{u}(\vec{z})$ has strictly negative components and $z_{i,j}>0$ implies
				\begin{equation}\label{eq_characterization_by_inequalities}
					\frac{v_{i,j}b_i}{u_i(\vec{z}_i)}\geq \frac{v_{{i'}j}b_{i'}}{u_{i'}(\vec{z}_{i'})}\quad \mbox{for all} \  \ i'\in[n].
				\end{equation}
				\item (\textbf{variational characterization}) The utility profile $\vec{u}(\vec{z})$ has strictly negative components and  
		{$$\vec{z}\in\argmax_{\scriptsize{\mbox{feasible }}\vec{y}} \sum_{i=1}^n  \frac{b_i}{u_i(\vec{z}_i)}\cdot u_i(\vec{y}_i),$$
				i.e., 
				$\vec{z}$ has the highest weighted utilitarian welfare $W_{\tau}$  among all feasible allocations, where the  weights $\tau$ are given by $\tau_i={b_i}/{u_i(\vec{z}_i)}$}  
				\item (\textbf{analogue of the Eisenberg-Gale characterization}) The vector $\vec{u}(\vec{z})$ has strictly negative components, belongs to the Pareto frontier $\U^*$, and is a critical point of the {Nash social welfare} {$\prod_{i=1}^n \left|u_i\right|^{|b_i|}$} on { the set of feasible utilities $\U$.}
			\end{enumerate}
		\end{lemma}
		\proof{Proof.}
		We will show that $(1)\Leftrightarrow(2)$, $(2)\Leftrightarrow(3)$, $(3)\Leftrightarrow(4)$.
		\begin{itemize}
			\item $(1)\Leftrightarrow(2)$: For competitive allocation $z$, inequalities~\eqref{eq_characterization_by_inequalities} hold because by Lemma~\ref{lm_price_net_utilities} we have $$\frac{v_{i,j}b_i}{u_i(\vec{z}_i)}=p_j\geq\frac{v_{{i'}j}b_{i'}}{u_{i'}(\vec{z}_{i'})}.$$
			
			In the opposite direction, if an allocation $z$ satisfies~\eqref{eq_characterization_by_inequalities} then we can define a ``candidate'' price vector by $$p_j=\frac{v_{i,j}b_i}{u_i(\vec{z}_i)}$$ for an agent $i$ that consumes a non-zero amount of $j$. Then~\eqref{eq_characterization_by_inequalities} implies {MBB} conditions of Lemma~\ref{lm_MBB}. The budget is exhausted since
			$$\sum_{j=1}^m p_j z_{i,j}=\sum_{j=1}^m \frac{v_{i,j}b_i}{u_i(\vec{z}_i)}z_{i,j}=\frac{b_i}{u_i(\vec{z}_i)}\sum_{j=1}^m v_{ij }z_{i,j}=b_i.$$ By Lemma~\ref{lm_MBB}, $\vec{z}$ is competitive.
			\item  $(2)\Leftrightarrow(3)$: An allocation $\vec{z}$ maximizes the weighted utilitarian welfare $W_\tau$ if and only if each chore $j$ is given to an agent with maximal weighted utility $\tau_i v_{i,j} $. Taking into account the definition of $\tau$, we see that these conditions are equivalent to the family of inequalities~\eqref{eq_characterization_by_inequalities}.
			
			\item  $(3)\Leftrightarrow(4)$:  The tangent hyperplane $h$ to the level curve of the {Nash social welfare} at $\vec{u}(\vec{z})$ is given by the equation $\vec{u}':\ \langle\vec{u}', \tau \rangle = \langle \vec{u}(\vec{z}), \tau \rangle$ (indeed,  $\tau$ is the gradient of  {$\ln\left(\prod_{i=1}^n \left|u_i\right|^{|b_i|}\right)$} and thus is orthogonal to the level curve). The hyperplane $h$ supports $\U$ iff $\vec{u}=\vec{u}(\vec{z})$ is either  minimum or maximum of $\langle \vec{u}, \tau \rangle $. The condition $\vec{u}(\vec{z})\in \U^*$ rules out the scenario with the minimum. Thus, $\vec{u}(\vec{z})$ is a critical point of the {Nash social welfare} on the Pareto frontier if and only if it maximizes $\langle\vec{u},\tau\rangle$ over the feasible set $\U$. Rewriting this condition in terms of the allocation $\vec{z}$, we get~$(3)$. \qed
		\end{itemize}
		\endproof
		
		\section{Criteria of Pareto optimality}\label{app_Pareto_criteria}
		Here we prove Lemma~\ref{prop_crit_efficiency}; the statement is repeated for the reader's convenience. 
		\medskip
		
		\begin{lemma*}[Lemma~\ref{prop_crit_efficiency}.]
			Let $\vec{v} \in \R^{n \times m}_{<0}$ be a matrix of values and $\vec{z}$ a feasible allocation. Then the following statements are equivalent:
			\begin{enumerate}
				\item The allocation $\vec{z}$ is Pareto optimal.
				\item The allocation {$\vec{z}$ has no} simple profitable trading cycles.
				\item There exists a vector of weights $\tau\in \R_{>0}^n$ such that the consumption graph {of $\vec{z}$} is a subgraph of {the MWW graph $G_\tau(\vec{v})$}.
				\item There exists a vector of weights $\tau\in \R_{>0}^n$ such that $\vec{z}$ maximizes the weighted utilitarian welfare {$\sum_{i\in [n]}\tau_i \cdot u_i(\vec{z}_i')$ 
					over all feasible allocations $\vec{z'}$.}
			\end{enumerate}
		\end{lemma*}
		\proof{Proof.}
		We will show the implications $(1)\Rightarrow(2)\Rightarrow(3)\Rightarrow(4)\Rightarrow(1)$.
		
		\begin{itemize}
			\item $(1)\Rightarrow(2):$ Let us show that if there is a simple profitable trading cycle $\C=(i_1,j_1, \ldots, i_{L+1}=i_1)$, then we can find a Pareto-improvement $\vec{z}'$ of the allocation $\vec{z}$. Indeed, transfer $\varepsilon_k$ amount of $j_k$ from $i_k$ to $i_{k+1}$ where the ratio of $\varepsilon_k$ and $\varepsilon_{k+1}$ comes from the condition that all agents $i_{k+1}$, $k=1, \ldots, {L-1}$ are indifferent between $\vec{z}$ and $\vec{z}'$: $$v_{i_{k+1}, j_{k}}\cdot \varepsilon_{k}=v_{i_{k+1}, j_{k+1}}\cdot \varepsilon_{k+1}\,.$$ These conditions define epsilons up to a multiplicative constant which can be selected small enough to guarantee feasibility. Profitability of $\C$ implies that agent $i_1$ is strictly better off: $$\varepsilon_L \cdot v_{i_{1},j_L}- \varepsilon_1 \cdot v_{i_{1},j_1}=\varepsilon_L \cdot v_{i_{1},j_L}(1-\pi(\C))<0,$$ and thus $\vec{z}'$ dominates $\vec{z}$.
			
			\item $(2)\Rightarrow(3):$ Let $i_1$ be an agent with a non-empty bundle $\vec{z}_{i_1}\ne \vec{0}$. Set its weight to $\tau_{i_1}=1$. For other agents $i$ define $\tau_i$ as $\max \; \pi(\P_{i_1, i})$ where the maximum is taken over all paths $\P_{i_1, i}=(i_1,j_1,\ldots,j_{L},i_{L+1}=i)$ connecting $i_1$ and $i$ such that $z_{i_k, j_k}>0$ for all $k=1,\ldots,L$. { The set of such paths is non-empty: for example, it contains a path $(i_1,j_1,i)$ for any chore $j_1$ with $z_{i_1,j_1}>0$.} By statement $(2)$, eliminating cycles in $\P_{i_1,i}$ can only increase $\pi$, and thus the maximum is finite and is attained on an acyclic path. 
			
			Consider the consumption graph $G_{\vec{z}}$ and let $i \in [n]$ and $j \in [m]$ be an agent and a chore that are 
			connected by an edge in $G_{\vec{z}}$. We show that they are also connected in $G_\tau$. For this we must check that $\tau_i \cdot |v_{i,j}|\leq \tau_{i'} \cdot |v_{i',j}|$ for each agent $i'$. Consider an optimal path $\P^*_{i_1, i}$ and extend it to a path $\P_{i_1, i'}$ by adding two extra {nodes} $j$ and $i'$. By definition of $\tau$ we get
			$$\tau_{i'}\geq \pi(\P_{i_1, i'})=\pi(\P^*_{i_1, i})\cdot\frac{|v_{i,j}|}{|v_{i',j}|}=\tau_i\cdot \frac{|v_{i,j}|}{|v_{i',j}|},$$
			which is equivalent to the desired inequality.
			
			\item $(3)\Rightarrow(4):$  Statement $(3)$ ensures that each chore $j$ is given to an agent $i$ with the lowest weighted disutility $\tau_i \cdot |v_{i,j}|$. Therefore, $\vec{z}$ has the maximal weighted welfare $W_\tau$ among all feasible allocations.
			
			\item $(4)\Rightarrow(1):$ If $\vec{z}'$ Pareto dominates $\vec{z}$, then $W_\tau(\vec{z}')>W_{\tau}(\vec{z})$. Thus, the maximizer of $W_\tau$ is undominated, which gives Pareto optimality. \qed
		\end{itemize}
		\endproof

\end{document}